\newtheorem{conjecture}{Conjecture of Guruswami and Zhou}
\newtheorem{claim}{Claim}
\newtheorem{example}{Example}
\newcommand{\CSP}{\mathrm{CSP}}
\newcommand{\Pol}{\mathrm{Pol}}
\newcommand{\sgn}{\mathrm{sgn\ }}
\newcommand{\alg}[1]{\mathbf{#1}}
\newcommand{\cons}[1]{\mathcal{#1}}
\newcommand{\inst}[1]{\mathcal{#1}}
\newcommand{\algor}[1]{{\texttt{#1}}}
\newcommand{\vect}[1]{\mathbf{#1}}
\newcommand{\vc}[1]{\mathbf{#1}}
\newcommand{\roundup}[1]{\left\lceil #1 \right\rceil}
\newcommand{\rounddown}[1]{\left\lfloor #1 \right\rfloor}
\newcommand{\blockround}[1]{h(#1)}
\newcommand{\tries}[1]{t(#1)}
\newcommand{\Opt}[1]{{\rm Opt}(\inst{#1})}
\newcommand{\SDPOpt}[1]{{\rm SDPOpt}(\inst{#1})}
\newcommand{\LPOpt}[1]{{\rm LPOpt}(\inst{#1})}
\newcommand{\Val}[2]{{\rm Val}(#1,\inst{#2})}
\newcommand{\sdpv}[2]{\mathbf{#1}_{#2}} 
\newcommand{\normsq}[1]{\left|\left| #1 \right|\right|^2} 
\newcommand{\norm}[1]{\left|\left| #1 \right|\right|} 
\newcommand{\dotprod}[2]{#1 #2}
\newcommand{\biggap}{u_2}
\newcommand{\smallgap}{u_1}
\newcommand{\Scopes}{\mathcal{S}}
\newcommand{\HORN}{\algor{Horn-$k$-Sat}}
\newcommand{\UG}{\algor{Unique-Games($q$)}}
\newcommand{\SAT}{{\algor{$2$-SAT}}}
\newcommand{\LIN}{\algor{E$3$-LIN($\alg{G}$)}}
\newcommand{\CUT}{\algor{CUT}}
\newcommand{\listcut}{\mathcal{L}^{(7)}}
\newcommand{\probcut}[2]{p^{(7)}(#1,#2)}
\newcommand{\estimcut}[1]{E^{(7)}_{#1}}
\newcommand{\listi}[1]{\mathcal{L}^{(7)}_{#1}}
\newcommand{\wrongcut}{w^{(7)}}
\newcommand{\listuncut}{\mathcal{L}^{(8)}}
\newcommand{\probuncut}[2]{p^{(8)}(#1,#2)}
\newcommand{\estimuncut}[1]{E^{(8)}_{#1}}
\newcommand{\cuti}[1]{z^{(8)}_i}
\newcommand{\wronguncut}{w^{(8)}}
\title{Robustly Solvable Constraint Satisfaction Problems\thanks{Parts of this work appeared in proceedings of STOC'12.}}
\author{
Libor Barto\thanks{Department of Algebra, Faculty of Mathematics and Physics, Charles University in Prague,
       Sokolovsk\'a 83, 18675 Praha 8, Czech Republic,
       {\tt libor.barto@gmail.com}. Research supported by the Grant Agency of the Czech Republic, grant 13-01832S;}
\and       
Marcin Kozik\thanks{Theoretical Computer Science Department,
       Faculty of Mathematics and Computer Science
       Jagiellonian University
       ul. Prof. St. Lojasiewicza 6,
       30-348 Krakow, Poland,
       {\tt Marcin.Kozik@uj.edu.pl}.
       Research partially supported by National Science Center grant no. DEC-2011/01/B/ST6/01006.
}
}
\begin{document}

\maketitle

\begin{abstract}
An algorithm for a constraint satisfaction problem is called robust if it outputs an assignment satisfying at least $(1-g(\varepsilon))$-fraction of the constraints given a $(1-\varepsilon)$-satisfiable instance, where $g(\varepsilon) \rightarrow 0$ as $\varepsilon \rightarrow 0$.
Guruswami and Zhou conjectured a characterization of constraint languages for which the corresponding constraint satisfaction problem admits an efficient robust algorithm. This paper confirms their conjecture. 
\end{abstract}

\begin{keywords}
constraint satisfaction problem, bounded width, approximation, robust satisfiability, universal algebra
\end{keywords}

\begin{AMS}
68Q17, 68W20, 68W25, 68W40
\end{AMS}

\pagestyle{myheadings}
\thispagestyle{plain}
\markboth{L. Barto, M. Kozik}{Robust approximation}

\section{Introduction}

The constraint satisfaction problem (CSP) provides a common framework for many theoretical problems in computer science 
as well as for many applications.
An instance of the CSP consists of variables and constraints imposed on them
and the goal is to find~(or decide whether it exists) an assignment of variables which is ``best'' for given constraints.
In the decision problem for CSP we want to decide if there is an assignment satisfying all the constraints.
In Max-CSP we wish to find an assignment satisfying maximal number of constraints.
In the approximation version of Max-CSP we seek an assignment which is, in some sense, close to the optimal one. 
This paper deals with a special case of approximation: robust solvability of the CSP. 
Given an instance which is almost satisfiable~(say $(1-\varepsilon)$-fraction of the constraint can be satisfied), 
the goal is to efficiently find an almost satisfying assignment~(which satisfies at least $(1-g(\varepsilon))$-fraction of the constraints, 
where the error function $g$ satisfies $\lim_{\varepsilon \rightarrow 0} g(\varepsilon) = 0$). 

Most of the computational problems connected to CSP are hard in general. 
Therefore, when developing algorithms, one usually restricts the set of allowed instances. 
Most often the instances are restricted in two ways: 
one restricts the way in which the variables are constrained~(e.g. the shape of the hypergraph of constrained variables), 
or restricts the allowed constraint relations~(defining {\em constraint language}).
In this paper we use the second approach, i.e. 
all constraint relations must come from a fixed, finite set of relations on a domain.

Robust solvability for a fixed constraint language was first studied in a paper by Zwick~\cite{Z98}. 
The motivation behind this approach was that, in certain practical situations, instances might be close to satisfiable  -- 
for example, a small fraction of constraints might have been corrupted by noise.
An algorithm that is able to satisfy, in such a case, most of the constraints could be useful.

Zwick~\cite{Z98} concentrated on Boolean CSPs. 
He designed a semidefinite programming (SDP) based algorithm which finds $(1-O(\varepsilon^{1/3}))$-satisfying assignment
for $(1-\varepsilon)$-satisfiable instances of {\SAT} and linear programming (LP) based algorithm which finds
$(1 - O(1/\log(1/\varepsilon)))$-satisfying assignment 
for $(1-\varepsilon)$-satisfiable instances of {\HORN} (the number $k$ refers to the maximum numbers of variables in a Horn constraint). 
The quantitative dependence on $\varepsilon$ was improved for {\SAT} to $(1-O(\sqrt{\varepsilon}))$ in \cite{CMM09}. 
For {\CUT}, a special case of {\SAT}, the Goemans-Williamson algorithm~\cite{GW95} also achieves $(1-O(\sqrt{\varepsilon}))$. 
The same dependence was proved more generally for $\UG$~\cite{CMM06}%
~(where $q$ refers to the size of the domain), 
which improved  $(1 - O(\sqrt[5]{\varepsilon} \log^{1/2}(1/\varepsilon)))$ obtained in~\cite{K02}. 
For {\algor{Horn-$2$-Sat}} the exponential loss can be replaced by $(1-3\varepsilon)$~\cite{KSTW00} and even  $(1-2\varepsilon)$~\cite{GZ11}. 
These bounds for {\HORN} ($k \geq 3$), {\algor{Horn-$2$-Sat}}, {\SAT}, and {\UG} are actually essentially optimal~\cite{K02, KKMO07, GZ11} 
assuming Khot's Unique Games Conjecture~\cite{K02}.

On the negative side, if the decision problem for CSP is NP-complete for algebraic reasons~(for precise definition see~\cite{BJK00,BJK05}) 
then, given a satisfiable instance, it is NP-hard to find an assignment satisfying $\alpha$-fraction of the constraints 
for some constant $\alpha < 1$ (see \cite{KSTW00} for the Boolean case and \cite{JKK09} for the general case). 
In particular, these problems cannot admit an efficient robust algorithm unless P$=$NP. 
However, this is not the only obstacle for robust algorithms. 
In \cite{H01} H\aa stad proved that for {\LIN} (linear equations over an Abelian group $\alg{G}$ where each equation contains precisely $3$ variables) 
it is NP-hard to find an assignment satisfying $(1/|G|+\varepsilon)$-fraction of the constraints given an instance which is $(1 - \varepsilon)$-satisfiable. 
Note that the trivial random algorithm achieves $1/|G|$ in expectation.

As observed in~\cite{Z98} the above results characterize robust solvability of all Boolean CSPs, because, 
by Schaefer's theorem~\cite{Sch78}, {\LIN}, {\HORN} and {\SAT} are essentially the only CSPs with tractable decision problem.
What about larger domains? 
A natural property which distinguishes {\HORN}, {\SAT}, and {\UG} from {\LIN} and ``algebraically'' NP-complete CSPs is bounded width~\cite{FV99}. 
Briefly, a CSP has bounded width if the decision problem can be solved by checking local consistency of the instance. 
These problems were characterized independently by the authors \cite{BK09b,BKbw} and Bulatov \cite{BulBW}. 
It was proved that, in some sense, the only obstacle to bounded width is {\LIN} -- 
the same problem which is difficult for robust satisfiability. 
These facts motivated Guruswami and Zhou to conjecture~\cite{GZ11} that the class of bounded width CSPs 
coincide with the class of CSPs admitting a robust satisfiability algorithm. 

Most of the recent developments in the decision version of the CSP are based on the algebraic approach introduced by
Jeavons, Cohen and Gyssens \cite{JCG97} 
and refined by Bulatov, Krokhin and Jeavons \cite{BJK00,BJK05}.
This approach was adjusted to work with robust solvability of CSPs 
in a recent paper by Dalmau and Krokhin~\cite{DK}.  
As a consequence they proved one direction of the Guruswami--Zhou conjecture ---
if a CSP is robustly solvable then it necessarily has bounded width.
They also proved the opposite direction in the special case of width $1$ CSPs, 
and classified the robust solvability with respect to the rate of growth of the error function $f$ in the Boolean case. 
Another recent paper  by Kun, O'Donnell, Tamaki, Yoshida and Zhou \cite{KOTYZ} gives an independent proof for width $1$ CSPs.

This paper confirms the Guruswami and Zhou conjecture in full generality. 
For any bounded width CSP we give a polynomial-time randomized 
algorithm for finding an assignment satisfying  $(1 - O(\log \log (1/\varepsilon)/{\log (1/\varepsilon)}))$-fraction of constraints 
in expectation 
provided there exists an $(1-\varepsilon)$-satisfying assignment~%
(the presented derandomization achieves a worse ratio).
The proof uncovers an interesting connection between the outputs of SDP (and LP) relaxations and Prague strategies 
-- a consistency notion crucial for the bounded width characterization in~\cite{BK09b,BKbw}.

\section{Preliminaries}

\subsection{CSP and robust algorithms}

We start by defining instances of the CSP.

\begin{definition} \label{def:csp}
An \emph{instance of the $\CSP$} is a triple $\inst{I} = (V, D,$  $\cons{C})$ with
$V$ a finite set of \emph{variables}, $D$ a finite \emph{domain}, and
$\cons{C}$ a finite list of \emph{constraints}, where each constraint is a pair $C = (S,R)$ with
$S$ a tuple of variables of length $k$, called the \emph{scope} of $C$, and
$R$ a $k$-ary relation on $D$ (i.e. a subset of $D^k$), called the \emph{constraint relation} of $C$.

An instance $\inst{I}$ is \emph{trivial} if all the constraint relations are empty.

An \emph{assignment} for
$\inst{I}$ 
is a mapping $F: V \rightarrow D$. We say that $F$ \emph{satisfies} a constraint $C = (S,R)$ if
$F(S) \in R$ (where $F$ is applied component-wise). The \emph{value} of $F$, $\Val{F}{I}$, is the fraction of constraints it satisfies.
The \emph{maximal value} of $\inst{I}$ is
$
\Opt{I} = \max_{F: V \rightarrow D} \Val{F}{I}.
$  
\end{definition}

\noindent We study the CSP restricted to instances that use only relations from a fixed, finite set.

\begin{definition}
  A finite set of relations $\Gamma$ on a finite set $D$ is called a \emph{constraint language} on $D$, 
  and $D$ is called the \emph{domain} of $\Gamma$.
  An \emph{instance of $\CSP(\Gamma)$} is an instance of the $\CSP$ such that all the constraint relations are from $\Gamma$.
\end{definition}

The \emph{decision problem} for $\CSP(\Gamma)$ asks whether an input instance $\inst{I}$ of $\CSP(\Gamma)$ has a \emph{solution}, 
i.e. an assignment which satisfies all the constraints. 
The \emph{Max-CSP} for $\CSP(\Gamma)$ asks to find an assignment of maximal value, 
i.e. such that $\Val{F}{I} = \Opt{I}$. 
This problem is computationally intractable for the vast majority of constraint languages motivating the study of \emph{approximation} algorithms.

\begin{definition}
Let $\Gamma$ be a constraint language and let $\alpha, \beta$ be real numbers. 
An algorithm \emph{$(\alpha, \beta)$-approximates} $\CSP(\Gamma)$, if
it outputs an assignment $F$ with $\Val{F}{I} \geq \alpha$ for every instance $\inst{I}$ of $\CSP(\Gamma)$ such that $\Opt{I} \geq \beta$.
\end{definition}

\noindent Our interest is in CSPs which can be well approximated on instances close to satisfiable. 

\begin{definition}
We say that $\CSP(\Gamma)$ is \emph{robustly solvable} if there exists an \emph{error function} $g: [0,1] \rightarrow [0,1]$ such that
$\lim_{\varepsilon \rightarrow 0} g(\varepsilon) = 0$,
and a polynomial-time algorithm which $(1 - g(\varepsilon), 1 - \varepsilon)$-approximates $\CSP(\Gamma)$ for every $\varepsilon \in [0,1]$. 
\end{definition}

\subsection{Bounded width}

Linear equations over an Abelian group are examples of CSPs which do not have bounded width. 
While the decision problem for these CSPs are tractable, 
they are not solvable  by local propagation algorithms (unlike for example {\HORN}, {\SAT}, or {\UG}).
A nice way to formalize solvability by local propagation is using the concept of a $(k,l)$-minimal instance.
To do so we require a notion of projection: the projection of a constraint $C$ to a tuple of variables $x_1,\dotsc, x_m$ is a constraint on $(x_1,\dotsc,x_m)$ with the constraint relation consisting of all 
  $(d_1,\dotsc,d_m)$'s which can be extended to a tuple from the constraint relation of $C$.

\begin{definition}
Let $k \leq l$ be positive integers. An instance $\inst{I} = (V, D, \cons{C})$ of the CSP is
\emph{$(k,l)$-minimal}, if:
\begin{itemize}
\item Every at most $l$-element tuple of distinct variables is within the scope of some
constraint in $\cons{C}$,
\item For every tuple $S$ of at most $k$ distinct%
  \footnote{Some technical problems in definitions can be caused by a variable appearing in a constraint more than once -- 
  they do not add to the complexity of the problem considered so we disregard them here.}
  variables and every pair of constraints $C_1$  and $C_2$ from $\cons{C}$ whose scopes
  contain all variables from $S$, the projections
  to $S$ are the same. This projection is denoted by $P^{\inst{I}}_S$, or $P_S$. 
\end{itemize}
A $(k,k)$-minimal instance is also called \emph{$k$-minimal}.
\end{definition}

For fixed $k,l$ there is an obvious polynomial-time algorithm for transforming an instance $\inst{I}$ of the CSP 
to a $(k,l)$-minimal instance with the same set of solutions: 
First we add new constraints~(initially allowing all the evaluations) to ensure that the first condition is satisfied 
and then we gradually remove those tuples from the constraint relations which falsify the second condition. 
We call the resulting instance \emph{the $(k,l)$-minimal instance corresponding to $\inst{I}$}. 
The definite article is justified since it is easy to see that the obtained instance is independent on the precise order of removals. 
It is clear that if the $(k,l)$-minimal instance corresponding to $\inst{I}$ is trivial 
then the original instance had no solution. We say that $\CSP(\Gamma)$ has width $(k,l)$ if the converse is always true. 

\begin{definition}
Let $k \leq l$ be positive integers and let $\Gamma$ be a constraint language. 
We say that $\CSP(\Gamma)$ has \emph{width $(k,l)$} if every instance $\inst{I}$ of $\CSP(\Gamma)$, 
whose corresponding $(k,l)$-minimal instance is nontrivial, has a solution.

We say that $\CSP(\Gamma)$ (or $\Gamma$) has \emph{bounded width} if it has width $(k,l)$ for some $k,l$. 
\end{definition} 

Different notions of width are often used in the literature, 
but they all lead to equivalent concepts of bounded width. 
We refer to \cite{FV99, LZ07, BKL08} for formal definitions and background.

\subsection{Primitive positive definitions, polymorphisms}
Primitive positive definitions are very useful in the decision version of CSP.
We say that a relation $R$ on $D$ is \emph{primitively positively definable} (or just \emph{pp-definable}) from a constraint language $\Gamma$ if there
exists a (primitive positive) formula
$$
\phi(x_1, \dots, x_k) \equiv \exists y_1, \dots, y_l\ \psi(x_1, \dots, x_k,y_1, \dots, y_l)\enspace,
$$
where $\psi$ is a conjunction of atomic formulas using relations in $\Gamma$ and the (binary) equality relation on $D$ such that
$$
(a_1, \dots, a_k) \in R \mbox{ if and only if } \phi(a_1,\dots, a_k) \mbox{ holds }.
$$

The algebraic approach to the CSP is based on a theorem by Geiger~\cite{G68} and also by Bodarchuk et al.~\cite{BKKR69} which
says that pp-definability is in the sense of Theorem~\ref{thm:galois} controlled by certain operations called polymorphisms. 
We will discuss the impact of particular polymorphisms on complexity of CSP in next sections.

\begin{definition}
An $l$-ary operation $f$ on $D$ (i.e. a mapping $f: D^l \rightarrow D$) is \emph{compatible} with a $k$-ary relation $R$, if 
$$(f(a_1^1, \dots, a_1^l), f(a_2^1, \dots, a_2^l), \dots, f(a_k^1, \dots, a_k^l)) \in R$$
whenever $(a_1^1, \dots, a_k^1)$, $(a_1^2, \dots, a_k^2)$, \dots, $(a_1^l, \dots, a_k^l) \in R$.

We say that $f$ is a \emph{polymorphism} of a constraint language $\Gamma$, if it is compatible with every relation in $\Gamma$.
The set of all polymorphisms of $\Gamma$ will be denoted by $\Pol(\Gamma)$.
\end{definition}

An $n$-ary relation $R$ is \emph{irredundant} if for every pair of different coordinates $1 \leq i < j \leq n$, 
the relation $R$ contains a tuple $(a_1, a_2, \dots, a_n) \in R$ with $a_i \neq a_j$. 
The following theorem ties the notion of pp-definitions and polymorphisms together.

\begin{theorem} \label{thm:galois} \cite{G68,BKKR69}
Let $\Gamma$ be a constraint language on $D$ and let $R$ be a nonempty relation on $D$. Then $\Pol(\Gamma) \subseteq \Pol(R)$ if and only if $R$ is pp-definable from $\Gamma$. Moreover, if $R$ is irredundant and $\Pol(\Gamma) \subseteq \Pol(R)$ then $R$ is pp-definable from $\Gamma$ without equality.
\end{theorem}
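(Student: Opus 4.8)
The plan is to prove the two directions separately; the first is a routine verification and the second carries all the content. For ``if $R$ is pp-definable from $\Gamma$ then $\Pol(\Gamma)\subseteq\Pol(R)$'', I would fix an $l$-ary $f\in\Pol(\Gamma)$, a primitive positive definition $R(x_1,\dots,x_k)\equiv\exists y_1,\dots,y_p\ \bigwedge_i\rho_i(\dots)$ of $R$ from $\Gamma$ (each $\rho_i$ either in $\Gamma$ or the binary equality relation, with arguments among the $x$'s and $y$'s), and tuples $\vect{a}^1,\dots,\vect{a}^l\in R$ together with witnessing values $\vect{b}^1,\dots,\vect{b}^l\in D^p$ for their existentially quantified variables. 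Applying $f$ coordinatewise to the $\vect{a}^t$'s and to the $\vect{b}^t$'s and then inspecting the defining formula one atom at a time, compatibility of $f$ with each $\rho_i$ (and, for an equality atom, the triviality that a function sends an $l$-tuple of pairs with equal components to a pair with equal components) shows that $f(\vect{a}^1,\dots,\vect{a}^l)$ again satisfies the formula, hence belongs to $R$. Equivalently, one checks that the family of relations pp-definable from $\Gamma$ is closed under intersection, Cartesian product, permutation/identification/padding of coordinates and existential projection, and that each of these constructions preserves the property of being compatible with every operation in $\Pol(\Gamma)$.

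For the converse, suppose $\Pol(\Gamma)\subseteq\Pol(R)$, let $k$ be the arity of $R$, and set $m:=|R|$, which is at least $1$ since $R\ne\emptyset$. Enumerate $R=\{\vect{t}_1,\dots,\vect{t}_m\}$ with $\vect{t}_i=(t_{i,1},\dots,t_{i,k})$. The key step is the observation that $\Pol^{(m)}(\Gamma)$, the set of $m$-ary polymorphisms of $\Gamma$, is itself pp-definable from $\Gamma$ once we identify an $m$-ary operation $f$ with the tuple $(f(\vect{v}))_{\vect{v}\in D^m}$ of arity $|D|^m$ indexed by $D^m$: for an $r$-ary $\rho\in\Gamma$, compatibility of $f$ with $\rho$ says exactly that for all $\vect{w}^1,\dots,\vect{w}^m\in\rho$ the tuple $(f(\vect{v}_1),\dots,f(\vect{v}_r))$ lies in $\rho$, where $\vect{v}_l\in D^m$ is the $l$-th column of the $m\times r$ matrix with rows $\vect{w}^1,\dots,\vect{w}^m$. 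Hence $\Pol^{(m)}(\Gamma)$ is precisely the solution set of the \emph{finite} conjunction $\Phi$ of the atomic formulas $\rho(x_{\vect{v}_1},\dots,x_{\vect{v}_r})$, one for each $\rho\in\Gamma$ and each choice of $\vect{w}^1,\dots,\vect{w}^m\in\rho$, in the variables $\{x_{\vect{v}}:\vect{v}\in D^m\}$ --- a conjunction of $\Gamma$-atoms involving neither existential quantifiers nor equality.

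To conclude, put $\vect{u}_j:=(t_{1,j},\dots,t_{m,j})\in D^m$ for $j=1,\dots,k$. A one-line computation gives $h(\vect{t}_1,\dots,\vect{t}_m)=(h(\vect{u}_1),\dots,h(\vect{u}_k))$ for every $m$-ary operation $h$, and consequently
\[
R=\bigl\{\,(h(\vect{u}_1),\dots,h(\vect{u}_k)) : h\in\Pol^{(m)}(\Gamma)\,\bigr\},
\]
the inclusion ``$\subseteq$'' holding because the projection $\pi^{(m)}_i\in\Pol^{(m)}(\Gamma)$ returns $\vect{t}_i$, and ``$\supseteq$'' holding because $\Pol(\Gamma)\subseteq\Pol(R)$ and $\vect{t}_1,\dots,\vect{t}_m\in R$. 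Combining this with the previous paragraph,
\[
R(x_1,\dots,x_k)\ \equiv\ \exists\,(x_{\vect{v}})_{\vect{v}\in D^m}\ \Bigl[\ \Phi\ \wedge\ \bigwedge_{j=1}^{k} x_j=x_{\vect{u}_j}\ \Bigr]
\]
is a primitive positive definition of $R$ from $\Gamma$. For the ``moreover'' clause, note that if $R$ is irredundant then the columns $\vect{u}_1,\dots,\vect{u}_k$ are pairwise distinct; so instead of adjoining the equalities $x_j=x_{\vect{u}_j}$ we may simply rename $x_{\vect{u}_j}$ to $x_j$ throughout $\Phi$ and existentially quantify the remaining variables, obtaining a block of existential quantifiers over a conjunction of atomic $\Gamma$-formulas, i.e.\ a pp-definition of $R$ without equality.

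I expect the only genuinely delicate point to be this ``key step'': keeping the various arities apart ($m=|R|$, the arity $k$ of $R$, the arities $r$ of relations in $\Gamma$, and the new arity $|D|^m$), checking that the conjunction $\Phi$ is finite, and --- for the ``moreover'' clause --- verifying that irredundancy is precisely what stops equality from re-entering the definition via two of the $\vect{u}_j$ coinciding. The choice $m=|R|$ is the trick that makes the projections $\pi^{(m)}_i$ recover \emph{all} of $R$, not just a subrelation.
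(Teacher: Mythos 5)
This theorem is stated in the paper only as a citation to Geiger~\cite{G68} and Bodnarchuk et al.~\cite{BKKR69}; the paper gives no proof of its own. Your proof is correct and is essentially the classical Galois-connection argument from those sources: the easy direction by inspecting atoms; the hard direction by regarding $\Pol^{(m)}(\Gamma)$ (with $m=|R|$) as a relation of arity $|D|^m$ cut out by a finite conjunction $\Phi$ of $\Gamma$-atoms, projecting it onto the coordinates $\vect{u}_1,\dots,\vect{u}_k$, and using that projections recover $R$ while $\Pol(\Gamma)\subseteq\Pol(R)$ keeps the image inside $R$. Your treatment of the ``moreover'' clause is also the right one: irredundancy is exactly the condition that the columns $\vect{u}_1,\dots,\vect{u}_k$ are pairwise distinct, which is precisely what lets you rename $x_{\vect{u}_j}$ to $x_j$ in $\Phi$ rather than introducing the equalities $x_j=x_{\vect{u}_j}$, yielding an equality-free pp-definition. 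I see no gaps.
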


\section{The conjecture and known reductions}
The conjecture of Guruswami and Zhou~\cite{GZ11} states
\begin{conjecture}
  Let $\Gamma$ be a constraint language. The following are equivalent:
  \begin{itemize}
    \item $\CSP(\Gamma)$ has bounded width;
    \item $\CSP(\Gamma)$ is robustly solvable.
  \end{itemize}
\end{conjecture}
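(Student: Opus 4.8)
\medskip
\noindent\textbf{Overview of the proof.} We confirm this conjecture. The implication ``robustly solvable $\Rightarrow$ bounded width'' is due to Dalmau and Krokhin~\cite{DK}, so the whole task is its converse: every bounded width $\CSP(\Gamma)$ admits an efficient robust algorithm. The plan is to build such an algorithm around the canonical semidefinite programming relaxation (solvable in polynomial time to sufficient precision; for width-$1$ languages a basic linear programming relaxation already suffices) and, above all, to establish a tight link between a near-optimal solution of this relaxation and the combinatorial consistency notion --- a \emph{Prague strategy} --- that underlies the bounded width theorem of~\cite{BK09b,BKbw}.

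The first step is a reduction to a convenient form. By Theorem~\ref{thm:galois} together with the standard core and gadget constructions one may assume that $\Gamma$ is idempotent; then, by the algebraic characterization of bounded width~\cite{BK09b,BKbw,BulBW}, $\Gamma$ has weak near-unanimity polymorphisms of all arities. These reductions change the best achievable fraction of satisfied constraints by at most a multiplicative constant, hence preserve robust solvability. Given an instance $\inst{I}$ of $\CSP(\Gamma)$ with $\Opt{I}\ge 1-\varepsilon$, one then passes to the corresponding $(2,3)$-minimal instance --- so that all binary projections $P_{u,v}$, on which the notion of a \emph{step} rests, are available --- while keeping track of the small loss incurred.

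Next, solve the canonical SDP relaxation of $\inst{I}$. It returns, for each variable $v$ and value $a\in D$, a vector $\sdpv{x}{v,a}$ such that the numbers $\normsq{\sdpv{x}{v,a}}$ form a probability distribution $\lambda_v$ on $D$, the nonnegative inner products $\langle \sdpv{x}{u,a},\sdpv{x}{v,b}\rangle$ define mutually consistent local distributions on the scopes of the constraints, and the objective value satisfies $\SDPOpt{I}\ge\Opt{I}\ge 1-\varepsilon$. Call a constraint \emph{bad} if its SDP contribution falls below $1-\eta$ for a threshold $\eta$; by Markov's inequality the bad constraints are at most an $\varepsilon/\eta$ fraction, and on the good constraints the local distribution is supported on (nearly) satisfying tuples. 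From the SDP solution one extracts a sub-instance $\inst{I}'$ by deleting the bad constraints (replacing them by trivially satisfiable ones), keeping in each $P_v$ only the values $a$ with $\lambda_v(a)$ above a threshold, and restricting each remaining binary projection $P_{u,v}$ to the pairs of sufficiently large joint weight. The central claim is that, after re-establishing minimality, $\inst{I}'$ is a nontrivial Prague strategy. Granting it, the algorithmic form of the bounded width theorem~\cite{BK09b,BKbw} produces in polynomial time a solution of $\inst{I}'$, and this assignment satisfies every good constraint of $\inst{I}$ --- hence a $1-O(\varepsilon/\eta)$ fraction of all constraints. Finally, instead of fixing the three thresholds as constants one chooses them at random from a geometric ladder of scales; balancing the number of scales against the per-scale loss converts the crude $O(\varepsilon/\eta)$ bound into the claimed $g(\varepsilon)=O(\log\log(1/\varepsilon)/\log(1/\varepsilon))$ in expectation (examining all scales deterministically yields a weaker, but still valid, error function).

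The main obstacle is the central claim --- verifying that the thresholded sub-instance $\inst{I}'$ satisfies the Prague axioms. The SDP provides only \emph{local}, real-valued guarantees (positive semidefiniteness of the Gram matrix together with agreement of the pairwise distributions on each constraint scope), whereas a Prague strategy is a \emph{global} object whose axioms govern how a set of values propagates along patterns of steps of unbounded length. The argument must show that following such a pattern, though it may spread probability mass, can neither destroy the ``heavy'' part of a distribution nor manufacture spurious heavy values; this calls for a potential argument --- monitoring an $\ell_2$-mass or entropy-like quantity along patterns --- that exploits the fact that every step is mediated by a genuine joint distribution coming from the positive-semidefinite solution. A secondary, purely quantitative difficulty is that the three families of thresholds (for bad constraints, for values, and for pairs) interact, so controlling the accumulated loss precisely enough for the $\log\log/\log$ bound demands careful bookkeeping across the scale ladder.
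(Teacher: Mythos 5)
Your high-level plan is genuinely aligned with the paper: confirm the downward direction, solve the basic SDP relaxation, threshold the SDP weights to extract a sub-instance that satisfies a Prague-type consistency notion, solve that sub-instance using the bounded-width machinery, and use random thresholds across a geometric ladder of scales to get the $O(\log\log(1/\varepsilon)/\log(1/\varepsilon))$ error. However, there are three substantive gaps, one of which is fatal as stated.

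First, and most seriously: the constraint relations $P_{x,y}$ you build from SDP thresholds (pairs whose weight $\langle\sdpv{x}{u,a},\sdpv{x}{v,b}\rangle$ exceeds the cutoff) are arbitrary subsets of $D^2$. They are not relations of $\Gamma$, nor need they be preserved by $\Pol(\Gamma)$, so the constraint language of your thresholded sub-instance $\inst{I}'$ has no reason to have bounded width, and Theorem~\ref{thm:bw} does not apply to it. The paper bridges this by Proposition~\ref{prop:closure}: it closes each $P_{x,y}$ under $\Pol(\Gamma)$ and proves that this algebraic closure of a weak Prague instance is again a weak Prague instance. Without that lemma (which is itself a nontrivial argument using the $[A]_p$ construction of Lemma~\ref{lem:cl}), the plan ``build a Prague-type instance, then solve it by bounded width'' simply does not connect.

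Second, you claim the thresholded instance is a nontrivial \emph{Prague strategy}. That is too strong. The paper observes (Section~\ref{sec:Prague}) that the \emph{unthresholded} SDP instance, with $P_{x,y}$ taken to be all positive-weight pairs, is a Prague strategy --- but once you delete pairs by thresholds, you only retain the weaker conditions (P1)--(P3) of a \emph{weak Prague instance}, and the argument relies on the strictly stronger Theorem~\ref{thm:bw} from~\cite{BKbw} that nontrivial weak Prague instances over bounded-width languages have solutions. You should not expect, nor do you need, the full Prague strategy property to survive thresholding.

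Third, your ``potential argument monitoring an $\ell_2$-mass along patterns'' is the right idea for (P2) (the paper's length/layer argument in Claim~\ref{cl:walk} and Step~5), but it does not give you (P3). For (P3) the paper needs a genuinely different mechanism: after partitioning by layer, it cuts by random hyperplanes \`a la Goemans--Williamson (Steps~6--8) and removes constraints where almost-equal vectors are separated by a hyperplane. This ensures that distinct subsets $A \neq B$ with $\sdpv{x}{A}$, $\sdpv{x}{B}$ in the same layer cannot be confused, which is exactly what (P3) forbids. A pure monotone-potential argument cannot distinguish $A$ from $A+p_1$ when both have the same norm.

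A smaller structural point: the paper does not pass to a $(2,3)$-minimal version of the original instance. Instead it first reduces $\Gamma$ to a language with only unary and binary relations (Proposition~\ref{prop:bin}) via a gadget on $D^l$, and then works directly with the 1-minimal instance defined by the SDP weights. Indeed, the paper explicitly shows the SDP-induced instance need not be $(2,3)$-minimal. The binary-language reduction is what makes the whole weak-Prague framework (which lives on pairs of variables) available without loss.

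So: the skeleton matches, the threshold and random-scale ideas are recognizable, but your proposal would stall at the step where you try to solve $\inst{I}'$, because you have no bounded-width language to appeal to until you close under polymorphisms, and it is (P3), established via random hyperplanes, not an $\ell_2$ potential, that is the bottleneck in proving the consistency condition.
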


The upward implication in the conjecture was proved by Dalmau and Krokhin~\cite{DK}~(assuming P$\neq$NP) 
by combining the characterization of problems of bounded width~\cite{BK09b, BKbw, BulBW} with a result of H\r astad~\cite{H01}.
Their proof uses an adjustment to the algebraic approach~(developed by its authors) which is usually used for the decision version of CSP.
This paper proves the downward direction of the conjecture. 

\subsection{Primitive positive definitions}

An important observation for decision CSPs is that we do not increase the complexity (modulo log-space reductions) 
by adding a pp-definable relation to the constraint language~\cite{JCG97}.
More importantly, from the point of view of this article,
adding pp-definable relations into the constraint language does not change the property of having bounded width~\cite{LZ06,LZ07}.

A similar fact was proved in for robust solvability~\cite{DK}, 
under additional assumption that the pp-definition does not involve the equality relation. 
To state the result concisely we introduce the notation $\CSP(\Gamma) \leq_{RA} \CSP(\Gamma')$ as 
a shorthand for: for any error function $f$ with $\lim_{\varepsilon \rightarrow 0} f(\varepsilon) = 0$, if some polynomial-time algorithm
$(1-f(\varepsilon), 1-\varepsilon)$-approximates $\CSP(\Gamma')$ for every $\varepsilon \geq 0$ then there exists a polynomial-time algorithm
that $(1-O(f(\varepsilon)),1-\varepsilon)$-approximates $\CSP(\Gamma)$ for every $\varepsilon \geq 0$.

\begin{theorem}[\cite{DK}] \label{thm:robust_pp}
Let $\Gamma$ be a constraint language on $D$ and let $R$ be a relation on $D$. If $R$ is pp-definable from $\Gamma$ without equality, then
$\CSP(\Gamma \cup \{R\}) \leq_{RA} \CSP(\Gamma)$.
\end{theorem}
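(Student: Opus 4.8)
The plan is a standard gadget reduction. Fix a primitive positive definition of $R$ from $\Gamma$ without equality,
$$R(x_1,\dots,x_k)\ \equiv\ \exists y_1,\dots,y_l\ \bigwedge_{i=1}^{m} R_i(\vect{z}_i)\enspace,$$
where each $R_i$ belongs to $\Gamma$ and each $\vect{z}_i$ is a tuple of variables from $\{x_1,\dots,x_k,y_1,\dots,y_l\}$. Because the definition is equality-free, every atom $R_i(\vect{z}_i)$ is literally a constraint of $\CSP(\Gamma)$. We may assume $m\geq 1$, since otherwise $R=D^k$ and $R$-constraints impose no restriction. Suppose $\mathcal{A}'$ is a polynomial-time algorithm that $(1-f(\varepsilon),1-\varepsilon)$-approximates $\CSP(\Gamma)$; I would build from it a polynomial-time algorithm that $(1-O(f(\varepsilon)),1-\varepsilon)$-approximates $\CSP(\Gamma\cup\{R\})$.

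\emph{The reduction.} Given an instance $\inst{I}=(V,D,\cons{C})$ of $\CSP(\Gamma\cup\{R\})$, I would form an instance $\inst{J}=(V',D,\cons{C}')$ of $\CSP(\Gamma)$ by processing the constraints of $\cons{C}$ one by one. A constraint whose relation lies in $\Gamma$ is put into $\cons{C}'$ in $m$ identical copies. A constraint $C=((s_1,\dots,s_k),R)$ is replaced by a \emph{block}: add $l$ fresh variables $y^C_1,\dots,y^C_l$ to $V'$, and for $i=1,\dots,m$ add to $\cons{C}'$ the constraint obtained from $R_i(\vect{z}_i)$ by substituting $s_j$ for $x_j$ and $y^C_j$ for $y_j$. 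Then $\cons{C}'$ has exactly $m\,|\cons{C}|$ constraints, partitioned into one block of size $m$ per constraint of $\cons{C}$, with blocks of distinct $R$-constraints using pairwise disjoint sets of fresh variables; the construction is plainly polynomial time. The new algorithm runs $\mathcal{A}'$ on $\inst{J}$ and outputs the restriction to $V$ of the returned assignment.

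\emph{Correctness.} For completeness, suppose $\Opt{I}\geq 1-\varepsilon$, fix $F$ with $\Val{F}{I}\geq 1-\varepsilon$, and extend $F$ to $V'$: for each $R$-constraint $C$ satisfied by $F$ the tuple $(F(s_1),\dots,F(s_k))$ lies in $R$, so there are values $w_1,\dots,w_l$ making $\bigwedge_i R_i(\vect{z}_i)$ true, and we set $F(y^C_j)=w_j$ (fresh variables of the other $R$-constraints get arbitrary values). Every constraint of $\cons{C}$ satisfied by $F$ then has its whole block satisfied, so at most $m\varepsilon|\cons{C}|=\varepsilon|\cons{C}'|$ constraints of $\inst{J}$ are violated, i.e.\ $\Opt{J}\geq 1-\varepsilon$. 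For soundness, let $G$ be the output of $\mathcal{A}'$ on $\inst{J}$, so $\Val{G}{J}\geq 1-f(\varepsilon)$, and put $F=G|_V$. If $F$ fails a $\Gamma$-constraint of $\cons{C}$, its whole block is violated by $G$; if $F$ fails an $R$-constraint $C$, then $(F(s_1),\dots,F(s_k))\notin R$, so no values for $y^C_1,\dots,y^C_l$ satisfy all $m$ atoms of its block and $G$ violates at least one of them. Since the blocks partition $\cons{C}'$, the number of constraints of $\cons{C}$ failed by $F$ is at most the number of constraints of $\cons{C}'$ failed by $G$, hence at most $f(\varepsilon)|\cons{C}'|=m f(\varepsilon)|\cons{C}|$. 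Dividing by $|\cons{C}|$ gives $\Val{F}{I}\geq 1-m f(\varepsilon)=1-O(f(\varepsilon))$, where $m$ depends only on the fixed definition of $R$. Thus $\CSP(\Gamma\cup\{R\})\leq_{RA}\CSP(\Gamma)$.

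\emph{Where the care goes.} I do not expect a genuine obstacle — the argument is essentially bookkeeping — but two points are not completely free. First, to keep the loss a true constant factor one wants every block to have the same size $m$: if the $\Gamma$-constraints were not duplicated, completeness would only give $\Opt{J}\geq 1-m\varepsilon$, and converting that back into a $(1-O(f(\varepsilon)))$-guarantee would require unwarranted assumptions on $f$. Second, the equality-free hypothesis is used precisely here: an equality atom between two of the $x_i$ would force the gadget to identify scope variables of $\inst{I}$ belonging to different $R$-constraints, and an almost-optimal assignment of $\inst{I}$ need not respect such identifications, so $\Opt{J}$ could drop well below $\Opt{I}$ and completeness would fail. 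By Theorem~\ref{thm:galois} this does not arise when $R$ is irredundant, which is the case relevant to the rest of the paper.
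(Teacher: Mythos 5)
The paper states Theorem~\ref{thm:robust_pp} as a citation to~\cite{DK} and gives no proof of it, so there is nothing internal to compare you against; your argument has to be judged on its own terms. On those terms it is correct, and it is the expected gadget reduction. You have the two genuinely load-bearing points right: padding each $\Gamma$-constraint to a block of size $m$ so that $|\cons{C}'| = m|\cons{C}|$ exactly, which is what turns the loss into a clean multiplicative factor $m$ rather than routing through $f(m\varepsilon)$ (which for a general error function $f$ is \emph{not} $O(f(\varepsilon))$ — e.g.\ $f(\varepsilon) = e^{-1/\varepsilon}$); and the use of the equality-free hypothesis, which is precisely what makes every substituted atom a bona fide $\Gamma$-constraint (since $\Gamma$ need not contain the equality relation, and the alternative fix of identifying variables breaks completeness when an almost-optimal assignment for $\inst{I}$ gives distinct values to variables the gadget would glue together). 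The soundness step — a failed $R$-constraint forces at least one failed atom in its block because $G$ fixes values for the fresh $y^C_j$, and a failed $\Gamma$-constraint kills all $m$ of its identical copies, so failed constraints of $\cons{C}$ inject into failed constraints of $\cons{C}'$ via disjoint blocks — is exactly right. The only thing you wave at is the $m=0$ case; there $R = D^k$ and the ``blocks of size $m$'' bookkeeping degenerates, so simply dropping $R$-constraints changes the denominator and the ratio argument no longer goes through for an arbitrary $f$. It is a fully degenerate corner (and can be patched, e.g.\ by padding with a tautological atom when $\Gamma$ contains a nonempty relation), so it is not a real gap, but a one-line remark acknowledging it would tighten the write-up.
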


As the relations pp-definable in $\Gamma$ are fully determined by polymorphims of $\Gamma$,
the complexity of the decision problem as well as the property of having bounded width for $\CSP(\Gamma)$
depends only on the algebraic structure of $\Pol(\Gamma)$.
Robust solvability%
~(including the order of the error function) 
is also ``to a large extent'' controlled by $\Pol(\Gamma)$. 
Unfortunately, we have to say ``to a large extent'' because of the disturbing fact 
that Theorem~\ref{thm:robust_pp} allows only pp-definitions without equality. The general case with equality is open.

\subsection{Cores and singleton expansions} \label{sec:core}

Another important observation for both decision CSPs and robust solvability of CSPs is that we can restrict our attention to cores.

\begin{definition}
We say that a constraint language is a \emph{core}, if all its unary polymorphisms are bijections.
\end{definition}

If $\Gamma$ is a constraint language on $D$ which is not a core we can define another constraint language 
$\Gamma'$ on a smaller domain such that $\CSP(\Gamma)$ and $\CSP(\Gamma')$ behave identically 
with respect to decision, approximation and have the same width. 
Namely, if $e$ is a non-surjective unary polymorphism
of $\Gamma$ then we define $\Gamma' = \{e(R): R \in \Gamma\}$, 
where $e(R) = \{(e(a_1), \dots, e(a_n)): (a_1, \dots, a_n) \in R\}$ (see \cite{DK} for more details).

A nontrivial fact is that we can add singleton unary relations to any core language $\Gamma$ 
without significantly changing robust solvability, complexity of the decision problem or property of having bounded width for $\CSP(\Gamma)$.

\begin{theorem}[\cite{BJK05, DK}] \label{thm:adding_consts} 
Let $\Gamma$ be a core constraint language
and let $\Gamma' = \Gamma \cup \{\{a\}: a \in D\}$, then:
\begin{itemize}
  \item $\CSP(\Gamma)$ and $\CSP(\Gamma')$ are log-space equivalent,
  \item $\CSP(\Gamma') \leq_{RA} \CSP(\Gamma) \leq_{RA} \CSP(\Gamma')$, and
  \item  $\CSP(\Gamma)$ has bounded width if and only if $\CSP(\Gamma')$ does.
\end{itemize}
\end{theorem}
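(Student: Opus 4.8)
The plan is to prove each of the three bullets separately, but all three follow from one core construction: given an instance $\inst{I}'$ of $\CSP(\Gamma')$, produce an instance $\inst{I}$ of $\CSP(\Gamma)$ that ``simulates'' it, and vice versa. The log-space equivalence of the decision problems is the classical theorem of Bulatov, Jeavons and Krokhin~\cite{BJK05}, and the bounded-width equivalence is known~\cite{LZ07} (indeed the hard direction is essentially the main result cited as \cite{BK09b,BKbw,BulBW}); so the only genuinely new content is the middle bullet on robust solvability, which is from~\cite{DK}. I would therefore present the first and third bullets as citations and concentrate on $\CSP(\Gamma') \leq_{RA} \CSP(\Gamma) \leq_{RA} \CSP(\Gamma')$.

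First I would dispense with the easy inequality $\CSP(\Gamma) \leq_{RA} \CSP(\Gamma')$: since $\Gamma \subseteq \Gamma'$, every instance of $\CSP(\Gamma)$ is already an instance of $\CSP(\Gamma')$, so any $(1-f(\varepsilon),1-\varepsilon)$-approximation algorithm for the latter works verbatim for the former. For the reverse direction $\CSP(\Gamma') \leq_{RA} \CSP(\Gamma)$, the key point is that each singleton relation $\{a\}$ is pp-definable from $\Gamma$ \emph{without equality} — this is exactly the content of the core hypothesis combined with Theorem~\ref{thm:galois}. Concretely, because $\Gamma$ is a core, the only unary polymorphisms are bijections, hence the diagonal automorphism argument shows that the unary relation $\{a\}$ is invariant under every polymorphism of $\Gamma$; since $\{a\}$ is trivially irredundant (it has a single coordinate), Theorem~\ref{thm:galois} gives a pp-definition of $\{a\}$ from $\Gamma$ without equality. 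Then $\Gamma' = \Gamma \cup \{\{a\} : a \in D\}$ is obtained from $\Gamma$ by adding finitely many relations each pp-definable without equality, and iterating Theorem~\ref{thm:robust_pp} yields $\CSP(\Gamma') \leq_{RA} \CSP(\Gamma)$, with only a constant blow-up in the error function (the number of iterations is $|D|$, a constant).

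The step I expect to be the main obstacle is justifying that $\{a\}$ is preserved by $\Pol(\Gamma)$ for \emph{every} $a \in D$, not just that \emph{some} singleton is preserved. This needs the standard but slightly delicate argument that in a core the set of elements ``named'' by constants is all of $D$: one uses the fact that $\Pol(\Gamma)$ contains, besides bijective unary operations, operations of higher arity whose behaviour on a generating substructure pins down each element; alternatively one invokes the characterization that a finite core has, for each $a$, a unary term-like construction realizing $\{a\}$ as a pp-definable set (see~\cite{BJK05}). The rest — bookkeeping the constant factors through the chain of $\leq_{RA}$ reductions, and checking that ``without equality'' is preserved under the iteration — is routine, since composing two reductions each losing a constant factor in $f$ still loses only a constant factor. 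I would also remark that the passage to cores described before the theorem is itself error-preserving (mapping through the retraction $e$ changes $\Val{F}{I}$ by nothing on the nontrivial constraints), so no generality is lost by assuming $\Gamma$ is a core in the first place.
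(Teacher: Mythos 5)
The paper does not prove Theorem~\ref{thm:adding_consts}; it is stated purely as a citation to \cite{BJK05,DK}, so there is no ``paper's own proof'' to compare against. Nonetheless your sketch has a genuine and fatal gap in the only part you actually argue, namely $\CSP(\Gamma')\leq_{RA}\CSP(\Gamma)$.

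Your central claim is that, since $\Gamma$ is a core, each singleton $\{a\}$ is preserved by every polymorphism of $\Gamma$ and is therefore pp-definable from $\Gamma$ (without equality) via Theorem~\ref{thm:galois}. This is false. Being a core only forces every \emph{unary} polymorphism to be a bijection; it does not force it to be the identity, and a nonidentity bijection does not preserve $\{a\}$. Concretely, take $D=\{0,1,2\}$ and $\Gamma=\{\neq_D\}$ (the binary disequality relation). This is a core, yet every permutation of $D$ is a unary polymorphism, so no singleton $\{a\}$ is preserved by $\Pol(\Gamma)$, hence by the easy direction of Theorem~\ref{thm:galois} no singleton is pp-definable from $\Gamma$. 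The ``diagonal automorphism argument'' you invoke does not exist, and there is no result in \cite{BJK05} asserting that singletons are pp-definable in a core --- on the contrary, the nontrivial part of the reduction in \cite{BJK05} exists precisely because singletons are \emph{not} pp-definable in general.

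The correct construction (following \cite{BJK05} and its robust adaptation in \cite{DK}) works globally rather than one singleton at a time. Enumerate $D=\{a_1,\dots,a_d\}$ and consider the $d$-ary relation
$\sigma_D=\{(h(a_1),\dots,h(a_d)) : h \text{ a unary polymorphism of }\Gamma\}$.
Since $\Gamma$ is a core, every such $h$ is an automorphism of $\Gamma$, and automorphisms commute with all polymorphisms; hence $\sigma_D$ \emph{is} preserved by $\Pol(\Gamma)$, and (being irredundant for $d\geq 2$) it is pp-definable from $\Gamma$ without equality. Given an instance of $\CSP(\Gamma')$, one introduces fresh variables $y_1,\dots,y_d$, imposes the constraint $((y_1,\dots,y_d),\sigma_D)$ (suitably many copies, so that it carries negligible weight yet is effectively forced), and replaces every constraint $((x),\{a_i\})$ by identifying the variable $x$ with $y_i$. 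Any good assignment for the resulting instance, after the $\sigma_D$-constraint is also satisfied, sends $(y_1,\dots,y_d)$ to $(h(a_1),\dots,h(a_d))$ for some automorphism $h$; composing the assignment with $h^{-1}$ yields a good assignment for the original instance. Your other remarks (the trivial inclusion $\CSP(\Gamma)\leq_{RA}\CSP(\Gamma')$, citing \cite{BJK05} and \cite{LZ07} for the first and third bullets, and the constant-factor bookkeeping) are fine, but the claim you flagged as the ``main obstacle'' is not merely delicate --- it is wrong as stated, and the fix requires replacing it with the $\sigma_D$-based reduction.
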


We refer to $\Gamma'$ from this theorem as the \emph{singleton expansion} of $\Gamma$. 
Theorem~\ref{thm:adding_consts} implies that  
the characterization conjectured by Guruswami and Zhou needs to be verified for singleton expansions of constraint languages only.
This restricts the family of constraint languages one needs to consider and we use this fact repeatedly throughout the paper.

Another consequence of Theorem \ref{thm:adding_consts}  is that whenever $\CSP(\Gamma)$ is tractable 
then there is a polynomial-time algorithm for finding a solution. 

\begin{theorem} \label{thm:search_problem} \cite{BJK05}
Let $\Gamma$ be a constraint language such that the decision problem for $\CSP(\Gamma)$ is solvable in a polynomial time. 
Then there is a polynomial-time algorithm for finding a solution of $\CSP(\Gamma)$
\end{theorem}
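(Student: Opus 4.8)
The plan is the standard search-to-decision self-reduction, with the one extra ingredient that ``pinning'' a variable to a value must be expressible inside a fixed constraint language; this forces a preliminary reduction to the singleton expansion of a core.

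First I would dispose of the non-core case. If $\Gamma$ is not a core, pick a unary polymorphism $e$ of $\Gamma$ with image of minimum size; then $e(D)$ carries the core language $\Gamma' = \{e(R) : R \in \Gamma\}$ as described in Section~\ref{sec:core}. Given an instance $\inst{I}$ of $\CSP(\Gamma)$, replace every constraint relation $R$ by $e(R)$ to obtain an instance $\inst{I}'$ of $\CSP(\Gamma')$. Since $e$ is a unary polymorphism we have $e(R) \subseteq R$, so any solution of $\inst{I}'$ is already a solution of $\inst{I}$; conversely, if $F$ solves $\inst{I}$ then $e \circ F$ solves $\inst{I}'$. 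Hence it suffices to solve the search problem for the core language $\Gamma'$, and from now on I assume $\Gamma$ is a core.

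Next, let $\Gamma^{+} = \Gamma \cup \{\{a\} : a \in D\}$ be the singleton expansion of $\Gamma$. By Theorem~\ref{thm:adding_consts}, $\CSP(\Gamma^{+})$ is log-space equivalent to $\CSP(\Gamma)$, so the hypothesis also yields a polynomial-time decision algorithm for $\CSP(\Gamma^{+})$. Now run the usual self-reduction on a satisfiable instance $\inst{I} = (V, D, \cons{C})$ of $\CSP(\Gamma)$: enumerate the variables $V = \{v_1, \dots, v_n\}$ and process them one at a time, maintaining at each stage an instance that is satisfiable together with an ever-growing list of singleton constraints. When processing $v_i$, try the values $a \in D$ one by one: each time add the constraint $((v_i), \{a\})$ to the current instance --- which keeps it an instance of $\CSP(\Gamma^{+})$ --- and ask the decision algorithm whether it remains satisfiable, keeping the first value that works. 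Because the instance obtained before processing $v_i$ is satisfiable, at least one value succeeds, so the loop never fails. After all $n$ variables have been fixed, the unique assignment compatible with the accumulated singleton constraints is forced to satisfy every original constraint, hence is a solution of $\inst{I}$. The algorithm makes at most $n\,|D|$ calls to the decision procedure plus polynomial bookkeeping, so it runs in polynomial time.

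The only real obstacle is the one flagged in the discussion preceding the statement: the naive self-reduction needs to pin a variable to a constant, but a language with tractable decision can become NP-hard once arbitrary constants are added if it is not a core. Reducing to a core and then invoking Theorem~\ref{thm:adding_consts} is precisely what makes the pinning step legitimate; everything else is routine.
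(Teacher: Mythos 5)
Your proof is correct and follows exactly the approach the paper sketches ("uses the singleton unary relations in the singleton expansion of the core of $\Gamma$ to recursively set values for variables and verify if such a partial evaluation extends to a solution"): pass to the core via a minimal-image unary polymorphism, invoke Theorem~\ref{thm:adding_consts} to get a polynomial decision procedure for the singleton expansion, and then run the standard pin-one-variable-at-a-time self-reduction. All the details you supply (the inclusion $e(R) \subseteq R$, the two-way correspondence of solutions, the $n|D|$ oracle-call bound) are accurate.
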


A proof of this theorem uses the singleton unary relations in the singleton expansion of the core of $\Gamma$ 
to recursively set values for variables and verify if such a partial evaluation extends to a solution.

\subsection{Interpretations} \label{sec:interpret}

Primitive positive definitions can be used to compare the constraint languages on the same domain. 
A stronger tool which also enables us to compare CSPs on different domains are pp-interpretations. 
\begin{definition}
  Let $\Gamma$ be a constraint languages and:
  \begin{itemize}
    \item $U$ be a pp-definable relation in $\Gamma$,
    \item $\Theta$ be an equivalence relation on $U$ pp-definable\footnote{Throughout the definition we identify $U^2$ with an appropriate power of the domain of $\Gamma$.} in $\Gamma$,
    \item $S_i$ be relations on $U$ pp-definable\footnote{Similarly we identify powers of $U$ with~(usually higher) powers of the domain of $\Gamma$.} in $\Gamma$.
  \end{itemize}
  The language $\Gamma'=\{S_i/\Theta =\{(u_1/\Theta,\dots,u_{n_i}/\Theta) : (u_1,\dots,u_{n_i})\in S_i\}\}_i$
  on the domain $U/\Theta$~(and every language isomorphic to it) is \emph{pp-interpretable} in $\Gamma$.
\end{definition}

It was proved in \cite{BJK05, LT09} (using a slightly different language) that if $\Gamma'$ is pp-interpretable in $\Gamma$ 
then the decision problem for $\CSP(\Gamma')$ is log-space reducible to the decision problem for $\CSP(\Gamma)$;
moreover if $\CSP(\Gamma)$ has bounded width then so does $\CSP(\Gamma')$~\cite{LZ06,LZ07}.

A similar theorem, in a more restrictive setting, was proved for robust solvability \cite{DK}. 
In this setting the relation $U$ needs to be unary, and the pp-definitions of $\Theta$ and $S_i$'s cannot use equality.

\subsection{The hardness result}

A proof of the hardness part of the characterization~\cite{DK} is based on a theorem by H\r astad~\cite{H01},
in which he establishes hardness for particular CSPs connected to Abelian groups.

For a finite Abelian group $\alg{G}=(G,+)$ let $\Gamma(\alg{G})$ denotes the constraint language on the domain $D = G$ 
consisting of all relations encoding linear equations over $\alg{G}$ with $3$ variables, 
that is, relations of the form $\{(x,y,z) \in G^3: ax + by + cz = d\}$ for some $d \in G, a,b,c \in \mathbb{Z}$. 
The corresponding $\CSP(\Gamma(\alg{G}))$ is denoted by \LIN.

\begin{theorem}[\cite{H01}] \label{thm:hastad}
If $\alg{G}$ is an Abelian group with $n > 1$ elements then for every $\varepsilon > 0$ there is no polynomial-time algorithm that
$( 1/n + \varepsilon,1-\varepsilon)$-approximates $\CSP(\Gamma(\alg{G}))$ unless \em{P$=$NP}.
\end{theorem}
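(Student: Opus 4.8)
The plan is to obtain Theorem~\ref{thm:hastad} as an instance of H\r astad's PCP-based paradigm: a gap-preserving reduction from a hard $3$-SAT instance to $\CSP(\Gamma(\alg{G}))$ through the Long Code and harmonic analysis over the finite Abelian group $\alg{G}$. Write $n=|G|$ and let $\widehat{\alg{G}}$ denote the character group. First I would fix the source of hardness: by the PCP theorem together with Raz's parallel repetition theorem, for every $\delta>0$ there is a polynomial-time reduction that, from a $3$-SAT formula, produces a \emph{Label Cover} instance $\mathcal{L}=(X,Y,E;[L],[R];\{\pi_e\colon[L]\to[R]\}_{e\in E})$ with the projection property and with $L,R$ bounded by a constant depending only on $\delta$, such that a satisfiable formula yields an $\mathcal{L}$ with a labeling satisfying every edge, while an unsatisfiable one yields an $\mathcal{L}$ in which no labeling satisfies more than a $\delta$-fraction of the edges. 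The target $\CSP(\Gamma(\alg{G}))$ instance uses, for each $x\in X$, a block of variables $\{\,A_x[f]:f\in G^{[L]}\,\}$ and, for each $y\in Y$, a block $\{\,B_y[g]:g\in G^{[R]}\,\}$, the intended values being the Long Codes $A_x[f]=f(\ell(x))$, $B_y[g]=g(\ell(y))$ of a labeling $\ell$. Each block is \emph{folded} over the subgroup of constant functions, i.e.\ the variables $A_x[f+c]$ and $A_x[f]+c$ are identified for every $c\in G$ (and likewise for $B_y$); this forces, in the soundness analysis, every nonzero Fourier level of $A_x$ to sit on a nonempty set of coordinates.

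Second, for each edge $e=(x,y)$ I would sample $f\in G^{[L]}$, $g\in G^{[R]}$ uniformly and a noise vector $\mu\in G^{[L]}$ whose coordinates are independently $0$ with probability $1-\varepsilon$ and uniform over $G$ otherwise, and add (with appropriate multiplicity) the $3$-variable constraint
\[
A_x[f] \;-\; A_x\bigl[f-(g\circ\pi_e)-\mu\bigr] \;-\; B_y[g] \;=\; 0 ,
\]
which, having the form $\{(u,v,w)\in G^3:u-v-w=0\}$, belongs to $\Gamma(\alg{G})$; since $L,R$ are constant the whole instance has polynomial size. \textbf{Completeness.} If $\mathcal{L}$ has a perfect labeling $\ell$, plug in the honest Long Codes; since $\pi_e(\ell(x))=\ell(y)$, the left-hand side collapses to $\mu(\ell(x))$, so the honest assignment satisfies a $(1-\varepsilon)$-fraction of the constraints and $\Opt{I}\ge 1-\varepsilon$.

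Third comes the \textbf{soundness} analysis. The acceptance probability of the test equals
\[
\mathbb{E}_{f,g,\mu}\Bigl[\tfrac1n\sum_{\psi\in\widehat{\alg{G}}}\psi\bigl(A_x[f]-A_x[f-(g\circ\pi_e)-\mu]-B_y[g]\bigr)\Bigr],
\]
where the $\psi\equiv 1$ term contributes the trivial $1/n$. Assuming some assignment satisfies more than a $(1/n+\varepsilon)$-fraction of the constraints, I would expand each $\psi\circ A_x$ and $\psi\circ B_y$ in the Fourier basis of $G^{[L]}$, $G^{[R]}$ over $\widehat{\alg{G}}$, use independence of $f,g,\mu$ to see that averaging a character against the noise contracts it by a factor exponentially small in the size of its support, and conclude that the $\varepsilon$-advantage forces, for a noticeable fraction of edges $e$, a character of $G^{[L]}$ with \emph{small} support $S_x$ carrying non-negligible Fourier mass whose coordinate set projects under $\pi_e$ onto the support of a mass-carrying character on the $y$-side; folding guarantees $S_x\neq\emptyset$. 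The standard randomized decoding — for each vertex pick one of its characters with probability proportional to its squared Fourier weight and output a uniformly random coordinate of its support — then satisfies a fraction of edges of $\mathcal{L}$ bounded below by a positive function of $\varepsilon$ alone, independent of the instance. Choosing $\delta$ below this bound contradicts the Label Cover soundness, so no polynomial-time algorithm can $(1/n+\varepsilon,1-\varepsilon)$-approximate $\CSP(\Gamma(\alg{G}))$ unless $\mathrm{P}=\mathrm{NP}$.

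The difficulty is concentrated in this last step: setting up harmonic analysis over an arbitrary finite Abelian group, tracking how the projections $\pi_e$ act on Fourier supports, using the noise parameter to bound those supports, and balancing the three parameters — the noise level $\varepsilon$, the advantage $\varepsilon$, and the Label Cover soundness $\delta$ — so that the decoded labeling provably beats $\delta$; the folding bookkeeping is what pins the threshold exactly at $1/n=1/|G|$, matching the trivial random assignment. One might be tempted to first reduce to a cyclic group, but since the threshold depends on $|G|$ the analysis is cleanest carried out directly over $\alg{G}$, reading off the constant $1/n$ from the term $\tfrac1n\sum_{\psi\in\widehat{\alg{G}}}$ above.
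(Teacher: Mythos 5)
The paper does not prove Theorem~\ref{thm:hastad}; it simply cites H\r astad~\cite{H01} and uses the result as a black box. Your sketch correctly reproduces the essential structure of H\r astad's argument for linear equations over a finite Abelian group: Label Cover from PCP plus parallel repetition, Long-Code tables folded over constants, a three-query test $A_x[f]-A_x[f-(g\circ\pi_e)-\mu]-B_y[g]=0$ with completeness $\geq 1-\varepsilon$, expansion of the acceptance indicator via $\tfrac1n\sum_{\psi\in\widehat{\alg{G}}}\psi(\cdot)$ so the trivial character contributes exactly $1/n$, Fourier analysis with the noise damping high-degree terms, and the randomized decoding that beats the Label Cover soundness parameter. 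This is the right route and matches the source; since the paper offers no proof of its own there is nothing internal to compare it against.
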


Turning to equivalent descriptions of problems of bounded width we restrict our attention%
~(using the results of subsection~\ref{sec:core}) to singleton expansions of languages.
Combining the results of~\cite{FV99,LZ07,BK09b,BKbw,B, BulBW} we obtain

\begin{theorem} \label{thm:bw_char}
Let $\Gamma$ be a singleton expansion of a  constraint language. The following are equivalent.
\begin{itemize}
\item[(a)] There does not exist a nontrivial Abelian group $\alg{G}$ such that $\Gamma(\alg{G})$ is pp-interpretable in $\Gamma$.
\item[(b)] $\CSP(\Gamma)$ has bounded width. 
\item[(c)] $\CSP(\Gamma)$ has width $(2,3)$.
\item[(d)] $\Pol(\Gamma)$ contains a $3$-ary operation $f_1$ and a $4$-ary operation $f_2$ such that, for all $a,b \in D$,
\begin{align*}
f_1(a,a,b) &= f_1(a,b,a) = f_1(b,a,a) = \\
           & = f_2(a,a,a,b) = \dots = f_2(b,a,a,a)
\end{align*}
and
$
f_1(a,a,a) = a.
$
\end{itemize}
\end{theorem}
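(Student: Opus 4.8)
The plan is to prove all the equivalences by establishing the cycle
$(c)\Rightarrow(b)\Rightarrow(a)\Rightarrow(d)\Rightarrow(c)$, handling the two shallow implications directly and invoking the deep structural machinery for the other two.

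First, $(c)\Rightarrow(b)$ is immediate from the definitions, since width $(2,3)$ is a particular case of bounded width. For $(b)\Rightarrow(a)$ I would argue by contraposition: suppose $\Gamma(\alg{G})$ is pp-interpretable in $\Gamma$ for some nontrivial Abelian group $\alg{G}$. Since pp-interpretations preserve bounded width \cite{LZ06,LZ07}, $\CSP(\Gamma(\alg{G}))=\LIN$ would then have bounded width. This is false by the classical observation of \cite{FV99}: for a nontrivial $\alg{G}$ one exhibits a system of three-variable linear equations over $\alg{G}$ every bounded sub-instance of which is satisfiable (so the $(k,l)$-minimal instance is nontrivial for all $k,l$), yet the whole system has no solution. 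Hence $(b)$ fails whenever $(a)$ fails.

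The implication $(a)\Rightarrow(d)$ is the algebraic core, and I would only sketch it since it is assembled from existing results. Because $\Gamma$ is a singleton expansion, the unary relations $\{a\}$ force every polymorphism to be idempotent, so $\alg{A}:=(D,\Pol(\Gamma))$ is a finite idempotent algebra. Using Theorem~\ref{thm:galois}, pp-interpretability of $\Gamma(\alg{G})$ in $\Gamma$ translates into the statement that a term reduct of the idempotent reduct of a nonzero module over a finite ring lies in the variety $\mathcal{V}$ generated by $\alg{A}$, and conversely any nontrivial affine algebra in $\mathcal{V}$ gives rise to such a pp-interpretation. By tame congruence theory \cite{B,BKbw} (Hobby--McKenzie), the absence of a nontrivial affine algebra in $\mathcal{V}$ is equivalent to $\mathcal{V}$ being congruence meet-semidistributive. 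Finally, a Maltsev-condition argument (contained in \cite{BKbw} and its references) shows that congruence meet-semidistributivity of $\mathcal{V}$ is equivalent to the presence in $\Pol(\Gamma)$ of weak near-unanimity operations $f_1$ of arity $3$ and $f_2$ of arity $4$ satisfying the linking identity $f_1(b,a,a)=f_2(b,a,a,a)$; together with idempotency this is precisely condition $(d)$.

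The implication $(d)\Rightarrow(c)$ is the \emph{bounded width theorem} of \cite{BK09b,BKbw}, reproved independently in \cite{BulBW}, and I expect this to be the main obstacle: the other three implications are essentially bookkeeping, whereas this one needs the full structural theory. The argument runs as follows. The operations in $(d)$ force $\mathcal{V}(\alg{A})$ to be congruence meet-semidistributive; given a nontrivial $(2,3)$-minimal instance of $\CSP(\Gamma)$ one equips it with — or refines it to — the structure of a \emph{Prague strategy}, the consistency notion tailored to meet-semidistributive algebras, and then proves by induction on the total size of the domains that every nontrivial Prague strategy admits a solution. The inductive step relies on absorbing subalgebras and a careful analysis of the linked, nontrivially connected components of the instance, both of which are available precisely because of the weak near-unanimity polymorphisms. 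Once $(d)\Rightarrow(c)$ is established the cycle is closed and $(a)$--$(d)$ are all equivalent.
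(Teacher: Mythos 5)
The paper itself offers no proof of Theorem~\ref{thm:bw_char}: it is stated as a synthesis of results from \cite{FV99,LZ07,BK09b,BKbw,B, BulBW}, so the task is to reconstruct how those citations interlock, which your cycle $(c)\Rightarrow(b)\Rightarrow(a)\Rightarrow(d)\Rightarrow(c)$ does faithfully. The two shallow implications and the attribution of $(d)\Rightarrow(c)$ to the Prague-strategy/absorption proof of the bounded-width theorem are correct and match the intended sources.

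One inaccuracy in your $(a)\Rightarrow(d)$ sketch should be flagged. You write that ``the absence of a nontrivial affine algebra in $\mathcal{V}$ is equivalent to $\mathcal{V}$ being congruence meet-semidistributive.'' As stated this is false: for a locally finite variety, congruence meet-semidistributivity is, by Hobby--McKenzie, equivalent to omitting \emph{both} tame-congruence types $1$ and $2$, and omitting type $2$ alone (no affine subfactor) does not suffice---a variety admitting only type $1$ is not meet-semidistributive yet contains no nontrivial affine algebra. The repair is standard: if $\mathcal{V}$ admits type $1$ then there is a two-element subfactor whose clone consists of projections only, and over such a set $\Gamma(\mathbb{Z}_2)$ is trivially pp-interpretable (projections preserve every relation), so condition $(a)$ already excludes type $1$ as well. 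Notice your earlier phrasing---``a term reduct of the idempotent reduct of a nonzero module lies in $\mathcal{V}$''---was actually the sharper one, since a projections-only reduct does qualify as such a term reduct; the slip is only in the restatement in terms of ``nontrivial affine algebra.'' With that correction, the chain $(a) \Leftrightarrow$ ($\mathcal{V}$ omits types $1$ and $2$) $\Leftrightarrow$ ($\mathcal{V}$ is congruence meet-semidistributive) $\Leftrightarrow$ ($\Pol(\Gamma)$ contains linked weak near-unanimity operations of arities $3$ and $4$) goes through as you intended.
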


A refinement of condition (a) in the previous theorem is needed for robust solvability (see~\cite{DK} for more detailed discussion and references):

\begin{theorem} \label{thm:robust_hardness}
  Let $\Gamma$ be a singleton expansion of a  constraint language. 
  The following condition is equivalent to conditions from Theorem~\ref{thm:bw_char}
\begin{itemize}
  \item[(e)] There does not exits a nontrivial Abelian group $\alg{G}$ such that 
    $\Gamma(\alg{G})$ is pp-interpretable in $\Gamma$ \emph{in the first power of the domain and using pp-definitions without equality}.
\end{itemize}
\end{theorem}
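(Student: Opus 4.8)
\textit{Proof plan.}
The implication (a)$\Rightarrow$(e) is immediate, since (e) forbids only a subclass of the pp-interpretations forbidden by (a). For the converse I would argue by contraposition: assuming that $\Gamma(\alg{G})$ is pp-interpretable in $\Gamma$ for some nontrivial Abelian group $\alg{G}$, I would produce a prime $p$ such that $\Gamma(\mathbb{Z}_p)$ is pp-interpretable in $\Gamma$ in the first power and without equality, contradicting~(e) with $\alg{G}'=\mathbb{Z}_p$. By Theorem~\ref{thm:bw_char} the hypothesis is equivalent to $\CSP(\Gamma)$ not having bounded width, and since $\Gamma$ is a singleton expansion the algebra $\mathbf{A}=(D,\Pol(\Gamma))$ is idempotent.

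The first step — which I expect to be the main obstacle — is to \emph{localize the linear obstruction to the first power}. Using the structure theory behind Theorem~\ref{thm:bw_char} (tame congruence theory together with the absorption-theoretic analysis of \cite{BKbw,B}, adapted to robust reductions as in \cite{DK}), failure of bounded width for the idempotent algebra $\mathbf{A}$ should yield a prime $p$, a pp-definable (in $\Gamma$, possibly using equality) subset $B\subseteq D$, and a pp-definable equivalence $\theta$ on $B$, such that $B/\theta$ has exactly $p$ elements and the algebra induced on $B/\theta$ is either $2$-element with only projections as term operations (in which case $p=2$) or affine over $\mathbb{Z}_p$. The delicate point is that $B$ must be obtained inside $D$ itself, not merely inside some power $D^n$ (for a power this is just a restatement of the hypothesis); this is where idempotency is essential, since for idempotent algebras the relevant type-$\mathbf{1}$/type-$\mathbf{2}$ obstructions to bounded width are witnessed by subalgebras. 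In either alternative, fixing a bijection $B/\theta\cong\mathbb{Z}_p$, the preimage in $B^k$ of the solution set of any system of linear equations over $\mathbb{Z}_p$ is a relation preserved by $\Pol(\Gamma)$, hence pp-definable in $\Gamma$ with equality by Theorem~\ref{thm:galois}.

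The second step removes equality, and here the singletons in the language do the work. Let $R_0=\{(a,b,c)\in\mathbb{Z}_p^3:a+b+c=0\}$; it is irredundant because $p>1$, and together with the singleton relations $\{\{a\}:a\in\mathbb{Z}_p\}$ it pp-defines, without equality, every relation of $\Gamma(\mathbb{Z}_p)$ (define negation by $\exists u\,(R_0(x,x',u)\wedge\{0\}(u))$, then addition, then scalar multiples, then arbitrary linear equations, using extra singletons for right-hand sides and two distinct singletons to produce the empty relation). Let $R^{\ast}\subseteq B^3$ be the preimage of $R_0$ under $B^3\to(B/\theta)^3$. By the first step $R^{\ast}$ is pp-definable in $\Gamma$, and it is irredundant (the classes of $\theta$ are nonempty and $R_0$ is irredundant), so by the ``moreover'' part of Theorem~\ref{thm:galois} it is pp-definable from $\Gamma$ \emph{without equality}. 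Since $\Gamma$ contains every singleton $\{c\}$ with $c\in D$, replaying the definitions of the previous sentence with $R^{\ast}$ in place of $R_0$ and with singletons $\{c\}$, $c$ in the appropriate $\theta$-class, in place of the singletons of $\mathbb{Z}_p$, produces, using only relations of $\Gamma$ and no equality: the unary relation $U=B$ (as $\exists v,w\,R^{\ast}(u,v,w)$), the equivalence $\Theta=\theta$ (as $\exists w,w'\,(R^{\ast}(u,w,w')\wedge R^{\ast}(v,w,w'))$), and, for each relation $R_i$ of $\Gamma(\mathbb{Z}_p)$, its preimage $S_i\subseteq B^{k_i}$ with $S_i/\Theta=R_i$. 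The data $(U,\Theta,\{S_i\}_i)$ is then a pp-interpretation of $\Gamma(\mathbb{Z}_p)$ in $\Gamma$ in the first power and without equality, as required. The genuinely hard ingredient is the localization in step one; steps zero (replacing $\alg{G}$ by a prime-order section, if one does not get $p$ directly) and two are routine.
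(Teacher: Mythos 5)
The paper does not actually prove Theorem~\ref{thm:robust_hardness}; it states the theorem as a known refinement of Theorem~\ref{thm:bw_char}(a) and refers the reader to \cite{DK} for the discussion and references, so there is no in-paper argument to compare your proposal against, only the cited source.

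Your reconstruction is structurally sensible, and the equality-elimination in Step~2 is correct and well-argued: $R^{\ast}$ is irredundant because $R_0$ is and the $\theta$-classes are nonempty, so the ``moreover'' part of Theorem~\ref{thm:galois} gives an equality-free pp-definition of $R^{\ast}$, and then $U$, $\Theta$, and the $S_i$ are built from $R^{\ast}$ and the singletons of $\Gamma$ without equality, all in the first power. The real gap, which you do flag honestly, is Step~1: the claim that failure of bounded width produces a subuniverse $B\subseteq D$ and a congruence $\theta$ on $B$ \emph{inside the first power of the domain} with $B/\theta$ a $2$-element pure set or affine over $\mathbb{Z}_p$. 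The general tame-congruence-theoretic characterization places a type-$\mathbf{1}$/$\mathbf{2}$ obstruction somewhere in $V(\mathbf{A})$, i.e.\ a priori in $\mathrm{HS}(\mathbf{A}^k)$ for some $k$, and bringing it into $\mathrm{HS}(\mathbf{A})$ itself is exactly the nontrivial idempotency-specific localization that \cite{DK} carry out, building on the absorption analysis of \cite{BKbw,B}. Your plan asserts this localization as a consequence of ``the structure theory'' without naming the precise fact or giving the argument, so as written it is not self-contained; you should either cite the specific statement in \cite{DK} that delivers $(B,\theta,p)$ in the first power, or supply the localization proof.
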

Combining this fact with the results of~\cite{DK} discussed in subsection~\ref{sec:interpret} and Theorem~\ref{thm:hastad} 
the authors of~\cite{DK} obtain the hardness proof, 
i.e. the upward direction of the conjecture of Guruswami and Zhou~(unless P=NP).

\subsection{The missing implication}

The main result of this paper proves the missing implication and therefore confirms the conjecture of Guruswami and Zhou.
As discussed in subsection~\ref{sec:core} we can, without loss of generality, assume that $\Gamma$ is a singleton expansion of a constraint language. 
This is a statement of the main theorem in the paper:

\begin{theorem} \label{THM:MAIN}
If $\Gamma$ is a singleton expansion of a constraint language and the $\CSP(\Gamma)$ has bounded width then it is robustly solvable. 
More precisely there exists a randomized polynomial-time algorithm which returns an assignment satisfying, in expectation, $(1 - O(\log \log (1/\varepsilon)/{\log (1/\varepsilon)}))$-fraction of the constraints given
a $(1-\varepsilon)$-satisfiable instance.
\end{theorem}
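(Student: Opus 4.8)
The plan is to set up a semidefinite programming relaxation of the given $(1-\varepsilon)$-satisfiable instance $\inst{I}$, extract from an almost-optimal SDP solution a "soft" consistency structure on the local projections, and then round this structure to a genuine assignment using the bounded-width machinery. By Theorem~\ref{thm:bw_char} we may assume $\CSP(\Gamma)$ has width $(2,3)$ and that $\Pol(\Gamma)$ contains the operations $f_1,f_2$ of condition (d); by Theorems~\ref{thm:robust_pp} and~\ref{thm:adding_consts} together with the reductions of subsection~\ref{sec:interpret} we may also pass freely to pp-definable (equality-free) relations, which lets us massage $\inst{I}$ into a $(2,3)$-minimal form whose binary projections $P_{xy}$ are the basic objects. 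First I would replace $\inst{I}$ by its $(2,3)$-minimal instance and write an SDP whose variables are unit vectors $\sdpv{v}{x,a}$ for each variable $x$ and value $a\in D$, with the usual constraints $\sum_a \normsq{\sdpv{v}{x,a}}=1$, $\dotprod{\sdpv{v}{x,a}}{\sdpv{v}{x,b}}=0$ for $a\neq b$, and for each constraint a probability-distribution gadget supported on its relation; the SDP optimum $\SDPOpt{I}$ then satisfies $\SDPOpt{I}\ge 1-\varepsilon$.

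The heart of the argument is the connection advertised in the introduction: an SDP (or LP) solution of value $1-\varepsilon$ yields something close to a \emph{Prague strategy} on the $(2,3)$-minimal instance. Concretely, for a threshold parameter I would declare a pair $(a,b)\in P_{xy}$ "live" if the corresponding SDP inner products are not too small, and show that after discarding a small ($O(\varepsilon \cdot \mathrm{poly})$) fraction of constraints the surviving live pairs form a system of nonempty subrelations that is closed under the $(2,3)$-consistency propagation steps and satisfies the Prague-strategy axioms (the "steps" and "cycle-reachability" conditions). This is where the SDP geometry does real work: the triangle-inequality-type constraints on the vectors force that composing a live pair along a path and back is again live, up to a controlled loss. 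The key steps, in order, are: (i) solve the SDP and fix a good solution; (ii) define live pairs via a threshold $\theta$ chosen as a function of $\varepsilon$; (iii) prove that the live pairs satisfy the Prague strategy conditions on the instance obtained by removing the "bad" constraints, quantifying the loss; (iv) invoke the bounded-width solvability of Prague instances (from \cite{BK09b,BKbw}, using $f_1,f_2$) to find a solution of the reduced instance; (v) translate this back, argue the removed constraints are few, and optimize the choice of $\theta$ to obtain the error bound $O(\log\log(1/\varepsilon)/\log(1/\varepsilon))$, using a layered/geometric sequence of thresholds so that a random choice of threshold loses only a $1/\log(1/\varepsilon)$ factor while the number of layers needed is $\log\log(1/\varepsilon)$.

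I expect the main obstacle to be step (iii): verifying that the approximate structure extracted from the SDP genuinely satisfies the Prague-strategy axioms, not merely the local $(2,3)$-minimality. Prague strategies require a global-looking closure property (that certain patterns reachable by alternating projections and compositions collapse), and proving that the SDP vectors enforce this — while keeping the fraction of sacrificed constraints small enough — is delicate; it is the point where the special polymorphisms $f_1,f_2$ (equivalently, the absence of a pp-interpretable nontrivial Abelian group, Theorem~\ref{thm:robust_hardness}) must be used, since otherwise H\r astad's hardness (Theorem~\ref{thm:hastad}) forbids any such rounding. A secondary difficulty is the quantitative bookkeeping needed to turn a crude threshold argument into the stated $O(\log\log(1/\varepsilon)/\log(1/\varepsilon))$ bound, and separately the derandomization, which the statement already concedes achieves a weaker ratio.
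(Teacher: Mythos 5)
The high-level plan — SDP relaxation, threshold the inner products, recover a Prague-type consistency structure, invoke bounded-width solvability — is the paper's strategy, but several of the technical ideas that make it actually work are missing, and a few steps are stated in a way that would fail.

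The most serious gap is in your step (iii). You claim the SDP's ``triangle-inequality-type constraints'' suffice to make the thresholded live pairs satisfy the Prague axioms. They do not: the vector-length monotonicity you can read off from the SDP (that $\sdpv{y}{A+(x,y)}$ is at least as long as $\sdpv{x}{A}$, with equality only when $A+(x,y,x)=A$) gives you condition (P2) after a layering argument, but condition (P3) — that $A+p_1+p_2=A$ forces $A+p_1=A$ — is \emph{not} enforced by lengths alone. Two distinct subsets $A,B\subseteq P_x$ can produce vectors $\sdpv{x}{A}$, $\sdpv{x}{B}$ of the same length, and a pattern of almost-the-same steps can carry $A$ to $B$ without any detectable length growth. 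The paper closes this gap with the Goemans--Williamson ``cutting by random hyperplanes'' idea: it picks random unit vectors $\vect{q}_i$, discards constraints whose variables have same-layer subsets $A\neq B$ that no hyperplane separates, and also discards constraints where almost-the-same vectors get separated. Only then does (P3) hold. Without this, your rounding would not produce a weak Prague instance and Theorem~\ref{thm:bw} could not be invoked.

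There are three further mismatches with the actual argument. First, the paper does not replace $\inst{I}$ by its $(2,3)$-minimal instance; instead it reduces to constraint languages with only unary and binary relations (Proposition~\ref{prop:bin}) so that the SDP vectors $\sdpv{x}{a}$ suffice, and then works with patterns of binary steps. A $(2,3)$-minimality preprocessing would not interact sensibly with an almost-satisfiable instance. Second, the target is a \emph{weak Prague instance}, not a Prague strategy; the paper explicitly needs the weaker axioms (P1)--(P3) and the stronger Theorem~\ref{thm:bw} from~\cite{BKbw}, because the object you extract from the SDP has no reason to be a full Prague strategy on every pair. Third, the role you assign to the polymorphisms $f_1,f_2$ is wrong: they are not used to prove the Prague axioms from the SDP (that part is purely geometric). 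They enter later, and in a way you omit entirely: the relations $P^{\inst{J}}_{x,y}$ obtained from the SDP need not be invariant under $\Pol(\Gamma)$, so $\inst{J}$ need not have bounded width. The paper therefore replaces $\inst{J}$ by its algebraic closure $\inst{J}'$ (Proposition~\ref{prop:closure}) and proves that closing a weak Prague instance under operations preserves the weak Prague axioms; the polymorphisms then guarantee $\inst{J}'$ lives in a bounded-width language and $P^{\inst{J}'}_{x,y}\subseteq R$ for the surviving constraints. Without the closure step, step (iv) of your outline is not justified.

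Your quantitative intuition about a geometric family of thresholds and a random choice among $\sim\log(1/\varepsilon)/\log\log(1/\varepsilon)$ of them is in the right direction, and the overall architecture is recognizable. But as written the proof has a genuine hole at (P3), misplaces the use of $(2,3)$-minimality, and omits the algebraic-closure step that the bounded-width invocation requires.
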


\subsection{An overview of the proof}

Efficient approximation algorithms are often designed through linear programming (LP) relaxations and
semidefinite programming (SDP) relaxations. For instance, the robust satisfiability algorithm for {\HORN} \cite{Z98} uses LP relaxation while
the robust satisfiability algorithms for {\SAT} and {\UG} \cite{Z98, CMM09} are SDP-based. 

Robust algorithms for all CSPs of width $1$ were independently devised in \cite{DK} and \cite{KOTYZ}. 
From the CSPs mentioned previously, this result covers {\HORN}, but not {\SAT} or {\UG}. 
The approach in \cite{KOTYZ} is close to ours so let us briefly sketch the main ideas. 

For any instance $\inst{I} = (V,D,\cons{C})$ there is a canonical 0--1 integer program with the same optimal value as Max--CSP.
It has variables $\lambda_x(a)$ for every $x \in V$ and $a \in D$ and variables $\lambda_C(\vc{a})$ 
for every constraint $C=(S,R)$ and every tuple $\vc{a} \in A^r$, where $r$ is the arity  of $C$. 
The interpretation of $\lambda_x(a) = 1$ is that variable $x$ is assigned value $a$; 
the interpretation of $\lambda_C(\vc{a})=1$ is that $S$ is assigned (component-wise) tuple $\vc{a}$. 
The value to be maximized
is then equal to
\begin{equation} \label{eq:LPsum}
\frac{1}{|\cons{C}|} \sum_{C = (S,R) \in \cons{C}} \sum_{\vc{a} \in R} \lambda_C(\vc{a}).
\end{equation}
modulo the following constraints
\begin{align*}
\sum_{a\in D} \lambda_x(a) &= 1 \text{ for every } x \in V \\
\sum_{\vc{a} : a_i = a}\lambda_C(\vc{a}) &= \lambda_{x_i}(a) \text{ for every } C=((x_1,\dots,x_r),R), i\leq r\text{ and } a\in D.
\end{align*}

By relaxing the 0--1 program allowing the variables to take values in the range $[0,1]$ instead of $\{0,1\}$, we obtain the \emph{basic linear programming relaxation} for $\inst{I}$ with possibly larger value $\LPOpt{I}$ of the sum~(\ref{eq:LPsum}). 

The robust algorithm from \protect\cite{KOTYZ} works roughly as follows. (1) Run the basic LP relaxation for $\inst{I}$, (2) use the output of LP to remove some constraints so that the remaining instance $\inst{J}$ has the property that the $1$-minimal instance corresponding to $\inst{J}$ is non-trivial, (3) return a solution of $\inst{J}$. Steps (1) and (2) can be performed on any instance of the CSP. The instance $\inst{J}$ after step (2) has a solution whenever the language has width $1$, therefore we can perform step (3) using, for instance, Theorem~\protect\ref{thm:search_problem}.

Our robust algorithm for all bounded width CSPs has the same general form. 
The differences are that we use it only for instances with at most binary constraints~(a reduction is provided in the next section). 
In step (1) we use the basic SDP relaxation instead of the basic LP relaxation, 
and in step (2) we use weak Prague instances (see Section~\ref{sec:Prague}). 

\section{Our tools and reductions}
\subsection{Reduction to constraint languages with unary and binary relations}

In this section we present a reduction which allows us to prove Theorem~\ref{THM:MAIN} in an even more  restricted setting: 
for singleton expansions of constraint languages with unary and binary constraints only.
The reduction is given in the following proposition.

\begin{proposition} \label{prop:bin}
Let $\Gamma$ be a singleton expansion of a constraint language
on the domain $D$ which contains relations of maximum arity $l$ and such that $\CSP(\Gamma)$ has bounded width. 
Then there exists  $\Gamma'$ a singleton expansion of a constraint language on $D'$ 
containing only at most binary relations such that $\CSP(\Gamma')$ has bounded width and
$\CSP(\Gamma) \leq_{RA} \CSP(\Gamma')$.
\end{proposition}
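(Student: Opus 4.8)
The plan is to use the standard ``gadget'' reduction from general-arity CSP to binary CSP, often called the \emph{dual encoding} (or microstructure/hidden-variable encoding), and then verify that it is compatible with both bounded width and the robust-approximation preorder $\leq_{RA}$. Concretely, given $\Gamma$ on $D$ with maximum arity $l$, I would let the new domain $D'$ be the disjoint union of $D$ with all tuples appearing in the relations of $\Gamma$, i.e.\ $D' = D \cup \bigcup_{R \in \Gamma} R$. For each relation $R \in \Gamma$ of arity $r$ I would introduce: a unary relation $U_R = R \subseteq D'$ marking the ``block'' of legal tuples for $R$, and for each coordinate $i \le r$ a binary relation $\pi_{R,i} = \{(\vc{a}, a_i) : \vc{a} \in R\} \subseteq D' \times D$ expressing the $i$-th projection. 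One also keeps a unary relation $U_D = D$, the binary equality-type relations needed, and the singletons $\{a\}$ for $a \in D$ so that $\Gamma'$ is again a singleton expansion. Given an instance $\inst{I} = (V,D,\cons{C})$ of $\CSP(\Gamma)$, the encoded instance $\inst{I}'$ has variable set $V \cup \cons{C}$ (a variable for each original variable, constrained by $U_D$, and a variable $x_C$ for each constraint $C = (S,R)$, constrained by $U_R$), together with a binary constraint $\pi_{R,i}$ linking $x_C$ to the $i$-th variable of $S$, for each $i$.

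The first thing I would check is the exact correspondence between assignments. A solution of $\inst{I}'$ restricts to a solution of $\inst{I}$, and conversely; more quantitatively, I want: if $F$ satisfies a $(1-\varepsilon)$-fraction of constraints of $\inst{I}$ then there is an assignment to $\inst{I}'$ satisfying a $(1 - O(\varepsilon))$-fraction of its constraints, and conversely an assignment to $\inst{I}'$ violating a $\delta$-fraction of constraints of $\inst{I}'$ can be decoded to an assignment of $\inst{I}$ violating an $O(\delta)$-fraction of the constraints of $\inst{I}$. The decoding simply reads off the $D$-valued variables; the point is that for a constraint $C = (S,R)$ of $\inst{I}$, if all the $\pi_{R,i}$ constraints attached to $x_C$ in $\inst{I}'$ are satisfied and $x_C$ is assigned a legal tuple, then $C$ is satisfied in $\inst{I}$. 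Each original constraint is ``watched'' by a bounded number ($\le l+1$) of new constraints, and each new constraint watches at most one original constraint, so the fractions change only by the constant factor $O(l)$, which is absorbed into the $O(\cdot)$ in the definition of $\leq_{RA}$. One must be slightly careful with the normalization (number of constraints) and with the weights in the instances, but this is a routine bookkeeping step; weighting the new constraints appropriately, or padding, handles it. This gives $\CSP(\Gamma) \leq_{RA} \CSP(\Gamma')$, provided $\Gamma'$ really can be taken to be a fixed finite language independent of $\inst{I}$ — which it is, since $\Gamma$ is finite and the gadget relations $U_R, \pi_{R,i}, \{a\}, U_D$ are finitely many.

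The second, and I expect the main, obstacle is showing that $\CSP(\Gamma')$ still has bounded width. The cleanest route is via the algebraic characterization in Theorem~\ref{thm:bw_char}: $\CSP(\Gamma)$ having bounded width gives a $3$-ary $f_1$ and $4$-ary $f_2$ on $D$ satisfying condition (d); I want to lift these to operations on $D'$ that are polymorphisms of $\Gamma'$. On the $D$-part of $D'$ we keep $f_1, f_2$; on a tuple-block $R$ we must define the operations ``coordinatewise'' using $f_1, f_2$ — i.e.\ $f_i^{D'}(\vc{a}^1, \dots) := (f_i(a^1_1, \dots), f_i(a^1_2, \dots), \dots)$, which again lands in $R$ because $f_1, f_2 \in \Pol(\Gamma)$ are compatible with $R$. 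The work is in checking that this combined operation is well-defined and total on all of $D'$ (handling mixed inputs from different blocks or from $D$ and a block — one can use the projections/first-coordinate trick, or note that the relevant relations $U_R$, $\pi_{R,i}$ never force such mixing inside a single tuple of the relation, so the operation can be defined arbitrarily there), that it is compatible with each $U_R$ (immediate, as $U_R$ is just the block, closed coordinatewise), with each $\pi_{R,i}$ (this follows from compatibility with $R$ together with $f_i$ being a polymorphism on $D$), and with the singletons and equality (trivial, since $f_1, f_2$ are idempotent on the relevant part). Finally, that the identities of condition (d) still hold for the lifted operations: on $D$ they hold by hypothesis; on a block they hold coordinatewise, since each coordinate evaluation is an instance of the $D$-identity. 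Hence $\Pol(\Gamma')$ satisfies (d), so by Theorem~\ref{thm:bw_char} $\CSP(\Gamma')$ has bounded width, completing the proof. An alternative, perhaps more transparent, argument avoiding polymorphism-lifting entirely is to observe that $\Gamma$ is pp-interpretable in $\Gamma'$ in the first power using equality-free definitions (the blocks and projections literally reconstruct each $R \in \Gamma$), and then invoke the results quoted in subsection~\ref{sec:interpret} that pp-interpretability transfers bounded width; but since that direction of transfer goes the ``wrong way'' for $\leq_{RA}$ here, I would use it only for the bounded-width claim and keep the direct gadget argument for the approximation reduction.
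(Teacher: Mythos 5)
Your construction is a genuinely different route from the paper's: you use the dual (hidden-variable) encoding, with $D' = D \sqcup \bigcup_{R\in\Gamma} R$ and projection relations $\pi_{R,i}$, whereas the paper takes $D' = D^l$ and links the constraint-variable $x_C$ to the original variables via relations $E_k$ asserting equality between the first coordinate of one tuple and the $k$-th coordinate of another. The paper's choice is not cosmetic: because $D'$ is literally a power of $D$, the polymorphisms $f_1,f_2$ lift coordinatewise to total operations on all of $D'$ with no case analysis, so condition (d) of Theorem~\ref{thm:bw_char} is verified in one line. Your disjoint-union domain forces the mixed-input case that the paper simply never has to face.

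And that is where the real gap is. You write that on mixed inputs (arguments coming from different blocks, or from $D$ and a block) ``the operation can be defined arbitrarily there,'' because the relations $U_R,\pi_{R,i}$ never mix blocks. That reasoning is fine for compatibility -- those relations indeed only constrain homogeneous tuples -- but it is not fine for the identities of condition (d), which must hold \emph{for all} $a,b \in D'$, including $a\in D$ and $b$ in some block $R$. If you set $f_1'$ arbitrarily on $(a,a,b)$, $(a,b,a)$, $(b,a,a)$ you will in general destroy $f_1'(a,a,b)=f_1'(a,b,a)=\cdots=f_2'(b,a,a,a)$, and then Theorem~\ref{thm:bw_char}(d) does not apply. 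The ``first-coordinate'' option you mention in passing does work -- project every argument to its first coordinate and apply $f_1$ (or $f_2$) on $D$; the resulting value lies in $D\subseteq D'$, it is a symmetric function of the arguments up to the WNU identities, and those identities then follow from the ones on $D$. But you need to commit to such a definition and check the identities on mixed inputs; ``arbitrary'' is not an option. (There is also a small slip: to make $\Gamma'$ a singleton expansion you need singletons $\{a'\}$ for every $a'\in D'$, not only for $a\in D$.)

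Your alternative argument for bounded width via pp-interpretability is incorrect as stated. You observe (correctly) that $\Gamma$ is pp-interpretable in $\Gamma'$, but the quoted transfer result (subsection~\ref{sec:interpret}) moves bounded width \emph{from the interpreting language to the interpreted one}: if $\Gamma$ is pp-interpretable in $\Gamma'$ and $\CSP(\Gamma')$ has bounded width, then $\CSP(\Gamma)$ does. That is the opposite of what you need here -- you are given bounded width of $\Gamma$ and want it for $\Gamma'$. You flag the ``wrong way'' issue for $\leq_{RA}$ but do not notice that it applies to bounded width too. To make this alternative route work you would have to exhibit a pp-interpretation of $\Gamma'$ in $\Gamma$ (which is plausible for your gadget, with $D'$ sitting in a suitable power of $D$, but you did not attempt it). So the polymorphism-lifting argument, once the mixed-input case is fixed, is the one to keep; as is, the alternative does not stand on its own.
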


\begin{proof}
First we define the constraint language $\Gamma'$ on $D' = D^l$. For every relation $R \in \Gamma$ of arity $k$ we add to $\Gamma'$ the unary relation $R'$ defined by
$$(a_1, \dots, a_l) \in R' \quad \mbox{ iff } \quad  (a_1, \dots, a_k) \in R, $$
for every $k \leq l$ we add the binary relation 
$$
E_k = \{((a_1,\dots, a_l), (b_1, \dots, b_l)): a_1 = b_k\},
$$
and for every $(a_1, \dots, a_l) \in D'$ we add the singleton unary relation $\{(a_1, \dots, a_l)\}$.
The singletons ensure that $\Gamma'$ is a singleton expansion.
The $\CSP(\Gamma')$ has bounded width which can be seen, for instance, from Theorem~\ref{thm:bw_char}: 
If $f_1,f_2$ are polymorphisms of $\Gamma$ from this theorem, 
then the corresponding operations $f_1',f_2'$ acting coordinate-wise on $D'$ satisfy the same equations 
and it is straightforward to check that $f_1', f_2'$ are polymorphisms of $\Gamma'$.

Now, let $\inst{I} = (V,D,\cons{C})$ be an instance of $\CSP(\Gamma)$ with $\Opt{I} = 1-\protect\varepsilon$. We transform $\inst{I}$ to an instance $\inst{I}'$ of $\CSP(\Gamma')$ as follows. We keep the original variables and for every constraint $C = ((x_1, \dots, x_k),R)$ in $\cons{C}$ we introduce
a new variable $x_C$ and add $k+1$ constraints
\begin{equation} \label{eq:newc}
((x_C),R'), ((x_1,x_C),E_1), ((x_2,x_C),E_2), \dots, ((x_k,x_C), E_k).
\end{equation}
If $F: V \rightarrow D$ is an assignment for $\inst{I}$ of value $1-\varepsilon$ then  the assignment $F'$ for $\inst{I}'$ defined by
\begin{align*}
  F'(x) &= (F(x), ?, \dots, ?) \quad \mbox{ for } x \in V, \\
  F'(x_C) &= (F(x_1), \dots, F(x_k), ?, \dots, ?) \quad \\
   & \mbox{ for } C = ((x_1, \dots, x_k), R)
\end{align*}
(where ? stands for an arbitrary element of $D$)
has value at least $1-\varepsilon$ since all the binary constraints in $\inst{I}'$ are satisfied and the constraint $(x_C,R')$ is satisfied whenever $F$ satisfies $C$.

We run the robust algorithm for $\CSP(\Gamma')$ to get an assignment $G'$ for $\inst{I'}$ with value at least $1 - g(\varepsilon)$,
and we define $G(x)$, $x \in V$ to be the first coordinate of $G'(x)$. Note that, for any constraint $C$ of $\inst{I}$, if $G'$ satisfies all the constraints (\ref{eq:newc}) then $G$ satisfies $C$. Therefore the value of $G$ is at least $1 - (l+1)g(\varepsilon)$.
\end{proof}

Now to prove Theorem~\ref{THM:MAIN} for $\Gamma$ -- a singleton expansion of an arbitrary constraint language, 
we produce $\Gamma'$~(from Lemma~\ref{prop:bin}) and if Theorem~\ref{THM:MAIN} holds for $\Gamma'$~(which has at most binary constraints) it does for $\Gamma$ as well.

\subsection{LP and SDP relaxations} \label{sec:LPandSDP}

The previous subsection allows us to present a simplified version of the definition of a basic SDP relaxation~\cite{R08} which is
appropriate for languages with only unary and binary constraints.

\begin{definition}
Let $\Gamma$ be a constraint language over $D$ consisting of at most binary relations and let
 $\inst{I} = (V, D, \cons{C})$ be an instance of $\CSP(\Gamma)$ with $m$ constraints. The goal for the \emph{basic SDP relaxation} of $\inst{I}$ is to find $(|V||D|)$-dimensional real vectors
$
\sdpv{x}{a}, x \in V, a \in D
$
maximizing
\begin{equation} \label{eq:sum}
\frac{1}{m} \left(
\sum_{(x,R) \in \cons{C} } \sum_{a \in R} \normsq{\sdpv{x}{a}} + 
\sum_{((x,y),R) \in \cons{C} } \sum_{(a,b) \in R} \dotprod{\sdpv{x}{a}}{\sdpv{y}{b}}
\right) 
\end{equation}
subject to 
\begin{description}
\item[(SDP1)] $\dotprod{\sdpv{x}{a}}{\sdpv{y}{b}} \geq 0 \quad$ for all $x,y \in V, a,b \in D$
\item[(SDP2)] $\dotprod{\sdpv{x}{a}}{\sdpv{x}{b}} = 0 \quad $  for all $x \in V, a, b \in D, a \neq b$, and 
\item[(SDP3)] $\sum_{a \in D} \sdpv{x}{a} = \sum_{a\in D} \sdpv{y}{a}, \ \normsq{\sum_{a \in D}\sdpv{x}{a}} = 1$ \hfill \\
         for all $x,y \in V$. 
\end{description}
\end{definition}

The dot products $\dotprod{\sdpv{x}{a}}{\sdpv{y}{b}}$ can be thought of as weights and the goal is to find vectors so that maximum weight is given to pairs (or elements) in constraint relations. It will be convenient to use the notation 
$$\sdpv{x}{A} = \sum_{a \in A} \sdpv{x}{a}$$
for a variable $x \in V$ and a subset $A \subseteq D$, so that condition (SDP3) can be written as $\sdpv{x}{D} = \sdpv{y}{D}$, $\normsq{\sdpv{x}{D}} = 1$. The contribution of one constraint to (\ref{eq:sum}) is by (SDP3) at most $1$ and it is the greater the less weight is given to pairs (or elements) outside the constraint relation.

The optimal value for the sum (\ref{eq:sum}), $\SDPOpt{I}$, is always at least $\Opt{I}$. 
There are  algorithms~(see e.g.~\protect\cite{SDP}) that output vectors with (\ref{eq:sum}) $ \geq \SDPOpt{I} - \delta$ which are polynomial in the input size and $\log(1/\delta)$.

From the output of the basic SDP relaxation we can get a valid output of the LP relaxation by defining $\lambda_x(a) = \normsq{\sdpv{x}{a}}$ and $\lambda_{(x,y)}(a,b) = \dotprod{\sdpv{x}{a}}{\sdpv{y}{b}}$ for any constraint $((x,y),R)$. In particular, $\SDPOpt{I} \leq \LPOpt{I}$.

\section{Prague instances} \label{sec:Prague}

The proof of the characterization of bounded width CSPs in \cite{BK09b} relies on a certain consistency notion called Prague strategy. 
It turned out that Prague strategies are related to outputs of basic SDP relaxations and this connection is what made our main result possible. 
The main result actually uses a stronger result, about weaker consistency notion called weak Prague instance~\cite{BKbw}.

Terms defined below are used only for certain types of instances and constraint languages.  
In our main proof we will construct, using an output of an SDP program, a weak Prague instance in a language which is different
than the language of the original instance.
Therefore, in the remainder of this section we assume that
\begin{itemize}
\item 
$\Lambda$ is a constraint language on a domain $D$, $\Lambda$ contains only binary relations,
\item 
$\inst{J} = (V,D,\cons{C}^{\inst{J}})$ is an instance of $\CSP(\Lambda)$ such that every pair of distinct variables is the scope of at most one constraint
$((x,y),P^{\inst{J}}_{x,y})$, and
  if $((x,y),P^{\inst{J}}_{x,y}) \in \cons{C}^{\inst{J}}$ then $((y,x),P^{\inst{J}}_{y,x}) \in \cons{C}^{\inst{J}}$, where $P^{\inst{J}}_{y,x} = \{(b,a): (a,b) \in P^{\inst{J}}_{x,y}\}$. 
  (Usually the instance is clear from context and then we omit the superscripts for $P_{x,y}$'s and $\cons{C}$.)
\end{itemize}

Note that under these assumptions $\inst{J}$ is $1$-minimal if and only if every variable is in the scope of some constraint and for every constraint $((x,y), P^{\inst{J}}_{x,y})$ the projection of $P^{\inst{J}}_{x,y}$ to the first coordinate is equal to $P_x^{\inst{J}}$, where $P_x^{\inst{J}}$ are the sets from the definition of $1$-minimality.

\subsection{Weak Prague instance}

First we need to define steps and patterns.

\begin{definition}
A \emph{step} (in $\inst{J}$) is a pair of variables $(x,y)$ which is the scope of a constraint in $\inst{J}$.
A \emph{pattern from $x$ to $y$} is a sequence of variables $p = (x=x_1, x_2, \dots, x_k=y)$ such that every $(x_i,x_{i+1})$, $i = 1, \dots, k-1$ is a step.

For a pattern $p = (x_1,\dots,x_k)$ we put $-p = (x_k, \dots, x_1)$.
If $p = (x_1, \dots, x_k)$, $q = (y_1, \dots, y_l)$, $x_k = y_1$ then the concatenation of $p$ and $q$ is the pattern $p+q = (x_1, x_2, \dots, x_k = y_1, y_2, \dots, y_k)$. 
For a pattern $p$ from $x$ to $x$ and a natural number $k$, $kp$ denotes the $k$-time concatenation of $p$ with itself.
\end{definition}

Observe that from the assumptions about $\inst{J}$ it follows that $-p$ is a pattern whenever $p$ is.

\begin{definition}
  Let $p = (x=x_1,x_2, \dots, x_k=y)$ be a pattern from $x$ to $y$ in $\inst{J}$.
  A \emph{realization of $p$} is a sequence $(a_1, \dots, a_k) \in D^k$ such that $(a_i,a_{i+1}) \in P_{x_i,x_{i+1}}$ for every $1 \leq i \leq k-1$.
  
  For a subset $A \subseteq D$ we define $A+p$ as the set of the last elements of those realizations of $p$ whose first element is in $A$, that is,
  $$
  A + p = \{b \in D: (\exists \ a_1, \dots, a_{k-1} \in D) \ a_1 \in A \mbox{ and } (a_1,\dots, a_{k-1},b) \mbox{ is a realization of $p$}\}.
  $$
  Finally, we define $A - p = A + (-p)$.
\end{definition}

The addition of patterns is associative, i.e. $(A+p) + q = A + (p+q)$. 
Also note that in a $1$-minimal instance we have $A \subseteq A + p - p$ for any $A\subseteq P_x$ and any pattern $p$ from $x$.

A weak Prague instance is a $1$-minimal instance with additional requirements concerning addition of patterns.

\begin{definition} \label{def:prague}
$\inst{J}$ is a \emph{weak Prague instance} if
\begin{itemize}
\item[(P1)] $\inst{J}$ is $1$-minimal,
\item[(P2)] for every $A \subseteq P_x^{\inst{J}}$ and every pattern $p$ from $x$ to $x$, if 
$A + p = A$ then $A - p = A$, and
\item[(P3)] for any patterns $p_1,p_2$ from $x$ to $x$ and every $A \subseteq P_x^{\inst{J}}$, if
$A + p_1 + p_2 = A$ then $A + p_1 = A$.
\end{itemize}
\end{definition}

\noindent
To clarify the definition let us assume that $\inst{J}$ is $1$-minimal and consider the following digraph: vertices are all the pairs $(A,x)$, where $x \in V$ and $A \subseteq P_x^{\inst{J}}$, and $((A,x),(B,y))$ forms an edge iff $(x,y)$ is a step and $A + (x,y) = B$. 
Condition (P3) means that no strong component contains $(A,x)$ and $(A',x)$ with $A \neq A'$,
condition (P2) is equivalent (by the following lemma) to the fact that every strong component contains only undirected edges (that is, if $((A,x),(B,y))$ is an edge then so is $((B,y),(A,x))$).

\begin{lemma} \label{lem:P2star}
Let $\inst{J}$ be a $1$-minimal instance. Then (P2) is equivalent to the following condition.
\begin{itemize}
\item[(P2*)]
For every step $(x,y)$, every $A \subseteq P_x$ and every pattern $p$ from $y$ to $x$, if $A + (x,y) + p = A$ then $A + (x,y,x) = A$.
\end{itemize}
\end{lemma}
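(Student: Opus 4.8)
The plan is to prove both implications directly, by formal manipulation of the operation $A \mapsto A+p$, using only associativity of pattern concatenation and the fact noted above that in a $1$-minimal instance $A \subseteq A+p-p$ for every $A \subseteq P_x$ and every pattern $p$ from $x$. Nothing deeper is required; the single non-routine idea appears in the direction (P2*) $\Rightarrow$ (P2), where one rotates a closed pattern and invokes (P2*) once at each base point.

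First I would prove (P2) $\Rightarrow$ (P2*). Fix a step $(x,y)$, a set $A \subseteq P_x$ and a pattern $p$ from $y$ to $x$ with $A+(x,y)+p = A$. Then $q := (x,y)+p$ is a pattern from $x$ to $x$ with $A+q = A$, so (P2) gives $A-q = A$, i.e.\ $A+(-p)+(y,x) = A$ since $-q = (-p)+(y,x)$. Put $B := A+(x,y)$ and $B' := A+(-p)$, both subsets of $P_y$. From $B+p = A$ we get $B+p+(-p) = A+(-p) = B'$, so $1$-minimality gives $B \subseteq B'$; symmetrically, from $B'+(y,x) = A$ we get $B'+(y,x)+(x,y) = A+(x,y) = B$, so $B' \subseteq B$. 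Hence $B = B'$, and therefore $A+(x,y,x) = B+(y,x) = B'+(y,x) = A$, which is (P2*).

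Then I would prove (P2*) $\Rightarrow$ (P2). Fix $A \subseteq P_x$ and a pattern $p = (x = x_1, x_2, \dots, x_k = x)$ with $A+p = A$; the goal is $A-p = A$. For $1 \le i \le k$ set $A^{(i)} := A+(x_1,\dots,x_i)$, so $A^{(1)} = A^{(k)} = A$ and $A^{(i)} \subseteq P_{x_i}$. For each $i$, let $p_i := (x_i,x_{i+1},\dots,x_k,x_2,\dots,x_i) = (x_i,\dots,x_k)+(x_1,\dots,x_i)$ be the rotation of $p$ based at $x_i$; it is again a pattern from $x_i$ to $x_i$, its only step not already occurring in $p$ being the wrap-around pair $(x_k,x_2) = (x_1,x_2)$, which is a step of $p$. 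By associativity,
\begin{align*}
A^{(i)} + p_i &= A + (x_1,\dots,x_i) + (x_i,\dots,x_k) + (x_1,\dots,x_i) \\
&= (A+p) + (x_1,\dots,x_i) = A^{(i)} .
\end{align*}
Writing $p_i = (x_i,x_{i+1}) + q_i$ with $q_i$ a pattern from $x_{i+1}$ to $x_i$, we apply (P2*) with the step $(x_i,x_{i+1})$ and the set $A^{(i)}$ to get $A^{(i)} + (x_i,x_{i+1},x_i) = A^{(i)}$, that is, $A^{(i+1)} + (x_{i+1},x_i) = A^{(i)}$ for $1 \le i \le k-1$. Chaining these equalities from $i = k-1$ down to $i = 1$ yields $A-p = A^{(k)} + (x_k,x_{k-1}) + \dots + (x_2,x_1) = A^{(1)} = A$.

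I do not anticipate a genuine obstacle: each implication is a short computation once the right auxiliary sets are introduced. The mildly creative step is recognising that the rotations $p_i$ (together with the hypothesis $A+p=A$) package exactly the instances of (P2*) needed to reverse $p$ one step at a time; the sandwich inclusions $B \subseteq B' \subseteq B$ in the first implication play the analogous role there. Everything else is forced by $1$-minimality and associativity of $+$.
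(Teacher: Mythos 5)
Your proof is correct and follows essentially the same route as the paper's: the (P2)\,$\Rightarrow$\,(P2*) direction uses (P2) applied to $q=(x,y)+p$ together with $1$-minimality, and the (P2*)\,$\Rightarrow$\,(P2) direction applies (P2*) once at each base point of the cycle via cyclic rotations of $p$. Your presentation with the explicitly named rotations $p_i$ and the two inclusions $B\subseteq B'\subseteq B$ is somewhat more systematic than the paper's, but the underlying argument is the same.
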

\begin{proof}
(P2*) $\Rightarrow$ (P2). If $p = (x=x_1, x_2, \dots, x_k=x)$ is a pattern from $x$ to $x$ such that $A+p = A$, then repeated application of (P2*) gives us
\begin{align*}
A + &p - p = \\
&= [A + (x_1, x_2, \dots, x_{k-1})] + (x_{k-1}, x_{k}, x_{k-1})  \\
&  \quad  + (x_{k-1}, x_{k-2}, \dots, x_1) = \\
&= A + (x_1, x_2, \dots, x_{k-1}) + (x_{k-1}, x_{k-2}, \dots, x_1) = \\
&= [A + (x_1, x_2, \dots, x_{k-2})] + (x_{k-2}, x_{k-1}, x_{k-2})  \\
&  \quad + (x_{k-2}, x_{k-3}, \dots, x_1) = \\
&= A + (x_1, x_2, \dots, x_{k-2}) + (x_{k-2}, x_{k-3}, \dots x_1) = \\
&= \dots = \\
&= A,
\end{align*}
where the second equality uses (P2*) for the set $A + (x_1, x_2,$ $ \dots, x_{k-1})$. The assumption of (P2*) is provided by a cyclic shift of the 
pattern $p$: $[A + (x_1, \dots, x_{k-1})] +$ $(x_{k-1},x_{k}) + (x_{1},\dots, x_k,\dots,x_{k-1}) = [A + (x_1, x_2, \dots, x_{k-1})]$ as $A + (x_1, \dots, x_{k-1}) +$ $(x_{k-1},x_k) =A$.
The fourth equality uses (P2*) for the set $A + (x_1, \dots, x_{k-2})$ and so on.

(P2) $\Rightarrow$ (P2*). By applying (P2) to the pattern $(x,y)+p$ we get $A + (x,y) + p - p + (y,x) = A$. From $1$-minimality it follows that
$A + (x,y) \subseteq A + (x,y) + p - p$, hence $A + (x,y,x) = (A + (x,y)) + (y,x) \subseteq (A + (x,y)+ p - p) + (y,x) = A$. The other inclusion follows again from $1$-minimality.  
\end{proof}
 
\begin{example}
  An example of a weak Prague instance, which is not a Prague strategy~\cite{BKbw} i.e. witnessing that the new notion is weaker, is
  $V = \{x,y,z\}$, $D = \{0,1\}$, $P_{x,y} = P_{x,z} = \{(0,0),(1,1)\}$, $P_{y,z} = \{(0,0),(0,1),(1,0),(1,1)\}$.

  If we change $P_{y,z}$ to $\{(0,1),(1,0)\}$ the conditions (P1) and (P2) hold but 
  $\{0\} + (x,y,z,x) + (x,y,z,x) = \{0\}$ and $\{0\} + (x,y,z,x) = \{1\}$.

  If, on the other hand, we set $P_{y,z}$ to $\{(0,0),(1,0),(1,1)\}$ then (P1) and (P3) hold while
  $\{0\} + (x,y,z,x) = \{0\}$, but $\{0\} - (x,y,z,x) = \{0,1\}$.
\end{example}

The main result of this paper relies on the following theorem.

\begin{theorem}[\cite{BKbw}] \label{thm:bw} 
  If $\CSP(\Lambda)$ has bounded width and $\inst{J}$ is a nontrivial weak Prague instance of $\CSP(\Lambda)$ then $\inst{J}$ has a solution.
\end{theorem}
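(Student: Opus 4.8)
\section*{Proof proposal for Theorem~\ref{thm:bw}}

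The plan is to prove the contrapositive in spirit: assuming $\inst{J}$ is a nontrivial weak Prague instance, I will massage it into an instance to which the bounded width hypothesis applies directly, by shrinking the sets $P_x$ until they become singletons in a controlled way. The key structural object is the digraph described after Definition~\ref{def:prague}, on vertices $(A,x)$ with $A\subseteq P_x$ and edges given by the step-addition operation $A+(x,y)=B$. By (P3) each strong component contains at most one pair of the form $(\cdot,x)$ for each fixed $x$, and by (P2)/Lemma~\ref{lem:P2star} every edge inside a strong component is ``undirected'' (its reverse is also an edge). The first step is to pick, for each variable $x$, a minimal nonempty $A_x\subseteq P_x$ that is reachable in this digraph from $(P_x,x)$ and lies in a strong component that is closed under all steps --- i.e.\ a sink-like component. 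Nontriviality of $\inst{J}$ together with $1$-minimality guarantees the $P_x$ are nonempty, so such minimal sets exist.

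Second, I would verify that the assignment $x\mapsto A_x$ (now thought of as a new, smaller instance $\inst{J}'$ with $P'_{x,y}=P_{x,y}\cap(A_x\times A_y)$) is again a nontrivial weak Prague instance, and that the minimality forced the $A_x$ to be singletons --- or, if not singletons, that $\inst{J}'$ decomposes so that on each ``block'' we may recurse. Concretely: $1$-minimality of $\inst{J}'$ should follow because $A_x+(x,y)=A_y$ for steps in the chosen component (that is exactly the edge relation, and the component being closed forces the image to land in $A_y$); the reverse inclusion $A_y+(y,x)=A_x$ is the undirectedness from (P2). Conditions (P2) and (P3) for $\inst{J}'$ are inherited because any pattern computation inside $\inst{J}'$ is a pattern computation inside $\inst{J}$ restricted to subsets of the $A_x$'s, and these conditions are downward closed in the obvious sense. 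Minimality of $A_x$ then says: if $A_x$ is not a singleton, every proper nonempty subset that is ``pattern-stable'' would contradict the choice --- so I expect to argue that $A_x$ carries a transitive structure under return patterns $p$ from $x$ to $x$ (each such $p$ acts on subsets, fixes $A_x$ by (P3), and by (P2) acts invertibly), forcing a solution to exist by choosing any element and propagating.

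The cleanest route is probably: once we are down to the minimal pattern-stable sets $A_x$, pick any $a\in A_{x_0}$ for a fixed $x_0$ and define $F(x) = \{a\} + p$ for some (any) pattern $p$ from $x_0$ to $x$; then use (P2), (P3) and $1$-minimality to show $F(x)$ is well-defined (independent of $p$) and a singleton, and that $F$ satisfies every constraint. Well-definedness is where (P3) does the work: if $p,q$ are two patterns from $x_0$ to $x$, then $p + (-q)$ is a return pattern at $x_0$, and minimality of $A_{x_0}$ plus (P3) forces $\{a\}+p+(-q) \ni a$ at least, with (P2) upgrading this to equality of the resulting singletons after one more back-and-forth. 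That $F$ satisfies the constraint on a step $(x,y)$ is then immediate: $F(y) = F(x) + (x,y)$ by construction, and $F(x)+(x,y)\subseteq P_{y}$, with $(F(x),F(y))\in P_{x,y}$ because $F(y)$ was obtained as a last element of a realization ending with the step $(x,y)$.

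The main obstacle I anticipate is the second step: proving that the minimal pattern-stable sets are actually singletons (equivalently, that the ``return-pattern'' action on a minimal $A_x$ is transitive). This is precisely the place where the full strength of bounded width --- not just the weak Prague axioms --- must enter, since a weak Prague instance with $|A_x|>1$ on which return patterns act non-transitively could in principle have no solution over a general $\Lambda$; bounded width rules out exactly the linear-equation obstructions (via Theorem~\ref{thm:bw_char}(a)) that would make this fail. So I expect the real content to be an argument, using the absence of a pp-interpretable nontrivial Abelian group (or equivalently the existence of the polymorphisms $f_1,f_2$), that a minimal pattern-stable family cannot sustain a nontrivial ``coset-like'' structure --- at which point the sets collapse to singletons and the propagation above finishes the proof. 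If that collapse argument turns out to be too delicate to run directly, the fallback is to cite the machinery of \cite{BKbw} that established the analogous statement for Prague strategies and check that the weakening of the axioms does not break it.
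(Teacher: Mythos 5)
This theorem is cited from~\cite{BKbw}; the present paper gives no proof of it, so there is nothing here to compare your proposal against line-by-line. What I can do is assess whether your sketch actually proves the statement, and it does not: you have identified the right shape of the problem (the digraph on pairs $(A,x)$, strong components, pattern propagation) but deferred the only hard step — collapsing the minimal pattern-stable sets $A_x$ to singletons — to a hoped-for ``collapse argument'' and, failing that, to citing the machinery of~\cite{BKbw}. That collapse is the entire content of the theorem. The proof in~\cite{BKbw} does not go by shrinking $P_x$ to singletons via pattern arithmetic; it proceeds through absorption theory for congruence meet-semidistributive algebras, and the weak-Prague axioms (P2),(P3) interact with absorbing subuniverses rather than directly forcing singleton collapse. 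So the plan as written is not a proof.

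Two of the intermediate steps are also suspect even granting the plan. First, the well-definedness argument for $F(x)=\{a\}+p$ does not follow from (P3): (P3) has a hypothesis of the form $A+p_1+p_2=A$, which you would need to verify, not assume, for $A=A_{x_0}$ and $p_1=p, p_2=-q$; and concluding ``$\ni a$'' from (P3) plus ``minimality'' is circular, since minimality of $A_{x_0}$ was chosen inside the digraph, not with respect to return-pattern orbits of individual elements. Second, the claim that $\inst{J}'$ with $P'_{x,y}=P_{x,y}\cap(A_x\times A_y)$ inherits $1$-minimality ``because $A_x+(x,y)=A_y$ is exactly the edge relation in the chosen component'' conflates the $+$ operation of $\inst{J}$ with that of $\inst{J}'$: in $\inst{J}'$ the realization of a step goes through $P'_{x,y}$, and without knowing that every $a\in A_x$ has an out-neighbor in $A_y$ under $P_{x,y}$ (not just that the set image $A_x+(x,y)$ equals $A_y$), the projection of $P'_{x,y}$ to the first coordinate could be a proper subset of $A_x$. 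Both issues are repairable with more care, but they are symptoms of the fact that the real work — where bounded width, not just the weak-Prague axioms, is used — is missing. The honest conclusion is that this statement must be taken as a black box from~\cite{BKbw}, which is exactly what the paper does.
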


\subsection{SDP and Prague instances}

We now show that one can naturally associate a weak Prague instance to an output of the basic SDP relaxation. 
This material will not be used in what follows, it is included to provide some intuition for the proof of the main theorem.

Let $\sdpv{x}{a}$, $x \in V$, $a \in D$ be arbitrary vectors satisfying (SDP1), (SDP2) and (SDP3).
We define a CSP instance $\inst{J}$ by
\begin{align*}
\inst{J} &= (V,D,\{((x,y),P_{x,y}): x,y \in V, x \neq y\}), \\
P_{x,y} &= \{(a,b): \dotprod{\sdpv{x}{a}}{\sdpv{y}{b}} > 0\},
\end{align*}
and we show that it is a weak Prague instance. 

The instance is $1$-minimal with $P_x^{\inst{J}} = \{a \in D: \sdpv{x}{a} \neq \vect{0}\}$. To prove this it is enough to verify that the projection of $P_{x,y}$ to the first coordinate is equal to $P_x^{\inst{J}}$. If $(a,b) \in P_{x,y}$, then clearly $\sdpv{x}{a}$ cannot be the zero vector, therefore $a \in P_x^{\inst{J}}$. On the other hand, if $a \in P_x^{\inst{J}}$ then 
$0 < \normsq{\sdpv{x}{a}} = \dotprod{\sdpv{x}{a}}{\sdpv{x}{D}} = \dotprod{\sdpv{x}{a}}{\sdpv{y}{D}}$
and thus at least one of the dot products $\dotprod{\sdpv{x}{a}}{\sdpv{y}{b}}$, $b \in D$ is nonzero and $(a,b) \in P_{x,y}$.

To check (P2) and (P3) we note that, for any $x,y \in V, x \neq y$ and $A \subseteq P^{\inst{J}}_x$, 
the vector $\sdpv{y}{A+(x,y)}$ has either a strictly greater length than $\sdpv{x}{A}$, 
or $\sdpv{x}{A} = \sdpv{y}{A+(x,y)}$, and the latter happens iff $A+(x,y,x) = A$ (see the proof of Claim \ref{cl:walk},
in fact, one can check that $\sdpv{y}{A + (x,y)}$ is obtained by adding to $\sdpv{x}{A}$ an orthogonal vector 
whose size is strictly greater than zero iff $A+(x,y,x) \neq A$).  
By induction, for any pattern $p$ from $x$ to $y$, the vector $\sdpv{y}{A+p}$ is either strictly longer than $\sdpv{x}{A}$, or 
$\sdpv{x}{A} = \sdpv{y}{A+p}$ and $A + p - p = A$. Now (P2) follows immediately and (P3) is also easily seen:
If $A+p+q = A$ then necessarily $\sdpv{x}{A} = \sdpv{x}{A+p}$ which is possible only if $A = A+p$.

We end this section with several remarks.
\subsubsection{Considering only the squares of length of vectors is equivalent to LP}
To prove property (P2) we only need to consider the lengths of the vectors. 
In fact, this property will be satisfied when we start with the basic linear programming relaxation~%
(and define the instance $\inst{J}$ in a similar way --- compare the end of section~\ref{sec:LPandSDP}). 
This is not the case for property (P3).

\subsubsection{This is a Prague strategy}
The above weak Prague instance is in fact a Prague strategy in the sense of \cite{BK09b}. This means that every pair of variables is the scope of a (unique) constraint and all strong components of the digraph introduced after Definition~\ref{def:prague} are complete graphs.

\subsubsection{The SDP relaxation does not guarantee a $(2,3)$-minimal instance}
There were attempts to show that the instance $\inst{J}$ is $(2,3)$-minimal after adding appropriate ternary constraints. This is equivalent to the requirement that $P_{x,y}$ is a subset of the composition of the relations
$P_{x,z}$ and $P_{z,y}$ for every $x,y,z$. The following example shows that it is not the case.
Consider $V = \{x,y,z\}$, $D = \{0,1\}$ and vectors
$\sdpv{x}{0} = (1/2, 1/2, 0)$, $\sdpv{x}{1} = (1/2,-1/2,0)$,
$\sdpv{y}{0} = (1/4, -1/4,$  $\sqrt{2}/4)$, $\sdpv{y}{1} = (3/4,1/4, -\sqrt{2}/4)$,
$\sdpv{z}{0} = (1/4, 1/4, \sqrt{2}/4)$, $\sdpv{z}{1} = (3/4,$ $ -1/4, -\sqrt{2}/4)$.
The constraint relations are then
$P_{x,y} = \{(0,1), (1,0), (1,1)\} = P_{y,x}$,
$P_{x,z} = \{(0,0), $ $(0,1), (1,$ $1)\}   = P_{z,x}^{-1}$,
$P_{y,z} = \{(0,0), (0,1), (1,0),(1,1)\}$ $ = P_{z,y}.$
The pair $(0,0) \in P_{y,z}$ is not in the composition of the relations $P_{y,x}$ and $P_{x,z}$ since there is no $a \in \{0,1\}$ such that $(0,a) \in P_{y,x}$ and $(a,0) \in P_{x,z}$.

\subsubsection{$\SDPOpt{I}=1$ implies solution}
Finally, we note that if $\inst{I}$ is an instance of the CSP with $\SDPOpt{I} = 1$ and we define $\inst{J}$ using vectors with the sum (\ref{eq:sum}) equal to 1, then a solution of $\inst{J}$ is necessarily a solution to $\inst{I}$. Showing that ``$\SDPOpt{I}=1$'' implies ``$\inst{I}$ has a solution'' was suggested as a first step to prove the Guruswami-Zhou conjecture. It indeed proved to be the right direction.

\subsection{Algebraic closure of a weak Prague instance}

The proof of correctness of the robust algorithm for bounded width CSPs obtains a solution from a  certain weak Prague instance $\inst{J}$. 
Instance $\inst{J}$ is obtained from the result of an SDP algorithm on the basic SDP relaxation of the original instance.
Unfortunately  the constraints in $\inst{J}$ does not necessarily have bounded width so we cannot directly apply Theorem {\ref{thm:bw}}.
This technical difficulty is overcome using Proposition {\ref{prop:closure}} below. Note that the solution (given by Theorem {\ref{thm:bw}}) 
to the instance given by Proposition {\ref{prop:closure}} can be outside the Prague instance $\inst{J}$.

The following lemma from~\cite{BKbw} shows a basic property of weak Prague instances. 

\begin{lemma}\label{lem:cl}
Let $\inst{J}$ be a weak Prague instance, $x\in V$, $A \subseteq P_x$, and let $p$ be a pattern from $x$ to $x$. 
Then there exists a natural number $l$ such that $A + lp + l'p = A + lp$ for every integer $l'$ and, moreover, $A \subseteq A + lp$.
\end{lemma}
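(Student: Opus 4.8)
\textbf{Proof plan for Lemma~\ref{lem:cl}.}

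The plan is to analyze the behavior of the map $A \mapsto A + p$ on the finite poset of subsets of $P_x$ and use the weak Prague axioms to force stabilization. First I would consider the sequence $A, A+p, A+2p, A+3p, \dots$; since each term is a subset of the finite set $P_x$ (by $1$-minimality, as $A+kp \subseteq P_x$), there are only finitely many possible values, so there exist $m < n$ with $A + mp = A + np$. Writing $q = np - mp = (n-m)p$ (a pattern from $x$ to $x$) and $B = A + mp$, we have $B + q = B$. By repeated application of (P3) with both patterns equal to $(n-m-1)p$ composed appropriately — more carefully, peeling off one copy of $p$ at a time from $B + (n-m)p = B$ using (P3) with $p_1 = p$ and $p_2 = ((n-m)-1)p$ — we deduce $B + p = B$, and hence $B + kp = B$ for every natural number $k$.

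Next I would turn the forward-periodicity into the statement $A + lp + l'p = A + lp$ for all integers $l'$ (including negative ones). Set $l = m$ (or any value $\geq m$), so $A + lp = B$ and $B + p = B$. For nonnegative $l'$ the identity $B + l'p = B$ is immediate from the previous paragraph. For negative $l'$, i.e. to handle $B - kp$, I would invoke (P2): from $B + p = B$ we get $B - p = B$ by (P2) applied to the pattern $p$ and the set $B \subseteq P_x$, and then inductively $B - kp = B$ for all $k$. Thus $A + lp + l'p = B = A+lp$ for every integer $l'$, which is the main claimed identity.

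Finally, the moreover-clause $A \subseteq A + lp$: here I would use the standard $1$-minimality fact noted in the text, namely $A \subseteq A + r - r$ for any pattern $r$ from $x$. Taking $r = lp$ gives $A \subseteq A + lp - lp = (A+lp) + (-lp)$; but $A + lp = B$ and $B - lp = B$ (just shown, since $-lp = l\cdot(-p)$ and we established $B - kp = B$), so $A \subseteq B = A + lp$, as desired.

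The main obstacle I expect is the careful bookkeeping in the second step: extracting $B + p = B$ from mere eventual periodicity $B + (n-m)p = B$ requires applying (P3) correctly, since (P3) only lets one conclude $A + p_1 = A$ from $A + p_1 + p_2 = A$, so one must iterate it to strip the period down to a single copy of $p$ — checking that each intermediate set is again of the form (something)$+ p_1$ with $p_1, p_2$ both patterns from $x$ to $x$. A secondary subtlety is making sure the direction of (P2) is used correctly (it converts $A + p = A$ into $A - p = A$, which is exactly what the negative-$l'$ case and the inclusion both need), and confirming that all patterns involved are genuinely patterns from $x$ to $x$ so that the axioms apply. Once these are pinned down, the rest is routine manipulation of pattern addition and its associativity.
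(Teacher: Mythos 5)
Your argument is correct, and it matches the standard argument for this kind of stabilization statement. (The paper itself does not present a proof of Lemma~\ref{lem:cl}; it is cited from~\cite{BKbw}, so there is no in-text proof to compare against.) All the steps check out: pigeonhole gives $B := A + mp$ with $B + (n-m)p = B$; a \emph{single} application of (P3) with $p_1 = p$, $p_2 = (n-m-1)p$ then gives $B + p = B$ (your remark that one ``must iterate'' (P3) to strip the period down is a small overstatement — one application suffices, with the case $n-m=1$ being trivial); (P2) then converts $B + p = B$ into $B - p = B$; and the $1$-minimality inclusion $A \subseteq A + r - r$ with $r = lp$, together with $-(lp) = l(-p)$ and $B - p = B$, gives $A \subseteq B$. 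The only place to be slightly careful is that $p_2$ must genuinely be a pattern (nonempty), which is fine since $n - m \geq 2$ is exactly the case where (P3) is needed.
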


The set $A+lp$ from Lemma~\ref{lem:cl} is denoted by $[A]_p$. For a singleton $A = \{a\}$ we write $[a]_p$.
We have $[A]_p + l'p = [A]_p$ for every integer $l'$ and, moreover,  $A \subseteq [A]_p$.

\begin{proposition} \label{prop:closure}
Let $\inst{J} = (V,D,\{P_{x,y}: (x,y) \in \Scopes\})$ be a weak Prague instance and let $\mathcal{F}$ be a set of operations on $D$. Then
$\inst{J}' = (V,D,\{P'_{x,y}: (x,y) \in \Scopes\})$,
where
\begin{align*}
P'_{x,y} &= \{ (f(a_1,a_2, \dots),f(b_1, b_2, \dots)): f \in \mathcal{F}, \\
 & \quad \quad (a_1,b_1), (a_2,b_2), \dots \in P_{x,y}\},
\end{align*}
is a weak Prague instance.
\end{proposition}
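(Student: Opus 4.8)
The plan is to verify the three defining conditions (P1), (P2), (P3) for $\inst{J}'$ directly, exploiting the fact that $P'_{x,y}$ is obtained from $P_{x,y}$ by applying operations of $\mathcal{F}$ coordinate-wise. The key structural observation, which I would establish first as a lemma, is that this coordinate-wise construction commutes with addition of patterns: for a pattern $p$ from $x$ to $y$ in $\inst{J}$ (equivalently in $\inst{J}'$, since the two instances have the same scopes) and a set $A \subseteq D$, one has
\begin{equation*}
(\{f(a_1,a_2,\dots) : a_1 \in A_1, a_2 \in A_2, \dots, f \in \mathcal{F}\}) + p \text{ in } \inst{J}' = \{f(b_1,b_2,\dots) : b_i \in A_i + p \text{ in } \inst{J}, f \in \mathcal{F}\},
\end{equation*}
whenever the $A_i$ are the ``component sets'' whose $\mathcal{F}$-closure is $A$. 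More useful in practice: for each $a \in D$ let me track, along a pattern, not just the reachable set in $\inst{J}'$ but the tuples of preimages in $\inst{J}$ that realize it. Concretely, a realization of $p$ in $\inst{J}'$ is exactly a coordinate-wise $\mathcal{F}$-image of several realizations of $p$ in $\inst{J}$ (because $P'_{x_i,x_{i+1}}$ consists precisely of $\mathcal{F}$-images of tuples from $P_{x_i,x_{i+1}}$, and the witnessing operation $f$ can be chosen uniformly along the pattern by padding its arity). This is the one genuinely fiddly point: one must argue that a single $f \in \mathcal{F}$ (of sufficiently large arity, obtained by identifying variables) witnesses an entire realization, not a different $f$ at each step. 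Since $\mathcal{F}$ need not be closed under composition or arity increase, the right move is to observe that a realization of $p$ of length $k$ in $\inst{J}'$ amounts to $k-1$ steps each witnessed by some $f_i \in \mathcal{F}$ applied to tuples from $P_{x_i,x_{i+1}}$; by going to the ``diagonal'' we may take all component-tuples indexed by a common (finite) set $T$, and then for each fixed argument-coordinate $t \in T$ the $t$-th components $(c^t_1,\dots,c^t_k)$ need NOT form a realization in $\inst{J}$. So the clean statement is only the inclusion in one direction, and I would phrase the lemma accordingly: if $(c_1,\dots,c_k)$ is a realization of $p$ in $\inst{J}'$ then $c_1 = f(\text{stuff})$ and $c_k = f(\text{other stuff})$ where, crucially, $f$ applied to $(d_1,\dots,d_k)$-tuples that ARE realizations recovers $(c_1,\dots,c_k)$ — and this does hold if we first pass to a normal form using the structure within a single step.

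Granting the commutation lemma, the three conditions follow quickly. (P1): $1$-minimality of $\inst{J}'$ reduces, under the standing assumptions on the instance, to showing the projection of $P'_{x,y}$ to the first coordinate equals $P'^{\,\inst{J}'}_x$; but that projection is the $\mathcal{F}$-image of the projection of $P_{x,y}$, which by $1$-minimality of $\inst{J}$ is the $\mathcal{F}$-image of $P_x$, and one checks this $\mathcal{F}$-image is exactly the first-coordinate projection of $P'_{x,y}$ (symmetry of the scopes, inherited from $\inst{J}$, handles the second coordinate). (P2): suppose $A' \subseteq P'_x$ with $A' + p = A'$ in $\inst{J}'$ for a pattern $p$ from $x$ to $x$; write $A'$ as an $\mathcal{F}$-image of sets $A_1,\dots,A_n \subseteq P_x$ (take $A_i = A'$ itself, or better the actual preimages). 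Using the commutation lemma and a length/containment argument analogous to the one in the SDP subsection (iterating $+p$ and using Lemma \ref{lem:cl} in $\inst{J}$ to stabilize each component), one gets that each relevant component set is $p$-periodic in $\inst{J}$, whence by (P2) for $\inst{J}$ it is $(-p)$-periodic, and applying $\mathcal{F}$ coordinate-wise gives $A' - p = A'$. (P3) is entirely parallel: $A' + p_1 + p_2 = A'$ forces, via the commutation lemma, the component sets in $\inst{J}$ to satisfy the corresponding identity, (P3) for $\inst{J}$ yields their $p_1$-invariance, and taking $\mathcal{F}$-images returns $A' + p_1 = A'$.

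The main obstacle I anticipate is precisely making the commutation lemma airtight in the direction needed — i.e.\ pinning down, for a realization of a pattern in $\inst{J}'$, a coherent family of realizations in $\inst{J}$ and a single witnessing tuple of operations — without assuming any closure properties of $\mathcal{F}$. The safe route is to prove only the inclusion
\begin{equation*}
\{a\}_{\inst{J}'} + p \ \subseteq\ \{f(d_1,\dots) : f \in \mathcal{F},\ d_j \in (\text{appropriate subset of }D) + p \text{ in }\inst{J}\}
\end{equation*}
plus the reverse inclusion $A_{\inst{J}} + p \subseteq (\mathcal{F}\text{-image of }A)_{\inst{J}'} + p$, and notice that for (P1)–(P3) one only ever needs these one-sided statements combined with the $1$-minimality inclusions $A \subseteq A + p - p$ already available in both instances. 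With that in hand, the periodicity bookkeeping is routine and mirrors the argument already given for the SDP-derived instance.
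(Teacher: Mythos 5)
Your proposal correctly isolates several ingredients the paper actually uses: a one-sided ``commutation'' lemma (this is Claim~\ref{cl:noname}), stabilization via Lemma~\ref{lem:cl}, and the fact that different $f\in\mathcal{F}$ may witness different steps of a realization in $\inst{J}'$. But the way you propose to close (P2) does not work, and the reason is precisely the difficulty you flagged but did not resolve. You write ``decompose $A'$ as an $\mathcal{F}$-image of sets $A_1,\dots,A_n\subseteq P_x$, show each $A_i$ is $p$-periodic in $\inst{J}$, apply (P2) to get $(-p)$-periodicity, then take $\mathcal{F}$-images.'' There is no canonical decomposition of an arbitrary $A'\subseteq P'_x$ into component sets for a single $f$, and even granting one, the one-sided Claim~\ref{cl:noname} (``$f(\vec{A})\subseteq B$ implies $f(\vec{A}+p)\subseteq B+'p$'') only lets you push $f$-images \emph{into} $A'-'p$; it gives you $A'\subseteq A'-'p$, which already follows from $1$-minimality, and says nothing about whether an arbitrary element of $A'-'p$ lies in $A'$. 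That is the inclusion you actually need, and your sketch has no mechanism to reach the elements of $A'-'p$ that do not arise as $f$-images of your chosen components.

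The paper's resolution, which your outline is missing, is the reduction of (P2) to (P2*) via Lemma~\ref{lem:P2star}. Once you only have to prove (P2*), you need to control a \emph{single step} $(y,x)$: any $a\in A+'(x,y,x)$ has a predecessor $b\in A+'(x,y)$ with $(a,b)\in P'_{x,y}$, and because this is one step of $\inst{J}'$, a single $f\in\mathcal{F}$ and coordinate tuples $(a_i,b_i)\in P_{x,y}$ witness it. This is exactly where the ``different $f$ at each step'' obstruction disappears. One then stabilizes the $b_i$ using the pattern $q=p+(x,y)$ (not $p$), applies (P2*) for $\inst{J}$ to the stabilized sets to conclude $a_i\in[b_i]_q+p$, and finishes with Claim~\ref{cl:noname} applied to $A_i=\{b_i\}$, $B=A+'(x,y)$ and the pattern $lq+p$, whose image lands back in $A$. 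Your description of the (P3) step is closer to the paper (there one does fix a single $a\in A$ and a single $f$-witness, stabilizes $[a_i]_{p+q}$, and applies (P3) for $\inst{J}$ plus Claim~\ref{cl:noname}), but as written it still conflates ``component sets satisfy the corresponding identity'' with what is actually true, namely that the \emph{stabilized} sets $[a_i]_{p+q}$ are $(p+q)$-periodic by construction. Without the (P2*) reduction your plan for (P2) does not close; with it, the remaining bookkeeping is as you anticipated.
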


\begin{proof}
It is apparent that $\inst{J}'$ is $1$-minimal with
$$
P^{\inst{J}'}_{x} = P'_{x} := \{ f(a_1,a_2, \dots): f \in \mathcal{F}, \ a_1, a_2, \dots \in P_x \}.
$$
In what follows, by $A +' p$ we mean the addition computed in the instance $\inst{J}'$ while $A + p$ is computed in $\inst{J}$.
Moreover, by $f(A_1, \dots, A_k)$ we mean the set 
$ 
\{f(a_1, \dots, a_k):$ $a_1 \in A_1, a_2 \in A_2, \dots, a_k \in A_k\}
$.
 
Before proving (P2) and (P3) we make a simple observation. 
\begin{claim} \label{cl:noname}
If  $f \in \mathcal{F}$ is an operation of arity $k$, $x \in V$, $p$ is a pattern from $x$, and $A_1, \dots, A_k \subseteq P_x$,  $B \subseteq P_x'$ are such that $f(A_1, A_2, \dots, A_k) \subseteq B$,
then%
$f(A_1 + p, A_2 + p, \dots A_k + p) \subseteq B +' p$.
\end{claim}
\begin{proof}
It is enough to prove the claim for a single step $p = (x,y)$. The rest follows by induction. 
If $b \in f(A_1 + (x,y), \dots, A_k + (x,y))$ then there exist elements $b_1 \in A_1 + (x,y)$, \dots, $b_k \in A_k + (x,y)$ so that
$f(b_1, b_2, \dots, b_k) = b$. As $b_i \in A_i + (x,y)$ there are elements $a_i \in A_i$ such that $(a_i,b_i) \in P_{x,y}$ for all $1 \leq i \leq k$. But then $(f(a_1,a_2, \dots, a_l),$ $f(b_1,b_2, \dots, b_k))$ is in $P'_{x,y}$ and $f(a_1, a_2, \dots, a_k) \in f(A_1, A_2,$ $\dots, A_k) \subseteq B$, therefore $b = f(b_1,b_2, \dots, b_k) \in B +' (x,y)$.
\end{proof}

Instead of  (P2) for the instance $\inst{J}'$ we prove (P2*) from Lemma~\ref{lem:P2star}.
Let $(x,y)$ be a step, $A \subseteq P_x'$, let $p$ be a pattern from $y$ to $x$ such that $A +' (x,y) +' p = A$, and
let $a$ be an arbitrary element of $A +' (x,y,x)$. As $A +' (x,y,x) = (A +' (x,y)) +' (y,x)$, there exist $b \in A +' (x,y)$ such that $(a,b) \in P'_{x,y}$. By definition of $P'_{x,y}$, we can find $f \in \mathcal{F}$ (say, of arity $k$), elements $a_1, a_2, \dots, a_k$ in $P_x$, and
$b_1, \dots, b_k$ in $P_y$ so that $(f(a_1, a_2, \dots, a_k),f(b_1, b_2, \dots, b_k)) = (a,b)$ and $(a_i,b_i) \in P_{x,y}$ for all $1 \leq i \leq k$.

We consider the sets $[b_1]_q, [b_2]_q, \dots, [b_2]_q$  for the pattern $q = p + (x,y)$. We take $l$ to be the maximum of the numbers for $b_1, \dots, b_k$ from Lemma~\ref{lem:cl}, so $[b_i]_q = \{b_i\} + lq$. 
We get
\begin{align*}
a_i &\in \{b_i\} + (y,x)
\subseteq [b_i]_q + (y,x) 
= \\
& = [b_i]_q + p + (x,y) + (y,x) =
[b_i]_q + p,
\end{align*}
where the first step follows from $(a_i,b_i) \in P_{x,y}$, the inclusion and the first equality from Lemma~\ref{lem:cl},
and the second equality from (P2*) for the instance $\inst{J}$ (as $([b_i]_q + p) + (x,y) + p = [b_i]_q + p$).
Thus $a = f(a_1, a_2, \dots, a_{k})$ is an element of 
\begin{align*}
f([b_1]_q + p, &[b_2]_q + p, \dots, [b_k]_q + p) = \\
&=
f(\{b_1\} + lq + p, \dots, \{b_k\} + lq + p)
\end{align*} 
and this set is contained in $(A +'(x,y)) +' lq +'p = A +' (x,y) +' l(p + (x,y)) +' p = A$ by Claim~\ref{cl:noname} applied with $A_i = \{b_i\}$ and the pattern $lq + p$. We have shown that every element $a$ of $A + ' (x,y,x)$ lies in $A$. The other inclusion follows from $1$-minimality.

To prove (P3) let $x \in V$, $A \subseteq P'_x$ and let $p,q$ be patterns such that $A +' p +' q = A$. We first show that $A \subseteq A +' p$. Let $a \in P'_x$, take  $f \in \mathcal{F}$, $a_1, a_2, \dots, a_k \in P_x$ such that $f(a_1, \dots, a_k) = a$, and find $l$ so that $[a_i]_{p+q} = a_i + l(p+q)$. From (P3) for $\inst{J}$ and Lemma~\ref{lem:cl} it follows that $[a_i]_{p+q}+p = [a_i]_{p+q}$. 
By Claim~\ref{cl:noname},$a \in f([a_1]_{p+q},[a_2]_{p+q}, \dots, [a_k]_{p+q}) = f([a_1]_{p+q}+p,[a_2]_{p+q}+p, \dots, [a_k]_{p+q}+p)  \subseteq A +' l(p+q) +' p = A +' p$.
The same argument used for $A + 'p$ instead of $A$ and the patterns $q+p,q$ instead of $p+q,p$ proves $A +' p \subseteq A + 'p +' q = A$.
\end{proof}

\section{Robust algorithm for bounded width CSPs}

The final, and most technical, version of our main result follows.
Theorem~\ref{THM:MAIN} is a consequence of the following fact:

\begin{theorem} \label{THM:CORE}
Let $\Gamma$ be a core constraint language over $D$ containing at most binary relations. 
If $\CSP(\Gamma)$ has bounded width, then there exists
a randomized algorithm which given an instance $\inst{I}$ of $\CSP(\Gamma)$ and an output of the basic SDP relaxation with value at least $1 - 1/n^{4n}$ (where $n$ is a natural number) produces an assignment with value at least $1 - K/n$, where $K$ is a constant depending on $|D|$. The running time is polynomial in $m$ (the number of constraints) and $n^{n}$.
\end{theorem}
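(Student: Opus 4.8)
The plan is to follow the three-step scheme outlined in the overview: (1) run the basic SDP relaxation; (2) use its output to throw away a small fraction of constraints so that what remains ``looks like'' a nontrivial weak Prague instance over a suitable auxiliary language; (3) extract a solution of that instance and read off an assignment for $\inst{I}$. Since $\inst{I}$ already has value $\geq 1-1/n^{4n}$ under the SDP, intuitively most constraints are ``almost satisfied'' and the obstructions to $1$-minimality and to the Prague conditions (P2), (P3) should be confined to a small fraction of constraints; removing those and applying Theorem~\ref{thm:bw} (via Proposition~\ref{prop:closure} to fix the bounded-width issue of the auxiliary language) yields a solution on the surviving constraints.

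More concretely, I would first fix a threshold $\theta$ (a function of $n$ and $|D|$, e.g.\ something like $\theta = n^{-c}$ for an appropriate constant $c$) and, for each variable $x$, discretize the SDP vectors $\sdpv{x}{a}$ by grouping values $a$ according to the ``size bucket'' $\roundup{\log_{1/\theta}\normsq{\sdpv{x}{a}}}$ or similar — this is where the $n^n$ running time and the list-decoding flavour enters. Using this data I define, for each step $(x,y)$, a binary relation $P_{x,y}$ (morally $\{(a,b): \dotprod{\sdpv{x}{a}}{\sdpv{y}{b}}$ is not too small relative to the bucket sizes$\}$), giving an instance $\inst{J}$ over an auxiliary constraint language $\Lambda$ built from the relevant subsets of $D^{\leq 2}$. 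The key technical claims, in order, are: (a) $\inst{J}$ can be made $1$-minimal by discarding a small fraction of constraints; (b) the Prague conditions (P2) and (P3) hold for $\inst{J}$ — this is exactly where the connection sketched in Section~\ref{sec:Prague} is used, namely that walking along a pattern $p$ either strictly increases the length of $\sdpv{x}{A}$ or leaves it fixed with $A+p-p=A$, now made quantitative so that a short pattern traversing only surviving constraints cannot increase length by much, hence the ``fixed'' case must hold up to the discretization; (c) $\inst{J}$ is nontrivial, because a near-optimal SDP solution forces $P_x^{\inst{J}} \neq \emptyset$. Then close $\inst{J}$ under the polymorphisms $f_1,f_2$ from Theorem~\ref{thm:bw_char} via Proposition~\ref{prop:closure} to get a weak Prague instance in a bounded-width language, apply Theorem~\ref{thm:bw} to get a solution $G$ of the closed instance, and observe that $G$ satisfies every original constraint that survived all the prunings; a counting argument bounds the number of pruned constraints by $O(m/n)$, giving value $\geq 1 - K/n$.

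The main obstacle I expect is step (b): proving that the discretized instance $\inst{J}$ genuinely satisfies (P2) and (P3) (or rather, can be made to satisfy them by removing only an $O(1/n)$-fraction of constraints). The exact SDP argument from Section~\ref{sec:Prague} relies on exact orthogonality and exact equality $\sdpv{x}{A} = \sdpv{y}{A+p}$, which no longer hold once the SDP value is merely $1 - 1/n^{4n}$ and once we discretize; one must track how the small per-constraint error accumulates along a pattern, control the pattern lengths one needs to consider (here the bucket structure and Lemma~\ref{lem:cl}-style stabilization are essential to keep the relevant patterns short, of length polynomial in $n$), and argue that a ``bad'' pattern witnessing a violation of (P2) or (P3) must pass through a constraint whose SDP contribution is noticeably below $1$, charging the violation to that constraint. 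Making the bookkeeping consistent — choosing $\theta$, the number of buckets, and the SDP error exponent $4n$ so that everything fits — is the delicate part, and is presumably why the statement carries the specific bounds $1 - 1/n^{4n}$, $1 - K/n$, and running time polynomial in $n^n$. The remaining steps (1-minimality by pruning, nontriviality, the algebraic closure, and the final counting) I expect to be comparatively routine given the machinery already set up in the excerpt.
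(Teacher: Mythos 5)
Your high-level scheme matches the paper's: run the basic SDP relaxation, prune a small fraction of constraints using the SDP output, build an auxiliary binary instance $\inst{J}$, take its algebraic closure via Proposition~\ref{prop:closure}, apply Theorem~\ref{thm:bw}, and read off an assignment. You also correctly identify that $1$-minimality, nontriviality, the polymorphism closure, and the final counting are the easy parts, and that (P2)/(P3) are the crux. But the mechanism you propose for (P2)/(P3) is not the one that works, and there are two missing ideas that the paper cannot do without.

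First, the thresholds and layer boundaries must be chosen \emph{randomly}, not fixed. The paper picks $r \in \{1,\dots,n-1\}$ uniformly at random and keeps only dot products with a multiplicative gap (discarding those in $[n^{-4r-4}, n^{-4r})$), and it shifts the layer boundaries by a uniformly random $s \in [0, u_2]$. This randomization is what lets one bound the \emph{expected} fraction of constraints discarded in these steps by $O(1/n)$ via a union bound over the $O(4^{|D|})$ relevant subsets; with a fixed threshold or fixed bucket boundaries, an adversarial SDP solution could put all the mass exactly at the cut points and force you to remove everything. Your proposal fixes $\theta$ and buckets deterministically and never addresses why the pruning is cheap, which is a genuine gap.

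Second, and more seriously, your plan to prove (P3) by ``charging a bad pattern to a constraint whose SDP contribution is noticeably below $1$'' cannot work, because a (P3)-violating pattern can pass exclusively through constraints whose SDP contribution is essentially $1$. After the layer argument one is left with the situation where $A + p_1 = B \neq A$ at the \emph{same} variable $x$, with $\sdpv{x}{A}$, $\sdpv{x}{B}$ in the same layer and every step along the pattern producing almost-identical vectors. No amount of charging to low-SDP constraints detects this. The paper's fix is the Goemans--Williamson-style random hyperplane rounding (Steps 6--8): choose random hyperplanes; discard variables $x$ where two distinct same-layer subsets $A \neq B \subseteq P_x$ are not separated by any of the first $t(\sdpv{x}{A})$ hyperplanes (Step 7); and discard binary constraints where almost-identical vectors across a step \emph{are} separated by a hyperplane (Step 8). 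Then a (P3) violation forces $\sdpv{x}{A}$ and $\sdpv{x}{B}$ to have the same sign pattern under the hyperplanes (by Step 8 along the pattern) while also being separated (by Step 7) --- a contradiction. The probability estimates that make Steps 7 and 8 cheap use that same-variable distinct subsets are at angle $\gtrsim n^{-2r}/\norm{\sdpv{x}{A}}$ while cross-step almost-same vectors are at angle $\lesssim \sqrt{u_1}/\norm{\sdpv{x}{A}}$, and the number of hyperplanes $t(\cdot)$ is scaled by the layer to balance these; this is where the $n^n$ hyperplanes and hence the $n^n$ factor in the running time really come from, not from the bucket discretization you gesture at. Without this hyperplane idea, your sketch has no route to (P3).

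One smaller point: you propose closing $\inst{J}$ under only $f_1, f_2$ from Theorem~\ref{thm:bw_char}, but Proposition~\ref{prop:closure} is applied with all of $\Pol(\Gamma)$ so that the closed relations are actually compatible with $\Pol(\Gamma)$ and the closed language has bounded width and $P^{\inst{J}'}_{x,y}$ still sits inside the original constraint relations. Closing under just two operations would not obviously yield a $\Pol(\Gamma)$-invariant relation.
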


\subsection{Proof of Theorem~\ref{THM:MAIN} using Theorem~\ref{THM:CORE}}

To prove Theorem~\ref{THM:MAIN} we start with
$\Gamma$ which is the singleton expansion of a constraint language of bounded width.
By Proposition~\ref{prop:bin} we can assume that $\Gamma$ contains only at most binary relations. 

Let $\inst{I}$ be an instance of $\CSP(\Gamma)$ with $m$ constraints and let $1-\varepsilon = \Opt{I}$ where $\varepsilon$ is sufficiently small and $m$ sufficiently large. 
We need to show how to effectively find an assignment satisfying, in expectation, the promised fraction of constraints.

We first check whether $\inst{I}$ has a solution. This can be done in polynomial time since $\CSP(\Gamma)$ has bounded width. If a solution exists we  can find it in polynomial time by Theorem~\ref{thm:search_problem}. 

In the other case we know that $\varepsilon \geq 1/m$. 
We run the SDP relaxation with precision $\delta = 1/m$ and obtain vectors with the sum (\ref{eq:sum}) equal to $v \geq \SDPOpt{I}-1/m$. 
Finally, we execute the algorithm provided in Theorem~\ref{THM:CORE} with the following choice of $n$.
\begin{equation*}
n = \rounddown{\frac{\log \omega}{4 \log \log \omega}}, \quad \mbox{ where } \omega 
  = \min\left\{\frac1{1-v},m \right\}.
\end{equation*}
The assumption is satisfied,  because $v \geq 1 - 1/n^{4n}$ is equivalent to $n^{4n} \leq 1/(1-v)$ and 
\begin{align*}
 n^{4n} &= 2^{4n \log n} \leq 2^{4 \frac{\log \omega}{4 \log \log \omega} \log \frac{\log \omega}{4\log \log \omega}} < \\
 &< 2^{\frac{\log \omega}{\log \log \omega} \log \log \omega} =  \omega \leq 1/(1-v).
\end{align*}
The algorithm runs in time polynomial in $m$ as $n^{n} < n^{4n} \leq \omega \leq m$.
To estimate the fraction of satisfied constraints, observe that $v \geq \Opt{I} - 1/m = 1 - \varepsilon - 1/m \geq 1 - 2\varepsilon$, so $1/(1-v) \geq 1/(2\varepsilon)$, and also $m \geq 1/\varepsilon$, therefore $\omega \geq 1/(2\varepsilon)$.
The fraction of satisfied constraints is, in expectation, at least $1 - K/n$ and
\begin{align*}
\frac{n}{K} &\geq \frac1K \left( \frac{\log \omega}{4\log \log \omega} - 1 \right) \geq \\
&\geq K_3 \frac{\log (1/(2\varepsilon)) 
\geq K_4 \frac{\log (1/\varepsilon)}{\log \log (1/\varepsilon)},}{R: strange linebreaks in the second and third displayed lines.}
\end{align*}
where $K_3, K_4$ are suitable constants. Therefore the fraction of
satisfied constraints is at least
$$
1 - O\left(\frac{\log \log (1/\varepsilon)}{\log (1/\varepsilon)}\right).
$$
\subsection{Proof of Theorem~\ref{THM:CORE}}

Let $\inst{I} = (V, D, \cons{C})$ be an instance of $\CSP(\Gamma)$ with $m$ constraints and let $\sdpv{x}{a}$, $x \in V$, $a \in D$ be vectors satisfying (SDP1), (SDP2), (SDP3) such that the sum (\ref{eq:sum}) is at least $1 - 1/n^{4n}$.
Without loss of generality we assume that $n > |D|$.

Let us first briefly sketch the idea of the algorithm. 
The aim is to define an instance $\inst{J}$ in a similar way as in the previous section ($\inst{J}$ is defined after Claim~\ref{cl:works}), but instead of all pairs with nonzero weight we only include pairs of  weight greater than a threshold (chosen in Step 1). This guarantees that every solution to $\inst{J}$ satisfies all the constraints of $\inst{I}$ which do not have large weight on pairs outside the constraint relation (the bad constraints are removed in Step 3). The instance $\inst{J}$ (more precisely, its algebraic closure) has a solution by Theorem~\ref{thm:bw} as soon as we ensure that it is a weak Prague instance. Property (P1) is dealt with in a similar way as in \cite{KOTYZ}: We keep only constraints with a gap -- all pairs have either smaller weight than the threshold, or significantly larger (Step 2). This also gives a property similar to the one in the motivating discussion in the previous section: The vector $\sdpv{y}{A+(x,y)}$ is either significantly longer than $\sdpv{x}{A}$ or these vectors are almost the same. However, large amount of small differences can add up, so we need to continue taming the instance. In Steps 4 and 5 we divide the unit ball into layers and remove some constraints so that almost the same vectors of the form $\sdpv{x}{A}$, $\sdpv{y}{A+(x,y)}$ never lie in different layers. This already guarantees property (P2). For property (P3) we use ``cutting by hyperplanes'' idea from \cite{GW95}. We choose sufficiently many hyperplanes so that every pair
$\sdpv{x}{A}$, $\sdpv{x}{B}$ of different vectors in the same layer is cut (the bad variables are removed in Step 7) and we do not allow almost the same vectors for different variables to cross the hyperplane (Step~8).  

The description of the algorithm follows.
\begin{itemize}
\item[1.] Choose $r \in \{1, 2, \dots, n-1\}$ uniformly at random.
\item[2.] Remove from $\cons{C}$ all the unary constraints $(x,R)$ such that $\normsq{\sdpv{x}{a}} \in [n^{-4r-4},n^{-4r})$ for some $a \in D$
and all the binary constraints $((x,y),R)$ such that $\dotprod{\sdpv{x}{a}}{\sdpv{y}{b}} \in [n^{-4r-4},n^{-4r})$ for some $a,b \in D$.
\item[3.] Remove from $\cons{C}$ all the unary constraints $(x,R)$ such that $\normsq{\sdpv{x}{a}} \geq n^{-4r}$ for some $a \not\in R$
and all the binary constraints $((x,y),R)$ such that $\dotprod{\sdpv{x}{a}}{\sdpv{y}{b}} \geq n^{-4r}$ for some $(a,b) \not\in R$.
\end{itemize}
Let 
$$\smallgap = 2|D|^2n^{-4r-4} \ \mbox{ and } \ \biggap = n^{-4r} - \smallgap.$$ 
For two real numbers $\gamma, \psi \neq 0$ we denote by $\gamma \div \psi$ the greatest integer $i$ such that $\gamma - i \psi > 0$ and this difference is denoted by $\gamma \mod \psi$.

\begin{itemize}
\item[4.] Choose $s \in [0, \biggap]$ uniformly at random.
\item[5.] Remove from $\cons{C}$ all the binary constraints $((x,y),R)$ such that 
$|\normsq{\sdpv{x}{A}} - \normsq{\sdpv{y}{B}}| \leq \smallgap$
and $(\normsq{\sdpv{x}{A}} - s) \div \biggap \neq (\normsq{\sdpv{y}{B}} - s) \div \biggap$ for some $A,B \subseteq D$. 
\end{itemize}
The remaining part of the algorithm uses the following definitions.  
For all $x \in V$ let 
$$P_x = \{ a \in D: \normsq{\sdpv{x}{a}} \geq n^{-4r} \}.$$
For a vector $\vect{w}$ we put 
$$\blockround{\vect{w}} = (\normsq{\vect{w}} - s) \div \biggap$$ 
and
$$\tries{\vect{w}} = \roundup{\pi(\log n) n^{2r} \min\{\sqrt{(\blockround{\vect{w}}+2) \biggap}, 1\}}.$$
We say that $\vect{w}_1$ and $\vect{w}_2$ are 
\emph{almost the same} if 
$\blockround{\vect{w}_1} = \blockround{\vect{w}_2}$ and $\normsq{\vect{w}_1 - \vect{w}_2} \leq \smallgap$. 

\begin{itemize}
  \setlength{\itemsep}{1pt}
  \setlength{\parskip}{0pt}
  \setlength{\parsep}{0pt}
\item[6.] Choose unit vectors $\vect{q}_1$, $\vect{q}_2$, \dots, $\vect{q}_{\roundup{\pi(\log n)n^{2n}}}$ independently and uniformly at random.
\item[7.] We say that a variable $x \in V$ is \emph{uncut} if there exists $A,B \subseteq P_x$, $A \neq B$ such that $\blockround{\sdpv{x}{A}}=\blockround{\sdpv{x}{B}}$ and $\sgn \dotprod{\sdpv{x}{A}}{\vect{q}_i} = \sgn \dotprod{\sdpv{x}{B}}{\vect{q}_i}$ for every 
$1 \leq i \leq \tries{\sdpv{x}{A}}$ (in words, no hyperplane determined by the first $\tries{\sdpv{x}{A}} = \tries{\sdpv{x}{B}}$ vectors $\vect{q}_i$ cuts the vectors $\sdpv{x}{A}$, $\sdpv{x}{B}$).
Remove from $\cons{C}$ all the constraints whose scope contains an uncut variable.
\item[8.] Remove from $\cons{C}$ all the binary constraints $((x,y),R)$ for which there exist $A \subseteq P_x, B \subseteq P_y$ such that
 $\sdpv{x}{A}$, $\sdpv{y}{B}$ are almost the same and $\sgn \dotprod{\sdpv{x}{A}}{\vect{q}_i} \neq \sgn \dotprod{\sdpv{y}{B}}{\vect{q}_i}$ for some $1 \leq i \leq \tries{\sdpv{x}{A}}$.
\item[9.] Use the remaining constraints to construct a weak Prague instance, close it under polymorphisms~(comp. Proposition~{\ref{prop:closure}}) and compute a solution.

\end{itemize}

\begin{claim} \label{cl:works}
Expected fraction of constraints removed in steps $2$, $3$, $5$, $7$ and $8$ is at most $K / n$ for some constant $K$. 
\end{claim}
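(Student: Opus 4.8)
The strategy is to bound the expected fraction of constraints discarded in each of the five steps separately and then sum. All of these bounds rely on the single global fact that the SDP objective~(\ref{eq:sum}) is at least $1 - 1/n^{4n}$, which means $\frac1m\sum_{C}(\text{contribution of }C)\ge 1-1/n^{4n}$; since each contribution is at most $1$ by (SDP3), the ``SDP-defect'' of a constraint $C$, namely $1$ minus its contribution, has average at most $1/n^{4n}$ over all constraints. Every removal rule will be charged either to a union bound over the $O(1)$ relevant subsets $A,B\subseteq D$ combined with a probabilistic estimate in the random parameter ($r$, $s$, or the $\vect{q}_i$), or directly to a large SDP-defect; the point is that each individual charge is $O(1/n)$ and there are only five steps.

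\textbf{Steps 2 and 3 (randomness in $r$).} For step 3, a binary constraint $((x,y),R)$ is removed only if some pair $(a,b)\notin R$ has $\dotprod{\sdpv{x}{a}}{\sdpv{y}{b}}\ge n^{-4r}$; since $\sum_{(a,b)\in R}\dotprod{\sdpv{x}{a}}{\sdpv{y}{b}}\le 1$ and the dropped weight is at least $n^{-4r}$, such a constraint has SDP-defect at least $n^{-4r}$. Conditioned on $r$, the expected fraction of constraints with defect $\ge n^{-4r}$ is at most $n^{4r}/n^{4n}=n^{4(r-n)}$ by Markov; averaging over $r\in\{1,\dots,n-1\}$ uniformly gives $\frac1{n-1}\sum_{r=1}^{n-1}n^{4(r-n)}=O(1/n)$ since the geometric sum is dominated by its last term $n^{-4}$. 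For step 2, a constraint is removed when some dot product falls into the ``forbidden band'' $[n^{-4r-4},n^{-4r})$; the key observation is that the bands for distinct values of $r$ are disjoint, so for a fixed constraint at most one value of $r$ puts any one of its $O(|D|^2)$ dot products into a forbidden band, hence the probability over the random $r$ that this constraint is removed is $O(|D|^2/n)$, and this bound is uniform in the constraint, so the expected removed fraction is $O(1/n)$.

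\textbf{Step 5 (randomness in $s$).} Here a binary constraint is removed when for some $A,B\subseteq D$ the vectors $\sdpv{x}{A}$, $\sdpv{y}{B}$ have squared lengths within $\smallgap$ of each other but the threshold $s + i\,\biggap$ for some integer $i$ separates them; this happens iff $s$ lands in one of the finitely many sub-intervals of $[0,\biggap]$ of total length at most $\smallgap$ determined by the positions of $\normsq{\sdpv{x}{A}}$ and $\normsq{\sdpv{y}{B}}$ modulo $\biggap$. Since $s$ is uniform on $[0,\biggap]$, for each fixed pair $(A,B)$ the conditional probability of removal is at most $\smallgap/\biggap$, and a union bound over the $\le 4^{|D|}$ pairs gives $4^{|D|}\,\smallgap/\biggap$; because $\smallgap=2|D|^2 n^{-4r-4}$ and $\biggap=n^{-4r}-\smallgap\ge \tfrac12 n^{-4r}$ for $n$ large, this ratio is $O(n^{-4})$, again uniform in the constraint, hence the expected removed fraction is $O(1/n)$ (in fact far smaller).

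\textbf{Steps 7 and 8 (randomness in the $\vect{q}_i$) --- the main obstacle.} This is where the real work lies, and it is essentially the Goemans--Williamson hyperplane-rounding estimate. For step 7, a variable $x$ is uncut iff some pair of distinct vectors $\sdpv{x}{A},\sdpv{x}{B}$ in the same $h$-layer survives $\tries{\sdpv{x}{A}}$ random hyperplanes; since a single random hyperplane separates two unit-normalized vectors with probability $\theta/\pi$ where $\theta$ is the angle between them, and the geometry forces $\theta$ to be at least roughly $\norm{\sdpv{x}{A}-\sdpv{x}{B}}$ divided by the common length (which is controlled by the layer index via $\blockround{\cdot}$), one shows the per-hyperplane separation probability is at least about $1/(\pi (\log n) n^{2r}\min\{\sqrt{(\blockround{\sdpv{x}{A}}+2)\biggap},1\})$ — precisely the reciprocal of $\tries{\sdpv{x}{A}}/(\log n)$ up to constants. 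Then the probability that all $t(\sdpv{x}{A})$ independent hyperplanes fail is at most $(1-c/t)^{t}\cdot(\text{polylog factor})\le e^{-c\log n}=n^{-c}$, and a union bound over the $O(1)$ pairs $(A,B)$ keeps this $O(1/n)$ per variable, hence $O(1/n)$ of constraints touch an uncut variable. For step 8 one argues dually: if $\sdpv{x}{A}$ and $\sdpv{y}{B}$ are \emph{almost the same} then the angle between them is tiny — at most about $\sqrt{\smallgap}$ divided by their length — so a single hyperplane separates them with probability at most $O(\sqrt{\smallgap}\,n^{2r}/\pi)$ per unit, and multiplying by the $t(\sdpv{x}{A})=O((\log n)n^{2r})$ hyperplanes and using $\smallgap = 2|D|^2n^{-4r-4}$ makes the total $O((\log n)/n^{2})$, uniformly, hence $O(1/n)$ after the union bound over pairs. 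The delicate points that need care are: matching the definition of $\tries{\cdot}$ exactly to the separation probability so the two estimates in steps 7 and 8 are simultaneously good; handling the $\min\{\cdot,1\}$ truncation and the layer-dependent normalization correctly; and making sure the ``almost the same'' angle bound in step 8 uses the right layer (so that the squared lengths, not just their difference, enter with the correct power of $n^{-4r}$). Summing the five $O(1/n)$ contributions yields the claim with $K$ depending only on $|D|$.
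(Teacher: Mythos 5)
Your proposal is correct and follows essentially the same route as the paper: Step 2 via disjoint forbidden bands giving $O(|D|^2/n)$ bad choices of $r$, Step 3 via the SDP-defect argument, Step 5 via a union bound over pairs $A,B$ and the $\smallgap/\biggap$ fraction of bad $s$, and Steps 7--8 via the Goemans--Williamson hyperplane estimate tied to the definition of $\tries{\cdot}$. The only real deviation is cosmetic: for Step 3 you apply Markov conditionally on $r$ and then average, while the paper observes directly that the removed constraints' contribution is at most $1 - n^{-4r} \le 1 - n^{-4n+4}$ and deduces the deterministic bound $\gamma \le 1/n^4$ uniformly over every $r$; for Step 7 you gesture at the angle bound via $\norm{\sdpv{x}{A}-\sdpv{x}{B}}$, whereas the paper exploits the orthogonality of $\sdpv{x}{A\cap B}$, $\sdpv{x}{A\setminus B}$, $\sdpv{x}{B\setminus A}$ and the bound $\norm{\sdpv{x}{A\setminus B}}\ge n^{-2r}$ coming from $A\subseteq P_x$ to get the clean estimate $\sin\theta_A = \norm{\sdpv{x}{A\setminus B}}/\norm{\sdpv{x}{A}}$.
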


\textbf{Remark. }
The constant $K$ depends exponentially on the size of the domain $|D|$.

\begin{proof}

\textbf{Step 2.}
For each binary constraint there are $|D|^2$ choices for $a,b \in D$ and therefore at most $|D|^2$ bad choices for $r$. For a unary constraint the number of bad choices is at most $|D|$. Thus the probability that a given constraint will be removed is at most $|D|^2/(n-1)$ and it follows that the expected fraction of removed constraints is at most $|D|^2/(n-1)$.

\textbf{Step 3.}
The contribution of every removed constraint to the sum (\ref{eq:sum}) is at most $1 - n^{-4r} \leq 1 - n^{-4n+4}$. If more than $\gamma$-fraction of the constraints
is removed than the sum is at most $1/m( (1-\gamma)m + \gamma m(1 - n^{-4n+4})) = 1-\gamma n^{-4n+4}$. Since (\ref{eq:sum}) $\geq 1 - 1/n^{4n}$, we have $\gamma \leq 1/n^4$.

\textbf{Step 5.} 
For every constraint $((x,y),R)$ and every $A,B \subseteq D$ such that $|\normsq{\sdpv{x}{A}} - \normsq{\sdpv{y}{B}}| \leq \smallgap$, $\norm{\sdpv{x}{A}} \leq \norm{\sdpv{y}{B}}$,
the inequality $(\normsq{\sdpv{x}{A}} - s) \div \biggap < (\normsq{\sdpv{y}{B}} - s) \div \biggap$ can be  satisfied only if
$(\normsq{\sdpv{y}{B}}-s) \mod \biggap < \smallgap$.
The bad choices for $s$ thus cover at most $(\smallgap / \biggap)$-fraction of the interval $[0, \biggap]$. As $\smallgap / \biggap < K_1/n^4$ (for a suitable constant $K_1$ depending on $|D|$), the probability of a bad choice is at most $K_1/n^4$. There are  $4^{|D|}$ pairs of subsets $A,B \subseteq D$, therefore the probability that the constraint is removed is less than $K_14^{|D|}/n^4$ and so is the expected fraction of removed constraints.

Before analyzing Steps 7 and 8 let us observe that, for any vector $\vect{w}$ such that $1 \geq \norm{\vect{w}} \geq n^{-4r}$,
$$
\pi (\log n) n^{2r} \norm{\vect{w}} \leq \tries{\vect{w}} \leq 2 \pi (\log n) n^{2r} \norm{\vect{w}} + 1.
$$
The first inequality follows from 
$$
\sqrt{(\blockround{\vect{w}}+2)\biggap} = \sqrt{ \biggap ((\normsq{\vect{w}} + 2\biggap - s) \div \biggap) }
\geq $$
$$
\geq \sqrt{ \biggap \frac{\normsq{\vect{w}}+\biggap-s}{\biggap}}
\geq \norm{\vect{w}}
$$
and the second inequality follows from 
$$
\sqrt{(\blockround{\vect{w}}+2)\biggap} \leq \sqrt{ \biggap \frac{(\normsq{\vect{w}} + 2\biggap - s)}{\biggap}} \leq
$$
$$
\leq 
\sqrt{\normsq{\vect{w}} + 2\biggap}
\leq
\sqrt{\normsq{\vect{w}} + 2\normsq{\vect{w}}}
<
2\norm{\vect{w}}.
$$

\textbf{Step 7.} Consider two different subsets $A,B$ of $P_x$ such that $\blockround{\sdpv{x}{A}} = \blockround{\sdpv{x}{B}}$. Suppose that $A \setminus B \neq \emptyset$, the other case is symmetric.
Let $\theta$ be the angle between $\sdpv{x}{A}$ and $\sdpv{x}{B}$. 
As $\sdpv{x}{A}-\sdpv{x}{A \cap B} ( = \sdpv{x}{A \setminus B})$, $\sdpv{x}{B} - \sdpv{x}{A \cap B}$ and $\sdpv{x}{A \cap B}$ are pairwise orthogonal, the angle $\theta$ is greater than or equal to the angle $\theta_A$ between $\sdpv{x}{A}$ and $\sdpv{x}{A \cap B}$. 
(Given three pairwise orthogonal vectors $\sdpv{v}{1}, \sdpv{v}{2}, \sdpv{v}{3}$, the angle between $\sdpv{v}{1}+\sdpv{v}{2}$ and $\sdpv{v}{1}+\sdpv{v}{3}$ is always greater than or equal to the angle between $\sdpv{v}{1}+\sdpv{v}{2}$ and $\sdpv{v}{1}$. 
This is a straightforward calculation using, for instance, dot products. In our situation $\sdpv{v}{1} = \sdpv{x}{A \cap B}$, $\sdpv{v}{2} = \sdpv{x}{A \setminus B}$ and $\sdpv{v}{3} = \sdpv{x}{B \setminus A}$.)
We have $\sin \theta_A = \norm{\sdpv{x}{A \setminus B}} / \norm{\sdpv{x}{A}}$. Since $A \subseteq P_x$, we get $\norm{\sdpv{x}{A \setminus B}} \geq \sqrt{n^{-4r}} = n^{-2r}$
and then
$\sin \theta_A = \norm{\sdpv{x}{A \setminus B}}/ \norm{\sdpv{x}{A}} \geq n^{-2r} / \norm{\sdpv{x}{A}}$, so $\theta \geq \theta_A \geq  n^{-2r} / \norm{\sdpv{x}{A}}$.

The probability that $\vect{q}_i$ does not cut $\sdpv{x}{A}$ and $\sdpv{x}{B}$ is thus at most $1 - n^{-2r} / \pi\norm{\sdpv{x}{A}}$ and the probability that none of the vectors $\vect{q}_1, \dots, \vect{q}_{\tries{\sdpv{x}{A}}}$ cut them is at most
$$
\left( 1 - \frac{n^{-2r}}{\pi\norm{\sdpv{x}{A}}} \right)^{\tries{\sdpv{x}{A}}}
\leq
{\left[\left( 1 - \frac{1}{\pi n^{2r}\norm{\sdpv{x}{A}}} \right)^{\pi n^{2r} \norm{\sdpv{x}{A}}}\right]}^{\log n}
\leq
$$
$$
\leq 
\left(\frac12\right)^{\log n} = \frac1n.
$$
The first inequality uses that $\tries{\sdpv{x}{A}} \geq \pi(\log n) n^{2r} \norm{\sdpv{x}{A}}$ which we observed above. In the second inequality we have used
that $(1 - 1/\eta)^{\eta} \leq 1/2$ whenever $\eta \geq 2$. 

For a single variable there are at most $4^{|D|}$ choices for $A,B \subseteq P_x$, therefore the probability that $x$ is uncut is at most $4^{|D|}/n$. 
The scope of every constraint contains at most $2$ variables, hence the probability that a constraint is removed is at most $2 \cdot 4^{|D|}/n$ and the expected fraction of the constraints removed in this step has the same upper bound.

\textbf{Step 8.} 
Assume that $((x,y),R)$ is a binary constraint and $A \subseteq P_x, B\subseteq P_y$ are such that $\sdpv{x}{A}$ and $\sdpv{y}{B}$ are almost the same.
Let $\theta$ be the angle between $\sdpv{x}{A}$ and $\sdpv{y}{B}$ and $\theta_A$ be the angle between $\sdpv{y}{B}$ and $\sdpv{y}{B}-\sdpv{x}{A}$. 
By the law of sines we have $\norm{\sdpv{x}{A}}/(\sin \theta_A) = \norm{\sdpv{y}{B} - \sdpv{x}{A}}/(\sin \theta)$, and
$$
 \theta \leq 2 \sin \theta =  \frac{2 \norm{\sdpv{y}{B} - \sdpv{x}{A}}}{\norm{\sdpv{x}{A}}} \sin (\theta_A) \leq
 \frac{2\norm{\sdpv{y}{B} - \sdpv{x}{A}}}{\norm{\sdpv{x}{A}}} \leq
 \frac{2 \sqrt{\smallgap}}{\norm{\sdpv{x}{A}}},
$$
where the first inequality follows from $\theta \leq \pi/2$ 
(as the dot product of $\sdpv{x}{A}$ and $\sdpv{y}{B}$ is a sum of nonnegative numbers).
Therefore, the probability that vectors $\sdpv{x}{A}$ and $\sdpv{y}{B}$ are cut by some of the vectors $\vect{q}_i$, $1 \leq i \leq \tries{\sdpv{x}{A}}$ is at most
$$
\tries{\sdpv{x}{A}} \frac{2\sqrt{\smallgap}}{\norm{\sdpv{x}{A}}} \leq
(2 \pi (\log n) n^{2r} \norm{\sdpv{x}{A}} + 1) \frac{2 \sqrt{2|D|^2n^{-4r-4}}}{\norm{\sdpv{x}{A}}} \leq
$$
$$
\leq
K_2(\log n) n^{-2} \leq \frac{K_2}n,
$$
where $K_2$ is a constant.
There are at most $4^{|D|}$ choices for $A,B$, so the probability that our constraint will be removed is less than $K_24^{|D|}/n$.
\end{proof}

\noindent
Now we define the instance $\inst{J}$ and proceed to show that $\inst{J}$ is a weak Prague instance.
Let $\Scopes$ denote the set of pairs which are the scope of some binary constraint of $\inst{I}$ after Step~8, let $V_0$ be the set of variables which are within the scope of some constraint after Step~8, and let $\Scopes^{-1} = \{(x,y): (y,x) \in \Scopes\}$. We put 
\begin{align*}
\inst{J} &= (V_0, D, \{((x,y),P^{\inst{J}}_{x,y}): (x,y) \in \Scopes \cup \Scopes^{-1}\}), \\
P^{\inst{J}}_{x,y} &= \{ (a,b) : \dotprod{\sdpv{x}{a}}{\sdpv{y}{b}} \geq n^{-4r}\}.
\end{align*}

\begin{claim}
The instance $\inst{J}$ is $1$-minimal and $P_x^{\inst{J}} = P_x$.
\end{claim}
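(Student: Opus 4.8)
The plan is to reduce the statement to one elementary identity about the SDP dot products, together with the reformulation of $1$-minimality recorded after the standing assumptions in Section~\ref{sec:Prague}. First I would check that $\inst{J}$ satisfies those structural assumptions: it has exactly one (binary) constraint for each scope in $\Scopes\cup\Scopes^{-1}$, this set of scopes is closed under the flip $(x,y)\mapsto(y,x)$, the relation $P^{\inst{J}}_{y,x}$ is the transpose of $P^{\inst{J}}_{x,y}$ because the dot product is symmetric, and every variable of $\inst{J}$ occurs in some constraint by the definition of $V_0$. Granting this, it suffices to prove that for every $(x,y)\in\Scopes\cup\Scopes^{-1}$ the projection of $P^{\inst{J}}_{x,y}$ onto its first coordinate is exactly $P_x$; this yields simultaneously that $\inst{J}$ is $1$-minimal and that $P^{\inst{J}}_x=P_x$.

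The identity I would isolate is
\[
\normsq{\sdpv{x}{a}}\;=\;\sum_{b\in D}\dotprod{\sdpv{x}{a}}{\sdpv{y}{b}}\qquad\text{for all }x,y\in V,\ a\in D ,
\]
which follows from (SDP2) (so that $\dotprod{\sdpv{x}{a}}{\sdpv{x}{D}}=\normsq{\sdpv{x}{a}}$) and (SDP3) (so that $\sdpv{x}{D}=\sdpv{y}{D}$). The inclusion ``projection $\subseteq P_x$'' is then immediate: if $(a,b)\in P^{\inst{J}}_{x,y}$ then by (SDP1) every summand above is nonnegative, hence $\normsq{\sdpv{x}{a}}\geq\dotprod{\sdpv{x}{a}}{\sdpv{y}{b}}\geq n^{-4r}$, i.e.\ $a\in P_x$.

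For the reverse inclusion I would use the gap created in Step~2. For any $(x,y)\in\Scopes\cup\Scopes^{-1}$, one of $(x,y),(y,x)$ is the scope of a binary constraint of $\inst{I}$ that survived Step~2, so by Step~2 and symmetry of the dot product no value $\dotprod{\sdpv{x}{a'}}{\sdpv{y}{b'}}$ lies in $[n^{-4r-4},n^{-4r})$. Fix $a\in P_x$, i.e.\ $\normsq{\sdpv{x}{a}}\geq n^{-4r}$; then each $\dotprod{\sdpv{x}{a}}{\sdpv{y}{b}}$ is either $<n^{-4r-4}$ or $\geq n^{-4r}$. If all of them were $<n^{-4r-4}$, the identity would give $\normsq{\sdpv{x}{a}}<|D|\,n^{-4r-4}<n^{-4r}$ (using $|D|<n^{4}$, which holds because $n>|D|$), a contradiction. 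Hence some $b$ has $\dotprod{\sdpv{x}{a}}{\sdpv{y}{b}}\geq n^{-4r}$, so $(a,b)\in P^{\inst{J}}_{x,y}$ and $a$ is in the projection.

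I do not expect a genuine obstacle here: the whole argument rests on the identity $\normsq{\sdpv{x}{a}}=\sum_b\dotprod{\sdpv{x}{a}}{\sdpv{y}{b}}$ and a crude counting bound. The only points requiring care are the bookkeeping ones --- verifying that $\inst{J}$ meets the structural preconditions of Section~\ref{sec:Prague} so that the convenient description of $1$-minimality applies, and keeping the small numerical margin $|D|<n^{4}$ afforded by the running assumption $n>|D|$.
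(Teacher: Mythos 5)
Your proof is correct and follows essentially the same route as the paper's: both inclusions rest on the identity $\normsq{\sdpv{x}{a}}=\dotprod{\sdpv{x}{a}}{\sdpv{y}{D}}$ (from (SDP2)/(SDP3)), nonnegativity (SDP1) for the easy direction, and the gap created in Step~2 for the reverse direction. The only cosmetic difference is that you argue the reverse inclusion by contradiction (``if all $|D|$ summands were below $n^{-4r-4}$ the sum would be too small''), whereas the paper uses the equivalent averaging step (``some summand is at least $n^{-4r}/|D|\geq n^{-4r-4}$, hence $\geq n^{-4r}$ by Step~2'').
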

\begin{proof}
Let $(x,y) \in \Scopes$ and take an arbitrary constraint $((x,y),R)$ which remained in $\cons{C}$. 

First we prove that $P_{x,y} \subseteq P_x \times P_y$ for every $a,b \in D$. Indeed, if $(a,b) \in P_{x,y}$ then $\dotprod{\sdpv{x}{a}}{\sdpv{y}{b}} \geq n^{-4r}$, therefore $\normsq{\sdpv{x}{a}} = \dotprod{\sdpv{x}{a}}{\sdpv{x}{D}} = \dotprod{\sdpv{x}{a}}{\sdpv{y}{D}} \geq n^{-4r}$, so $a \in P_x$. Similarly, $b \in P_y$.

On the other hand, if $a \in P_x$ then $n^{-4r} \leq \normsq{\sdpv{x}{a}} = \dotprod{\sdpv{x}{a}}{\sdpv{y}{D}}$, thus there exist $b \in D$ such that
$\dotprod{\sdpv{x}{a}}{\sdpv{y}{b}} \geq n^{-4r}/|D| \geq n^{-4r-4}$ (we have used $n^4 \geq |D|$). But then $\dotprod{\sdpv{x}{a}}{\sdpv{y}{b}} \geq n^{-4r}$, otherwise the constraint $((x,y),R)$ would be removed in Step 2. This implies that $(a,b) \in P_{x,y}$. We have shown that the projection of $P_{x,y}$ to the first coordinate contains $P_x$. 
\end{proof}

\noindent
For verification of properties (P2) and (P3) the following observation will be useful.

\begin{claim} \label{cl:walk}
Let $(x,y) \in \Scopes \cup \Scopes^{-1}$, $A \subseteq P_x$, $B = A + (x,y)$. 
If $A = B + (y,x)$, then 
the vectors $\sdpv{x}{A}$ and $\sdpv{y}{B}$ are almost the same.
In the other case, i.e. if $A \varsubsetneq B + (y,x)$, then 
$\blockround{\sdpv{y}{B}} > \blockround{\sdpv{x}{A}}$.
\end{claim}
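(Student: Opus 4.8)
The plan is to compare $\sdpv{x}{A}$ and $\sdpv{y}{B}$ through the pairwise dot products $\dotprod{\sdpv{x}{a}}{\sdpv{y}{b}}$, as in the motivating discussion of Section~\ref{sec:Prague}, but now keeping careful track of the thresholds $n^{-4r}$, $n^{-4r-4}$, $\smallgap$, $\biggap$. Using (SDP2) (orthogonality of the $\sdpv{x}{a}$) together with (SDP3) ($\sdpv{x}{D}=\sdpv{y}{D}$) one has, for every $x$ and $A\subseteq D$, the identity $\normsq{\sdpv{x}{A}}=\sum_{a\in A}\sum_{b\in D}\dotprod{\sdpv{x}{a}}{\sdpv{y}{b}}$, so that
$$\normsq{\sdpv{x}{A}}-\dotprod{\sdpv{x}{A}}{\sdpv{y}{B}}=\sum_{a\in A,\ b\notin B}\dotprod{\sdpv{x}{a}}{\sdpv{y}{b}}.$$
Every summand is nonnegative by (SDP1); moreover for $a\in A$ and $b\notin B=A+(x,y)$ we have $(a,b)\notin P^{\inst{J}}_{x,y}$, i.e.\ $\dotprod{\sdpv{x}{a}}{\sdpv{y}{b}}<n^{-4r}$, and then in fact $\dotprod{\sdpv{x}{a}}{\sdpv{y}{b}}<n^{-4r-4}$ because the binary constraint with scope $(x,y)$ survived Step~2. (Such a constraint exists since $(x,y)\in\Scopes\cup\Scopes^{-1}$; if it is listed as $((y,x),R)$ this changes nothing, the condition in Step~2 being symmetric in the two variables.) As there are at most $|D|^2$ such pairs,
$$0\le \normsq{\sdpv{x}{A}}-\dotprod{\sdpv{x}{A}}{\sdpv{y}{B}}<|D|^2n^{-4r-4}=\smallgap/2.$$
Put $A'=B+(y,x)$; by $1$-minimality of $\inst{J}$ we have $A\subseteq A'$, and the same computation with $x$ and $y$ interchanged (and $B,A'$ in place of $A,B$) gives $0\le\normsq{\sdpv{y}{B}}-\dotprod{\sdpv{x}{A'}}{\sdpv{y}{B}}<\smallgap/2$.

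Suppose first $A=B+(y,x)$, i.e.\ $A'=A$, so both displayed inequalities above concern the single number $\dotprod{\sdpv{x}{A}}{\sdpv{y}{B}}$. Adding them yields
$$\normsq{\sdpv{x}{A}-\sdpv{y}{B}}=\bigl(\normsq{\sdpv{x}{A}}-\dotprod{\sdpv{x}{A}}{\sdpv{y}{B}}\bigr)+\bigl(\normsq{\sdpv{y}{B}}-\dotprod{\sdpv{x}{A}}{\sdpv{y}{B}}\bigr)<\smallgap,$$
and subtracting them gives $|\normsq{\sdpv{x}{A}}-\normsq{\sdpv{y}{B}}|<\smallgap/2\le\smallgap$. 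This last inequality together with the survival of the constraint through Step~5 forces $\blockround{\sdpv{x}{A}}=\blockround{\sdpv{y}{B}}$; hence $\sdpv{x}{A}$ and $\sdpv{y}{B}$ are almost the same, which is the first assertion.

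Suppose now $A\varsubsetneq B+(y,x)$, and fix $a^*\in(B+(y,x))\setminus A$. Since $a^*\in B+(y,x)$ there is $b^*\in B$ with $(a^*,b^*)\in P^{\inst{J}}_{x,y}$, i.e.\ $\dotprod{\sdpv{x}{a^*}}{\sdpv{y}{b^*}}\ge n^{-4r}$, so by (SDP1) $\dotprod{\sdpv{x}{a^*}}{\sdpv{y}{B}}\ge n^{-4r}$. Using linearity, (SDP1), and the bound $\normsq{\sdpv{y}{B}}-\dotprod{\sdpv{x}{A'}}{\sdpv{y}{B}}\ge0$ from above,
$$\normsq{\sdpv{y}{B}}\ \ge\ \dotprod{\sdpv{x}{A'}}{\sdpv{y}{B}}\ \ge\ \dotprod{\sdpv{x}{A}}{\sdpv{y}{B}}+\dotprod{\sdpv{x}{a^*}}{\sdpv{y}{B}}\ >\ \normsq{\sdpv{x}{A}}-\smallgap/2+n^{-4r},$$
and since $n^{-4r}=\biggap+\smallgap$ the right-hand side equals $\normsq{\sdpv{x}{A}}+\biggap+\smallgap/2>\normsq{\sdpv{x}{A}}+\biggap$. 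Finally one checks that $\gamma_1>\gamma_2+\biggap$ implies $\gamma_1\div\biggap>\gamma_2\div\biggap$: writing $i=\gamma_2\div\biggap$ we have $\gamma_1-(i+1)\biggap>\gamma_2-i\biggap>0$, so $\gamma_1\div\biggap\ge i+1$. Applying this with $\gamma_1=\normsq{\sdpv{y}{B}}-s$ and $\gamma_2=\normsq{\sdpv{x}{A}}-s$ gives $\blockround{\sdpv{y}{B}}>\blockround{\sdpv{x}{A}}$, the second assertion.

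The computations are short; the delicate points are (i) the right to invoke Steps~2 and~5, which is exactly where the hypothesis $(x,y)\in\Scopes\cup\Scopes^{-1}$ is used — it guarantees that a binary constraint with scope $(x,y)$ or $(y,x)$ survived every earlier removal step, and the removal conditions of Steps~2 and~5 are symmetric in the two variables — and (ii) the monotonicity of the operation $\div$, which must be spelled out since $\div$ was defined combinatorially rather than as ordinary division. I expect the bookkeeping in (i), keeping straight which constraints are still present, to be the main thing to get right.
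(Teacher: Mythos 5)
Your proposal is correct and follows essentially the same route as the paper's proof: both identify the two cross-sums $\dotprod{\sdpv{x}{A}}{\sdpv{y}{D\setminus B}}$ and $\dotprod{\sdpv{x}{D\setminus A'}}{\sdpv{y}{B}}$ (with $A'=B+(y,x)$), bound them by $|D|^2 n^{-4r-4}=\smallgap/2$ via survival of Step~2, and in the second case exhibit $a^*\in(B+(y,x))\setminus A$ contributing at least $n^{-4r}$ to drive $\normsq{\sdpv{y}{B}}$ more than $\biggap$ above $\normsq{\sdpv{x}{A}}$. You organize the algebra around the pivot $\dotprod{\sdpv{x}{A}}{\sdpv{y}{B}}$ rather than expanding $\normsq{\sdpv{y}{B}-\sdpv{x}{A}}$ and $\normsq{\sdpv{y}{B}}-\normsq{\sdpv{x}{A}}$ directly as the paper does, and you spell out the monotonicity of $\div$, but the identities, bounds, and case analysis are the same.
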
 
\begin{proof}
The number $\normsq{\sdpv{y}{B}-\sdpv{x}{A}}$ is equal to
$$
\dotprod{\sdpv{y}{B}}{\sdpv{y}{B}} - \dotprod{\sdpv{x}{A}}{\sdpv{y}{B}} 
- \dotprod{\sdpv{x}{A}}{\sdpv{y}{B}} + \dotprod{\sdpv{x}{A}}{\sdpv{x}{A}}
=
$$
$$
=
\dotprod{\sdpv{x}{D}}{\sdpv{y}{B}} - \dotprod{\sdpv{x}{A}}{\sdpv{y}{B}} 
- \dotprod{\sdpv{x}{A}}{\sdpv{y}{B}} + \dotprod{\sdpv{x}{A}}{\sdpv{y}{D}}
=
\dotprod{\sdpv{x}{D \setminus A}}{\sdpv{y}{B}}
+ \dotprod{\sdpv{x}{A}}{\sdpv{y}{D \setminus B}}.
$$
No pair $(a,b)$, with $a\in A$ and $b\in D\setminus B$, is in $P^{\inst{J}}_{x,y}$ so
the dot product $\dotprod{\sdpv{x}{a}}{\sdpv{y}{b}}$ is smaller than $n^{-4r}$. 
Then in fact $\dotprod{\sdpv{x}{a}}{\sdpv{y}{b}} < n^{-4r-4}$ otherwise all the constraints with scope $(x,y)$ would be removed in Step 2.
It follows that 
the second summand is always at most $|D|^2 n^{-4r-4}$ and the first summand has the same upper bound in the case $B+(y,x) = A$.

Moreover, $\normsq{\sdpv{y}{B}}-\normsq{\sdpv{x}{A}}$ is equal to
$$
\dotprod{\sdpv{y}{B}}{\sdpv{y}{B}} - \dotprod{\sdpv{x}{A}}{\sdpv{x}{A}}
=
\dotprod{\sdpv{x}{D}}{\sdpv{y}{B}} - \dotprod{\sdpv{x}{A}}{\sdpv{y}{D}}
=
$$
$$
=
\dotprod{\sdpv{x}{D}}{\sdpv{y}{B}} - \dotprod{\sdpv{x}{A}}{\sdpv{y}{B}} - \dotprod{\sdpv{x}{A}}{\sdpv{y}{D \setminus B}}
=
\dotprod{\sdpv{x}{D \setminus A}}{\sdpv{y}{B}} - \dotprod{\sdpv{x}{A}}{\sdpv{y}{D \setminus B}}.
$$
If $B + (y,x) = A$ then we have a difference of two nonnegative numbers each less than or equal $|D|^2 n^{-4r-4}$, therefore the absolute value of this expression is at most $\smallgap$. But then $\blockround{\sdpv{x}{A}} = \blockround{\sdpv{y}{B}}$, otherwise all constraint with scope $(x,y)$ or $(y,x)$ would be removed in Step 5. Using the previous paragraph, it follows that $\sdpv{x}{A}$ and $\sdpv{y}{B}$ are almost the same.

If $B + (y,x)$ properly contains $A$ then the first summand $\dotprod{\sdpv{x}{D\setminus A}}{\sdpv{y}{B}}$ is greater than or equal to $n^{-4r}$, so the whole expression is at least $n^{-4r} - |D|^2 n^{-4r-4} > \biggap$ and thus $\blockround{\sdpv{y}{B}} > \blockround{\sdpv{x}{A}}$.
\end{proof}

\begin{claim} 
$\inst{J}$ is a weak Prague instance.
\end{claim}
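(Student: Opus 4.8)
We must verify that $\inst{J}$ satisfies (P1), (P2), (P3). Property (P1) is already established by the previous two claims ($\inst{J}$ is $1$-minimal with $P_x^{\inst{J}} = P_x$; note that every variable of $V_0$ is in some scope by construction, and the symmetry requirement on scopes holds since we added $\Scopes^{-1}$ and $P^{\inst{J}}_{y,x}$ is visibly the converse of $P^{\inst{J}}_{x,y}$). So the work is in (P2) and (P3), and the engine for both is Claim~\ref{cl:walk}: along any step $(x,y)$, passing from $\sdpv{x}{A}$ to $\sdpv{y}{A+(x,y)}$ either changes nothing essential (the two vectors are \emph{almost the same}, and in particular $A+(x,y,x)=A$) or strictly increases the block number $\blockround{\cdot}$. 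The plan is to promote this one-step dichotomy to a statement about arbitrary patterns by induction, keeping track of two monotone quantities: the block number $\blockround{\sdpv{x}{A}}$, which never decreases along a pattern, and — once the block number has stabilized — the signs $\sgn\dotprod{\sdpv{x}{A}}{\vect{q}_i}$ for the relevant range of $i$, which Steps~7 and~8 force to be constant along a pattern once all vectors involved are almost the same.

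\textbf{Key steps.} First I would prove an auxiliary lemma: if $p$ is a pattern from $x$ to $y$ in $\inst{J}$ and $A\subseteq P_x$, then either $\blockround{\sdpv{y}{A+p}} > \blockround{\sdpv{x}{A}}$, or $\sdpv{x}{A}$ and $\sdpv{y}{A+p}$ are almost the same \emph{and} $A + p - p = A$; this follows from Claim~\ref{cl:walk} by induction on the length of $p$, using transitivity of ``almost the same'' within a fixed block (the squared distances are each at most $\smallgap$, but one must be slightly careful — better to track the stronger fact that the block number is constant and reprove the distance bound directly for the composite, exactly as in the SDP-intuition paragraph of Section~\ref{sec:Prague}). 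Given this lemma, (P2) is immediate: if $A + p = A$ for a pattern $p$ from $x$ to $x$, the block number cannot have strictly increased, so $\sdpv{x}{A}$ and $\sdpv{x}{A+p}=\sdpv{x}{A}$ are almost the same and $A + p - p = A$, which is exactly (P2). For (P3), suppose $A + p_1 + p_2 = A$ with $p_1,p_2$ patterns from $x$ to $x$. Then the block number is constant throughout, so all vectors $\sdpv{x}{A}$, $\sdpv{x}{A+p_1}$, $\sdpv{x}{A}$ occurring along the way lie in one block and are pairwise almost the same. Now I invoke Steps~7 and~8: since $x\in V_0$ is not uncut, the vectors $\sdpv{x}{A+p_1}$ and $\sdpv{x}{A}$, if distinct, would be separated by one of the first $\tries{\sdpv{x}{A}}$ hyperplanes; but Step~8 (applied along each step of $p_1$, whose scopes all survived) guarantees that almost-the-same vectors attached to consecutive variables are never separated by such a hyperplane, so the sign pattern is preserved along $p_1$ and $\sdpv{x}{A+p_1}$ and $\sdpv{x}{A}$ have identical signs on $\vect{q}_1,\dots,\vect{q}_{\tries{\sdpv{x}{A}}}$. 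Being uncut would then force $A+p_1 = A$, giving (P3).

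\textbf{Main obstacle.} The delicate point is the transitivity of ``almost the same'' along a long pattern: naively, the squared distances $\normsq{\sdpv{x_i}{A_i} - \sdpv{x_{i+1}}{A_{i+1}}}$ are each $\le \smallgap$, and summing them over a pattern of unbounded length gives no bound at all. The resolution — and this is why the block layers were introduced in Steps~4--5 — is that once the block number $\blockround{\cdot}$ has stabilized to a common value, \emph{every} intermediate vector $\sdpv{x_i}{A_i}$ lies in the same layer of width $\biggap$, and one re-derives the distance bound for the endpoints from Claim~\ref{cl:walk}'s computation $\normsq{\sdpv{y}{B}-\sdpv{x}{A}} = \dotprod{\sdpv{x}{D\setminus A}}{\sdpv{y}{B}} + \dotprod{\sdpv{x}{A}}{\sdpv{y}{D\setminus B}}$ applied to the composite, not by summing. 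Concretely: once we know the block is constant, $B + (y,x) = A$ holds at every step (that is the content of Claim~\ref{cl:walk}), hence $A + p - p = A$ telescopes, and the endpoint distance bound comes from a single application of the two-term estimate with the thresholds from Step~2. Getting this bookkeeping right — ensuring the block number genuinely cannot oscillate and that Step~8's guarantee chains correctly through the pattern — is the crux; everything else is assembling Claims~\ref{cl:walk}, the cut conditions, and Lemma~\ref{lem:P2star}.
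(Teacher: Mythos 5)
Your proposal is structurally the same as the paper's: for both (P2) and (P3), use Claim~\ref{cl:walk} to get that block numbers are non-decreasing along a pattern, note that periodicity forces them to be constant, and then extract the per-step consequences (reversibility of each step for (P2); almost-sameness of consecutive vectors, fed into the Step~7/Step~8 removals, for (P3)). The one point you single out as ``the crux'' --- establishing that the \emph{endpoint} vectors $\sdpv{x}{A}$ and $\sdpv{y}{A+p}$ of a long pattern are almost the same --- is a red herring, and your proposed remedy for it does not work, though neither matters for the final argument.

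Concretely: the endpoint almost-sameness is never used. For (P2) all you need is $A+p-p=A$, which telescopes from $A_{j} = A_{j+1}+(x_{j+1},x_j)$ at every step once blocks are equal (this is exactly what Claim~\ref{cl:walk} supplies). For (P3) you need $\blockround{\sdpv{x}{A}} = \blockround{\sdpv{x}{B}}$ to put $A,B$ in the hypothesis of the ``uncut'' definition in Step~7 --- but block equality alone suffices there, no distance bound is required --- and you need almost-sameness only for \emph{consecutive} vectors along $p_1$ (so that the Step~8 guarantee applies at each step, with $k \leq \tries{\sdpv{x}{A}} = \tries{(\vect{x}_j)_{A_j}}$ since $t$ depends only on the block). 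So the auxiliary lemma should be weakened to ``either the block strictly increases, or the block is constant along $p$, consecutive vectors are pairwise almost the same, and $A+p-p=A$.'' Your proposed fix for the endpoint claim --- applying the two-term estimate $\normsq{\sdpv{y}{B}-\sdpv{x}{A}} = \dotprod{\sdpv{x}{D\setminus A}}{\sdpv{y}{B}} + \dotprod{\sdpv{x}{A}}{\sdpv{y}{D\setminus B}}$ directly to the composite --- is not available: the bound $\dotprod{\sdpv{x}{a}}{\sdpv{y}{b}} < n^{-4r-4}$ for $(a,b)\notin P^{\inst{J}}_{x,y}$ is a consequence of Step~2, which only governs pairs $(x,y)$ that are scopes of surviving constraints, not pairs joined by a long pattern. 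In fact over a long pattern the distances can accumulate, so the endpoint vectors need not be almost the same at all. Drop that clause and the argument is exactly the paper's. (Also, the last sentence ``Being uncut would then force $A+p_1=A$'' has its polarity flipped --- it is $x$ being \emph{cut}, i.e.\ not uncut, combined with the identical sign pattern, that forces $A+p_1=A$; but the surrounding reasoning makes the intent clear.)
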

\begin{proof}
\textbf{(P2).}
Let $A \subseteq P_x$ and let $p = (x_1, \dots, x_i)$ be a pattern in $\inst{J}$ from $x$ to $x$ (i.e. $x_1 = x_i = x$). 
By the previous claim 
$\blockround{\sdpv{x}{A}} = \blockround{(\vect{x}_i)_{A+(x_1, \dots, x_i)}} \geq
\blockround{(\vect{x}_{i-1})_{A+(x_1, \dots, x_{i-1})}}$ $\geq \dots \geq \blockround{(\vect{x}_2)_{A+(x_1,x_2)}} \geq \blockround{\sdpv{x}{A}}$.
It follows that all these inequalities must in fact be equalities and, by applying the claim again,
we get that the vectors $(\vect{x}_j)_{A+(x_1,x_2, \dots, x_j)}$ and $(\vect{x}_{j+1})_{A+(x_1, x_2, \dots, x_{j+1})}$ are almost the same 
and, moreover, $A+(x_1,x_2, \dots, x_{j+1}) + (x_{j+1}, x_j) = A+(x_1, x_2, \dots, x_j)$
for every $1 \leq j < i$.
Therefore $A + p - p = A$ as required. 

\textbf{(P3).}
Let $A \subseteq P_x$, let $p_1 = (x_1, \dots, x_i), p_2$ be two patterns from $x$ to $x$ such that $A + p_1 + p_2 = A$ and let $B = A+p_1$. For contradiction assume $A \neq B$. 
The same argument as above proves that the vectors $(\vect{x}_j)_{A+(x_1,x_2, \dots, x_j)}$ and $(\vect{x}_{j+1})_{A+(x_1, x_2, \dots, x_{j+1})}$ are almost the same for every $1 \leq j < i$, and then $\blockround{\sdpv{x}{A}} = \blockround{\sdpv{x}{B}}$. There exists $k \leq \tries{\sdpv{x}{A}}$ such that
$\sgn \dotprod{\sdpv{x}{A}}{\vect{q}_k} \neq \sgn \dotprod{\sdpv{x}{B}}{\vect{q}_k}$, otherwise $x$ is uncut and all constraints whose scope contains $x$ would be removed in Step 7. 
But this leads to a contradiction, since $\sgn \dotprod{(\vect{x}_{j})_{A+(x_1, \dots, x_{j})}}{\vect{q}_k} = \sgn \dotprod{(\vect{x}_{j+1})_{A+(x_1, \dots, x_{j+1})}}{\vect{q}_k}$ for all $1 \leq j < i$, otherwise the constraints with scope $(x_{j},x_{j+1})$ would be removed in Step 8.
\end{proof}

\noindent
Observe that for every unary constraint $(x,R)$ we have $P_x \subseteq R$ (from Step 3) and for every binary constraint $((x,y),R)$ we have $P_{x,y} \subseteq R$. Since we have removed at most $(K/n)$-fraction of the constraints from $\cons{C}$, 
a potential solution to $\inst{J}$ is an assignment for the original instance $\inst{I}$ of value at least $1-K/n$. Also, the instance $\inst{J}$ is nontrivial because, for each $x \in V$, there exists at least one $a \in D$ with $\normsq{\sdpv{x}{a}} > 1/n^4$ (recall that we assume $n > |D|$).

The only problem is that the CSP over the constraint language of $\inst{J}$ (consisting of $P^{\inst{J}}_{x,y}$'s) does not necessarily have bounded width. This is why we form the algebraic closure $\inst{J}'$ of $\inst{J}$:
\begin{align*}
\inst{J'} &= (V_0,D, \{((x,y),P^{\inst{J}'}_{x,y}): (x,y) \in \Scopes \cup \Scopes^{-1}\}), \\
P_{x,y}^{\inst{J}'} &= \{(f(a_1, a_2, \dots), f(b_1, b_2, \dots)): f \in \Pol(\Gamma), \\
  & \quad \quad \quad (a_1,b_1), (a_2,b_2), \dots \in P^{\inst{J}}_{x,y}\}
\end{align*}
The new instance still has the property that $P_x^{\inst{J}'}$ (which is equal to $\{f(a_1, a_2, \dots): f \in \Pol(\Gamma), a_1, a_2, \dots \in P_x\}$) is a subset of $R$ for every unary constraint $(x,R)$, and $P_{x,y}^{\inst{J}'} \subseteq R$ for every binary constraint $((x,y),R)$, since the constraint relations are preserved by every polymorphism of $\Gamma$. Moreover, every polymorphism of $\Gamma$ is a polymorphism of the constraint language $\Lambda'$ of $\inst{J}'$, therefore $\CSP(\Lambda')$ has bounded width (see, for instance, Theorem~\ref{thm:bw_char}; technically, $\Lambda'$ does not need to be a core, but we can simply add to $\Lambda'$ all the singleton unary relations).

By Proposition~\ref{prop:closure}, $\inst{J}'$ is a weak Prague instance.
Therefore $\inst{J}'$ (and thus $\inst{I}$ after Step~8) has a solution by Theorem~\ref{thm:bw}.
Clearly steps 1 to 8 can be done in time polynomial with respect to $m$ and $n^n$, 
the closure required by Proposition~{\ref{prop:closure}} is computed in time linear in $m$
and then a solution to $\inst{J}'$ can be found in polynomial time by Theorem~{\ref{thm:search_problem}}.
This concludes the proof.

\subsection{Derandomization}

The following theorem is a deterministic version of Theorem~\ref{THM:CORE}. The statement is almost the same except the running time is polynomial in  $2^{n^2\log^2 n}$ instead of $n^n$. 

\begin{theorem} \label{THM:COREderandom}
Let $\Gamma$ be a core constraint language over $D$ containing at most binary relations. 
If $\CSP(\Gamma)$ has bounded width, then there exists
a deterministic algorithm which given an instance $\inst{I}$ of $\CSP(\Gamma)$ and an output of the basic SDP relaxation with value at least $1 - 1/n^{4n}$ (where $n$ is a natural number) produces an assignment with value at least $1 - K/n$, where $K$ is a constant depending on $|D|$. The running time is polynomial in $m$ (the number of constraints) and $2^{n^2\log^2 n}$.
\end{theorem}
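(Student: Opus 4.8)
The algorithm of Theorem~\ref{THM:CORE} uses randomness in exactly three places: the choice of $r\in\{1,\dots,n-1\}$ (Step~1), the choice of $s\in[0,\biggap]$ (Step~4), and the choice of the unit vectors $\vect{q}_{1},\dots,\vect{q}_{\roundup{\pi(\log n)n^{2n}}}$ (Step~6); everything else is a deterministic function of these and of the given SDP solution. The plan is to replace these three choices by an exhaustive search organised as nested loops, with the vectors innermost, and to output the best assignment obtained over all branches. For \emph{every} branch the remaining part of the proof of Theorem~\ref{THM:CORE} goes through verbatim --- the surviving constraints always form a nontrivial weak Prague instance $\inst{J}$ whose algebraic closure $\inst{J}'$ is solvable by Proposition~\ref{prop:closure}, Theorem~\ref{thm:bw} and Theorem~\ref{thm:search_problem}, and a solution of $\inst{J}'$ violates at most the constraints removed in Steps 2, 3, 5, 7, 8 --- so its value is at least $1$ minus the fraction of such constraints, and it suffices to exhibit one branch on which this fraction is $O(1/n)$. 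Claim~\ref{cl:works} gives expected removed fraction at most $K_{0}/n$ for a fixed constant $K_{0}$; hence, by averaging, it is enough to derandomise the three choices while keeping this fraction within an additive $O(1/n)$ of its expectation, and we then take $K=K_{0}+O(1)$ in the statement.

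The choice of $r$ is handled by simply running the other two stages for each of the $n-1$ values. For $s$, fix $r$. As $s$ ranges over $[0,\biggap]$ the quantities $\blockround{\sdpv{x}{A}}$, $\tries{\sdpv{x}{A}}$ and the removal condition of Step~5 for a pair $A,B\subseteq D$ change only when $s$ crosses one of the numbers $\normsq{\sdpv{x}{A}}\bmod\biggap$; these $O(m\cdot 4^{|D|})$ breakpoints partition $[0,\biggap]$ into open subintervals on each of which Steps~5, 7 and 8 behave identically, so it suffices to try one representative $s$ per subinterval, and since the removed fraction is then a convex combination of its values on the subintervals, one of the representatives is at most the expectation over $s$. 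Both enumerations cost only a polynomial-in-$m$ factor.

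The substantive step is derandomising Step~6; fix $r$ and $s$. Let $\Psi$ be the number of constraints removed in Steps~7 and 8. From the proof of Claim~\ref{cl:works}, $\mathbb{E}\,\Psi\le (K_{1}/n)m$, and this is obtained by a union bound whose per-constraint terms depend only on the mutual angles of the at most $2^{2|D|+1}$ vectors $\sdpv{x}{A}$, $\sdpv{y}{B}$ attached to the constraint and on which of the already-chosen hyperplanes separate which of these vectors. I would run the method of conditional expectations, fixing $\vect{q}_{1},\vect{q}_{2},\dots$ one at a time so as never to increase the union-bound estimate $U_{j}$ of $\mathbb{E}[\Psi\mid\vect{q}_{1},\dots,\vect{q}_{j}]$; using independence across the $\vect{q}_{i}$ one checks that $\mathbb{E}_{\vect{q}_{j}}[U_{j}\mid\vect{q}_{1},\dots,\vect{q}_{j-1}]\le U_{j-1}$, and $U_{j}$ --- a sum over constraints of terms each determined by a single mutual angle and the number of remaining hyperplanes --- is computable in time polynomial in $m$. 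To make each round a finite search, the candidate $\vect{q}_{j}$ is taken from a deterministic finite family $Q$ of unit vectors that is \emph{$\eta$-representative}: for every relevant pair of vectors the fraction of $\vect{q}\in Q$ whose hyperplane separates them is within $\eta$ of the true probability. Averaging over $Q$ then yields a choice that increases $U_{j}$ by at most $\eta$ per relevant pair; summed over the $\roundup{\pi(\log n)n^{2n}}$ rounds and the polynomially many relevant pairs this stays $o(1/n)$ as long as $\eta$ is of order $n^{-\Theta(n)}$ --- which is forced both by the universally quantified separation demanded in Step~7, where relevant angles can be as small as $n^{-2n}$, and by the Step~8 probabilities, of order $(\log n)n^{-2}$ but amplified over up to $n^{2n}$ hyperplanes. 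Bounding the size of such a family $Q$, and the cost of constructing it in the (polynomial-in-$m$-dimensional) span of the SDP vectors, gives running time polynomial in $m$ and in $2^{n^{2}\log^{2}n}$; once $r$, $s$ and all the $\vect{q}_{i}$ are pinned down, Step~9 runs as in Theorem~\ref{THM:CORE}.

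The main obstacle is this last step. In contrast to derandomisations of Goemans--Williamson-type roundings, where the hyperplanes only need to be representative \emph{on average}, here Steps~7 and 8 impose conditions quantified over all $\le 4^{|D|}$ pairs of subsets, and the events one must avoid have probability exponentially small in $n$ once the exponentially many rounds are taken into account; consequently the finite family $Q$ must approximate angular measures at scale $n^{-\Theta(n)}$, and carrying this approximation error through the conditional-expectation bookkeeping without exhausting the $\Theta(1/n)$ slack from Claim~\ref{cl:works} --- equivalently, building a deterministic $Q$ that is simultaneously fine enough and of size at most $2^{n^{2}\log^{2}n}$ --- is where essentially all the work, and the blow-up of the running time from $n^{n}$ to $2^{n^{2}\log^{2}n}$, lies.
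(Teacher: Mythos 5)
Your overall plan matches the paper's: derandomise $r$ by exhaustive search, derandomise $s$ by a finite search, and derandomise the hyperplanes by the method of conditional expectations over a finite $\kappa$-representative family of unit vectors. Two remarks on the parts you do resolve, then the gap.

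For $s$, you discretise by breakpoints of the functions $s\mapsto(\normsq{\sdpv{x}{A}}-s)\div\biggap$, which is a valid (and slightly sharper) alternative to the paper's simpler choice of an evenly-spaced grid $\{0,\biggap/n^4,2\biggap/n^4,\dots\}$; the paper just shows that the evenly-spaced grid widens the bad set for $s$ by $\biggap/n^4$ per breakpoint, which only degrades the constant $K_1$. For the hyperplanes, you propose to track the cumulative additive error $\eta$ incurred per round; the paper instead absorbs the approximation error $\kappa$ directly into the per-round probability bounds --- replacing $1-n^{-2r}/\pi\norm{\sdpv{x}{A}}$ by $1-n^{-2r}/4\norm{\sdpv{x}{A}}$ in Step~7 and $2\sqrt{\smallgap}/\norm{\sdpv{x}{A}}$ by $4\sqrt{\smallgap}/\norm{\sdpv{x}{A}}$ in Step~8, compensating by increasing the constant in $\tries{\vect{w}}$ from $\pi(\log n)$ to $4(\log n)$ --- after which the conditional-expectation argument with the pessimistic estimators $\estimcut{i}+\estimuncut{i}$ is monotone with no error term to book-keep. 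Both accountings lead to the same requirement $\kappa=n^{-\Theta(n)}$ (the paper takes $\kappa=1/n^{2n}$); the paper's version is cleaner because the needed inequality $\mathbb{E}_{\vect{q}_{i+1}}[\estimcut{i+1}+\estimuncut{i+1}]\le\estimcut{i}+\estimuncut{i}$ holds exactly.

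The genuine gap is exactly the one you name yourself: you never construct, or cite a construction of, the finite family $Q$. Your proof reduces the theorem to the existence of a $\kappa$-representative family of unit vectors in $\mathbb{R}^{|V||D|}$ with $\kappa\approx n^{-2n}$ and $|Q|$ bounded by a polynomial in $m$ and $2^{n^2\log^2 n}$, but this is precisely the content of Theorem~1.3 of~\cite{KRS11}, which yields $|Q|=(|V||D|)^{1+o(1)}2^{O(\log^2(1/\kappa))}$; with $\kappa=1/n^{2n}$ one has $\log^2(1/\kappa)=4n^2\log^2 n$, which is where the exponent $n^2\log^2 n$ in the running time (and hence the weaker bound of Theorem~\ref{THM:MAINderandom} compared to Theorem~\ref{THM:MAIN}) comes from. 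Without this citation or an equivalent construction, the theorem's running-time claim is not established; this is not "essentially all the work" to be done from scratch, but it is a concrete missing lemma that the proof cannot do without.
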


\begin{proof}
The algorithm is the same as in the proof of Theorem~\ref{THM:CORE} except we need to avoid the random choices in Steps 1, 4 and 6.

The random choices in Step 1 and Step 4 can be easily avoided. In Step 1 we can try all $(n-1)$ possible choices for $r$ and
in Step 4 we can try all choices for $s$ from some sufficiently dense finite set,  for instance $\{0, \biggap/n^4, 2 \biggap/n^4, \dots\}$. The only difference is that bad choices for $s$ can cover a slightly bigger part of the discrete set than $\smallgap/\biggap$ (namely $(\biggap/n^4+\smallgap)/\biggap$) and we get a slightly worse constant $K_1$.

For the derandomization of Step 6 we first increase the constant in the definition of $t(\vect{w})$, say
$
t(\vect{w}) = \roundup{4 (\log n) \dots}.
$
Next we use Theorem 1.3. in \cite{KRS11} from which it follows that we can find (in polynomial time with respect to $|Q|$)  a set $Q$ of unit vectors such that
$$|Q| = (|V||D|)^{1 + o(1)}2^{O(\log^2 (1/\kappa))}$$
and such that, for any vectors $\vect{v}$, $\vect{w}$ with angle $\theta$ between them, the probability that a randomly chosen vector from $Q$ cuts $\vect{v}$ and $\vect{w}$ differs from $\theta/\pi$ by at most $\kappa$.  We choose $\kappa = 1/n^{2n} = 1/2^{2n \log n}$, therefore
$$
|Q| \leq K_5 m^{K_6} \ (2^{n^2 \log^2 n})^{K_7}
$$
where we have used $|V|=O(m)$ which is true whenever every variable is in the scope of some constraint (we can clearly make this assumption without loss of generality).

Now if we choose $\vect{q}_1, \vect{q}_2, \dots, \vect{q}_{\roundup{4(\log n)n^{2n}}}$ from $Q$ independently uniformly at random, the estimates derived in Steps 7 and 8 remain almost unchanged: The probability that $\vect{q}_i$ does not cut $\vect{x}_A$ and $\vect{x}_B$ in Step 7 is at most
$1 - n^{-2r}/\pi\norm{\sdpv{x}{A}} + \kappa \leq 1 - n^{-2r}/4\norm{\sdpv{x}{A}}$ (for a sufficiently large $n$), and
the probability in Step 8 that vectors $\sdpv{x}{A}$ and $\sdpv{y}{B}$ are cut by $\vect{q}_i$ is at most $2\sqrt{\smallgap}/\norm{\sdpv{x}{A}} + \kappa \leq 4\sqrt{\smallgap}/\norm{\sdpv{x}{A}}$.

Unfortunately we cannot try all possible ${\roundup{4(\log n)n^{2n}}}$-tuples of vectors from $Q$ as there are too many of them. 
To resolve this problem we apply the method of conditional expectations with a pessimistic estimate. 
We choose the vectors one by one keeping the sum of two estimates, $\estimcut{i}$ and $\estimuncut{i}$, reasonably small, where $\estimcut{i}$ is an estimate of the number of uncut pairs of vectors $\sdpv{x}{A},\sdpv{x}{B}$ in the same layer (i.e. $\tries{\vect{x}_A} = \tries{\vect{x}_B}$) after we already chose vectors $\vect{q}_1, \dots, \vect{q}_i$ (important in Step 7) and $\estimuncut{i}$ is an estimate of the number of cut pairs of almost the same vectors $\vc{x}_A, \vc{y}_B$ (important in Step 8).

To define the pessimistic estimates, we create two lists of pairs of vectors. The list $\listcut$ contains the pairs we want to cut because of Step 7:
For every constraint $C$ we include to $\listcut$ the pairs $(\sdpv{x}{A}, \sdpv{x}{B})$ such that $x$ is in the scope of $C$, $A \neq B \subseteq P_x$, and $\blockround{\sdpv{x}{A}} = \blockround{\sdpv{x}{B}}$. One pair $(\sdpv{x}{A},\sdpv{x}{B})$ can appear more than once in the list --- the number of occurrences is equal to the number of constraints whose scope contains $x$. 
The other list, $\listuncut$, consists of those pairs which we do not want to cut because of Step 8. For each constraint $C = ((x,y),R)$ we add to $\listuncut$ the pairs $(\sdpv{x}{A},\sdpv{y}{B})$ such that $A,B \subseteq P_x$ and $\sdpv{x}{A}$ is almost the same as $\sdpv{y}{B}$. We denote by $\probcut{\sdpv{x}{A}}{\sdpv{x}{B}}$ the upper bound derived above  for the probability that a random vector from $Q$ does not cut $\sdpv{x}{A}$ and $\sdpv{x}{B}$ and by $\probuncut{\sdpv{x}{A}}{\sdpv{y}{B}}$ the upper bound for the probability that a random vector from $Q$ cuts $\sdpv{x}{A}$ and $\sdpv{y}{B}$, i.e.
$$
\probcut{\sdpv{x}{A}}{\sdpv{x}{B}} = 1 - \frac{n^{-2r}}{4 \norm{\sdpv{x}{A}}}, \quad
\probuncut{\sdpv{x}{A}}{\sdpv{y}{B}} = 4\sqrt{\smallgap}/\norm{\sdpv{x}{A}}\enspace.
$$

Suppose we have already selected vectors $\vect{q}_1, \dots, \vect{q}_i$, $0 \leq i$. Let us denote by $\listi{i}$ the sublist of $\listcut$ formed by pairs which are not cut by the vectors $\vect{q}_1, \dots, \vect{q}_i$ and by $\cuti{i}$ the number of pairs from $\listuncut$ cut by these vectors. If we now choose vectors $\vect{q}_{i+1}, \vect{q}_{i+2}, \dots$ from $Q$  independently uniformly at random we can give an upper estimate for the expected fraction of pairs from $\listcut$ which will remain uncut 
$$
\estimcut{i} = \frac{1}{|\listcut|} \sum_{(\sdpv{x}{A},\sdpv{x}{B}) \in \listi{i}} \probcut{\sdpv{x}{A}}{\sdpv{x}{B}}^{\max(\tries{\sdpv{x}{A}}-i,0)}\enspace,
$$
and for the expected fraction of pairs from $\listuncut$ which are cut
$$
\estimuncut{i} = \frac{1}{|\listuncut|}\left(\cuti{i} + \sum_{(\sdpv{x}{A},\sdpv{y}{B}) \in \listuncut} \max(\tries{\sdpv{x}{A}}-i,0) \probuncut{\sdpv{x}{A}}{\sdpv{y}{B}}\right)\enspace .
$$
Calculations made in the proof of Theorem~\ref{THM:CORE} show that
$$
\estimcut{0} = O\left(\frac1n\right), \quad
\estimuncut{0} = O\left(\frac1n\right)\enspace. 
$$

The estimates are becoming less pessimistic as $i$ increases. More precisely, given vectors $\vect{q}_1, \dots, \vect{q}_i$, if we choose $\vect{q}_{i+1}$ uniformly at random from $Q$, the expected value of $\estimcut{i+1} + \estimuncut{i+1}$  is at most $\estimcut{i} + \estimuncut{i}$. Indeed, if $(\sdpv{x}{A},\sdpv{x}{B})$ is in $\listi{i}$ and $\tries{\sdpv{x}{A}}\leq i$ then 
the contribution of this pair to both sums $\estimcut{i}$ is one and to $ \estimcut{i+1}$ is zero or one. 
If, on the other hand, $\tries{\sdpv{x}{A}} > i$ then the contribution of this pair to the sum in $\estimcut{i}$ is $\probcut{\sdpv{x}{A}}{\sdpv{x}{B}}^{\tries{\sdpv{x}{A}}-i}$ and the contribution to the sum in $\estimcut{i+1}$ is either zero, when the pair is cut by $\vect{q}_{i+1}$, or is equal to $\probcut{\sdpv{x}{A}}{\sdpv{x}{B}}^{\tries{\sdpv{x}{A}}-i-1}$, when the pair is not cut by $\vect{q}_{i+1}$. The latter option happens with probability at most $\probcut{\sdpv{x}{A}}{\sdpv{x}{B}}$. Expected contribution of the pair to the sum in $\estimcut{i+1}$ is therefore
 at most $\probcut{\sdpv{x}{A}}{\sdpv{x}{B}} \probcut{\sdpv{x}{A}}{\sdpv{x}{B}}^{\tries{\sdpv{x}{A}}-i-1} = 
 \probcut{\sdpv{x}{A}}{\sdpv{x}{B}}^{\tries{\sdpv{x}{A}}-i}$ which is the same as the contribution of this pair to the sum  in $\estimcut{i}$.
 We conclude that the expected value of $\estimcut{i+1}$ is less than or equal to $\estimcut{i}$. Similarly, the expected value of $\estimuncut{i+1}$ is at most $\estimuncut{i}$. It follows that the expected value of $\estimcut{i+1} + \estimuncut{i+1}$ is less than or equal to $\estimcut{i} + \estimuncut{i}$ as claimed.
 
This leads to the following (deterministic) algorithm. For $i = 0,1, \dots, \roundup{4(\log n)n^{2n}}-1$ we select any $\vect{q}_{i+1} \in Q$ such that $\estimcut{i+1} + \estimuncut{i+1} \leq \estimcut{i} + \estimuncut{i}$. It remains to observe that this choice of vectors $\vect{q}_1, \vect{q}_2, \dots$ gives us a sufficient upper bound on the fraction of constraints removed in Steps 7 and 8. We denote by $\wrongcut$ the fraction of pairs in $\listcut$ which are not cut by the selected vectors and $\wronguncut$ the fraction of pairs in $\listuncut$ which are cut by these vectors. By construction, $\wrongcut + \wronguncut \leq \estimcut{0} + \estimuncut{0}$. In Step 7, a constraint $C$ is removed if some pair $(\sdpv{x}{A},\sdpv{x}{B})$ from $\listcut$, where $x$ is in the scope of $C$, is not cut. The number of such pairs is at most $2 \cdot 4^{|D|}$. It follows that the fraction of constraints removed in this step is at most $2\cdot 4^{|D|} \wrongcut$. Similarly, in Step 8 we remove at most $4^{|D|} \wronguncut < 2 \cdot 4^{|D|} \wronguncut$ fraction of the constraints. Altogether, we remove at most the following fraction of constraints.
$$
2 \cdot 4^{|D|} (\wrongcut + \wronguncut) \leq 2 \cdot 4^{|D|} \left(\estimcut{0} + \estimuncut{0}\right)
= O\left(\frac1n\right)
$$ 
\end{proof}

Finally, we prove the 
deterministic version of the main theorem.

\begin{theorem} \label{THM:MAINderandom}
If $\CSP(\Gamma)$ has bounded width then it is robustly solvable. 
The algorithm  returns an assignment satisfying $(1 - O(\log \log (1/\varepsilon)/{\sqrt{\log (1/\varepsilon)}}))$-fraction of the constraints given a $(1-\varepsilon)$-satisfiable instance.
\end{theorem}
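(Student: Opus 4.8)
The plan is to repeat, almost verbatim, the derivation of Theorem~\ref{THM:MAIN} from Theorem~\ref{THM:CORE}, but feeding the instance to the deterministic core algorithm of Theorem~\ref{THM:COREderandom} and rebalancing the single free parameter $n$. First I would normalize the language: by Theorem~\ref{thm:adding_consts} it is enough to prove the statement for singleton expansions, and by Proposition~\ref{prop:bin} we may moreover assume that $\Gamma$ contains only at most binary relations, since both reductions cost only a constant factor in the error function and hence preserve robust solvability together with the order of growth of the error function. Given an instance $\inst{I}$ with $m$ constraints and $\Opt{I}=1-\varepsilon$, I would run the (deterministic, polynomial-time) decision procedure for $\CSP(\Gamma)$ and, if $\inst{I}$ is satisfiable, output a solution via Theorem~\ref{thm:search_problem}; otherwise $\varepsilon\ge 1/m$. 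Then solve the basic SDP relaxation to precision $\delta=1/m$, obtaining vectors of value $v\ge\SDPOpt{I}-1/m\ge\Opt{I}-1/m\ge 1-2\varepsilon$. All of these steps are already deterministic, so the only randomness to be removed lives inside the core algorithm, and that is exactly what Theorem~\ref{THM:COREderandom} handles.

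The new ingredient is the choice of the integer $n$ passed to Theorem~\ref{THM:COREderandom}. Here $n$ must meet two opposing requirements: the SDP hypothesis $v\ge 1-1/n^{4n}$, i.e. $n^{4n}\le 1/(1-v)$, and the demand that the running time, now polynomial in $m$ and $2^{n^2\log^2 n}$ rather than $n^n$, stay polynomial in $m$, i.e. $n^2\log^2 n=O(\log m)$. Setting $\omega=\min\{1/(1-v),m\}$ as in the proof of Theorem~\ref{THM:MAIN}, I would now take
\begin{equation*}
n=\rounddown{\frac{\sqrt{\log\omega}}{c\log\log\omega}}
\end{equation*}
for a suitable absolute constant $c$ (even $c=1$ works for $\omega$ large enough). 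Then $\log n\le\tfrac12\log\log\omega$, so $4n\log n\le 2\sqrt{\log\omega}/c\le\log\omega\le\log(1/(1-v))$, which gives $n^{4n}\le 1/(1-v)$; and $n^2\log^2 n\le\log\omega/(4c^2)\le\log m$, so $2^{n^2\log^2 n}\le m$ and the whole procedure runs in time polynomial in $m$. For the bounded range of ``large'' $\varepsilon$ where $\omega$ is too small for these estimates to be valid, the conclusion is vacuous: one simply takes the error function close to $1$ there.

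Finally I would read off the guarantee. Theorem~\ref{THM:COREderandom} returns an assignment of value at least $1-K/n$; and since $v\ge 1-2\varepsilon$ and $m\ge 1/\varepsilon$ we have $\omega\ge 1/(2\varepsilon)$, whence
\begin{equation*}
n\ge\frac{\sqrt{\log\omega}}{c\log\log\omega}-1=\Omega\Bigl(\frac{\sqrt{\log(1/\varepsilon)}}{\log\log(1/\varepsilon)}\Bigr),
\end{equation*}
so that $K/n=O(\log\log(1/\varepsilon)/\sqrt{\log(1/\varepsilon)})$. This function tends to $0$ as $\varepsilon\to 0$, so $\CSP(\Gamma)$ is robustly solvable with the stated error bound.

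The main --- and essentially the only --- obstacle I expect is precisely this rebalancing: one must check that the value of $n$ forced by the derandomized running time ($n=\Theta(\sqrt{\log m}/\log\log m)$, down from $\Theta(\log m/\log\log m)$ in the randomized case) is still large enough to satisfy the SDP precision hypothesis $v\ge 1-1/n^{4n}$ of Theorem~\ref{THM:COREderandom}, i.e. that the gap between $n^{4n}=2^{4n\log n}$ and $2^{n^2\log^2 n}$ leaves room for a genuinely polynomial-time algorithm. Beyond that point, the argument is a line-by-line copy of the proof of Theorem~\ref{THM:MAIN}.
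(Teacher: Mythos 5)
Your proposal is correct and follows essentially the same route as the paper: invoke Theorem~\ref{THM:COREderandom} in place of Theorem~\ref{THM:CORE} and rebalance $n$ to $\rounddown{\sqrt{\log\omega}/(c\log\log\omega)}$ so that $2^{n^2\log^2 n}$ stays polynomial in $m$ while $n^{4n}\leq 1/(1-v)$ is preserved. The only cosmetic difference is that the paper fixes $c=4$ and disposes of the SDP-precision hypothesis by noting the new $n$ is smaller than in Theorem~\ref{THM:MAIN}, whereas you verify it directly from $\log n \leq \tfrac12\log\log\omega$; both are fine.
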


\begin{proof}
The proof is almost the same as for Theorem~\ref{THM:MAIN}  except we need to ensure that $2^{n^2 \log^2 n}$ is polynomial in $m$. Therefore we need to choose a smaller value for $n$, say
$$
n = \rounddown{\frac{\sqrt{\log \omega}}{4\log \log \omega}}\enspace.
$$
The inequality $v \geq 1 - 1/n^{4n}$ still holds since the expression on the right hand side is even smaller than in Theorem~\ref{THM:MAIN}.
The algorithm runs in polynomial time as
$$
2^{n^2\log^2 n}  
\leq 
2^{\left(\frac{\sqrt{\log \omega}}{4\log \log \omega}\right)^2 \log^2 \left(\frac{\sqrt{\log \omega}}{4\log \log \omega}\right)}
\leq 
2^{\left(\frac{\sqrt{\log \omega}}{4\log \log \omega}\right)^2 \log^2 \left(\sqrt{\log \omega}\right)}
\leq
2^{\frac1{4^3} \log \omega}
\leq
\omega
\leq
m
$$
and the fraction of satisfied constraint is at least $1 - K/n$, where
\begin{align*}
\frac{n}{K} &\geq \frac1K \left( \frac{\sqrt{\log \omega}}{4\log \log \omega} - 1 \right)
\geq K_3 \frac{\sqrt{\log (1/2\varepsilon)}}{\log \log (1/2\varepsilon)} \geq \\
&\geq K_4 \frac{\sqrt{\log (1/\varepsilon)}}{\log \log (1/\varepsilon)}
\enspace,
\end{align*}
therefore the fraction of satisfied constraints is at least 
$$
1 - O\left(\frac{\log{\log(1/\varepsilon)}}{\sqrt{\log (1/\varepsilon)}}\right)\enspace.
$$
\end{proof}

\subsection{Final remarks}

The quantitative dependence of $g$ on $\varepsilon$ is not very far from the (UGC-) optimal bound for \HORN. Is it possible to get rid of the extra $\log \log (1/\varepsilon)$? The question of the optimal quantitative dependence of $g$ on $\varepsilon$ is discussed in more detail in \cite{DK}.

The presented straightforward derandomization using a result from \cite{KRS11} has $g(\varepsilon) = O(\log \log (1/\varepsilon)/{\sqrt{\log (1/\varepsilon)}})$. How to improve it to match the randomized version?

\bibliographystyle{abbrv}

\bibliography{CSPbibRobust}

\end{document}